\newtheorem{thm}{Theorem}
\newtheorem{lem}[thm]{Lemma}
\DeclarePairedDelimiterXPP\E[1]{\mathop{\mathbb{E}}}{[}{]}{}{

#1
}
\DeclarePairedDelimiterXPP\Pro[1]{\mathop{\mathbb{P}}}{[}{]}{}{

#1
}
\DeclarePairedDelimiterXPP\Ex[2]{\mathop{\mathbb{E}}_{#1}}{[}{]}{}{

#2
}
\DeclarePairedDelimiterXPP\Var[2]{\mathop{\mathbb{Var}}_{#1}}{[}{]}{}{

#2
}
\DeclarePairedDelimiterXPP\Prob[2]{\mathop{\mathbb{P}}_{#1}}{[}{]}{}{

#2
}
\def\BibTeX{{\rm B\kern-.05em{\sc i\kern-.025em b}\kern-.08em
    T\kern-.1667em\lower.7ex\hbox{E}\kern-.125emX}}
\begin{document}
\receiveddate{XX Month, XXXX}
\reviseddate{XX Month, XXXX}
\accepteddate{XX Month, XXXX}
\publisheddate{XX Month, XXXX}
\currentdate{12 June, 2025}
\doiinfo{OJCOMS.2024.011100}

\title{Wireless Network Topology Inference: A Markov Chains Approach}

\author{JAMES MARTIN\IEEEauthorrefmark{2} , TRISTAN PRYER \IEEEauthorrefmark{1,2}, AND LUCA ZANETTI\IEEEauthorrefmark{1,2}}
\affil{Department of Mathematical Sciences, University of Bath, Bath, BA2 7AY UK}
\affil{Institute for Mathematical Innovation, University of Bath, Bath, BA2 7AY UK}
\corresp{CORRESPONDING AUTHOR: Luca Zanetti (e-mail: lz2040@bath.ac.uk).}
\authornote{All authors were partially supported by the Defence Science and Technology Laboratory.}
\markboth{Preparation of Papers for IEEE OPEN JOURNALS}{Author \textit{et al.}}

\begin{abstract}
We address the problem of inferring the topology of a
wireless network using limited observational data. Specifically, we
assume that we can detect when a node is transmitting, but no further
information regarding the transmission is available. We propose a
novel network estimation procedure grounded in the following abstract
problem: estimating the parameters of a finite discrete-time Markov
chain by observing, at each time step, which states are visited by
multiple ``anonymous'' copies of the chain. We develop a consistent
estimator that approximates the transition matrix of the chain in the
operator norm, with the number of required samples scaling roughly
linearly with the size of the state space. Applying this estimation
procedure to wireless networks, our numerical experiments demonstrate
that the proposed method accurately infers network topology across a
wide range of parameters, consistently outperforming transfer entropy,
particularly under conditions of high network congestion.
\end{abstract}

\begin{IEEEkeywords}
Markov chains, topology inference, wireless networks.
\end{IEEEkeywords}

\maketitle

\section{INTRODUCTION}
Inferring the topology of a network from limited information has
numerous applications across various fields, including the discovery
of gene interactions in computational biology, the analysis of climate
dynamics in environmental science, and the investigation of functional
brain structure in neuroscience~\cite{net_inf_survey, Antonacci20, Antonacci24}. Additionally,
in defense and security contexts, understanding the structure of an
adversary's network can provide valuable insights for decision-making.

In many practical scenarios, the available information about a network
often consists of time series data associated with each
node. Specifically, in wireless networks, radio frequency sensors
deployed in the environment may allow us to detect when each node is
transmitting, but not the recipient of the transmission. The objective
is then to infer whether pairs of nodes are connected by exploiting
correlations in their corresponding time series.

Previous work, particularly in the context of wireless
networks~\cite{TR13,LC17,TG20,liu2022topology}, has focused on
detecting links by computing causality measures between time series,
such as Granger causality~\cite{granger69} and transfer
entropy~\cite{schreiber00}. While these approaches have demonstrated
reasonable accuracy, particularly in simulations, they remain poorly
understood. For example, it is unclear in what ways the inferred
network approximates the actual network structure.

In this work, we introduce a novel paradigm for network
inference. Rather than focusing on local link detection and
subsequently attempting to infer global network information by
assembling local estimates, we aim to construct a global approximation
that better captures the overall structure of the network.

Our goal is to develop a procedure that \emph{provably} approximates
the observed network under a well-defined, albeit simplified, model of
network dynamics.

\subsection{Our Contribution}
We address the problem of estimating the topology of a wireless
network from limited information. Specifically, we monitor a network
of wireless devices, assuming we can detect when a device is
transmitting, but not the target of the transmission. Such data could
be collected via a network of radio frequency sensors (see \cref{fig:network}
for a schematic depiction).

\begin{figure}[h!]
\centering
   \includegraphics[width=0.25\textwidth]{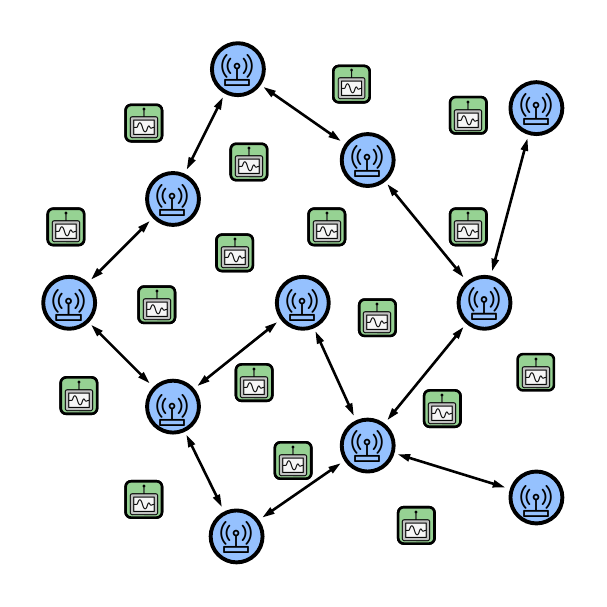}
    \caption{
    A network of wireless devices (blue circles) monitored by a collection of radio frequency sensors
     (green squares). Whenever a wireless device is transmitting, nearby sensors will measure an associated power bursts.
      Our goal is to infer the topology of the network using only the sensors' measurements. 
    }
    \label{fig:network}
\end{figure}

To develop our algorithm, we adopt a Markovian assumption regarding
the network's dynamics: we assume that the future state of the network
depends only on its current state and is independent of its past
states. While this assumption is not strictly valid for communication
networks, we argue that it simplifies the model sufficiently to allow
rigorous analysis while still capturing the essential behaviour of
wireless networks.

More precisely, we consider the following abstract model: we observe
the trajectories of $k \geq 1$ independent discrete-time Markov chains
on a finite state space $V$, which represents the set of nodes in the
network. Each chain evolves according to a common transition matrix
$P$. We assume that we can detect whether a chain is visiting a node
at time $t$, but not which specific chain is present. This models the
scenario in wireless networks where only node transmissions are
observable at time $t$. In this context, $P(u,v)$ denotes the
probability that node $v$ will transmit at time $t+1$, given that node
$u$ was transmitting at time $t$.

Our primary theoretical contribution is the development of a
consistent estimator. We prove that, by observing the trajectories of
the chains for a sufficient duration, we can construct an estimated
transition matrix $\hat{P}$ that approximates the true matrix $P$
arbitrarily well in the operator norm. Crucially, this approximation
in operator norm provides insights into the global structure of the
network, such as the presence of clusters and
bottlenecks~\cite{STsparse}.

We apply this estimation procedure to wireless networks simulated
using the $\texttt{ns-3}$ network simulator \cite{ns3}. Our
experimental results show that the proposed method is competitive with
state-of-the-art techniques. Notably, we demonstrate that the
estimated topology closely approximates the true topology for various
networks with differing structures and sizes. Even in cases where the
number of observed transmissions is too small for most links to be
detected, our procedure successfully recovers structural information
about the network, such as the presence of bottlenecks.

\subsection{Related Work}
From a theoretical standpoint, our work relates to the broader problem
of estimating Markov chain parameters from observed trajectories, an
area that has garnered significant attention in recent
years~\cite{hsu19,Wolf19,WolfKont24}. Our algorithm builds upon
techniques developed by Wolfer and Kontorovich~\cite{Wolf19}.
Prior work, however, typically focuses on scenarios where only a single Markov
chain is observed, which simplifies the estimation task but is less
applicable to modelling topology inference in wireless networks. In
the single-chain case, observing consecutive transmissions at nodes
$u$ and $v$ directly implies the existence of a link between them, a
property that does not hold in our model.

There is a significant body of literature on wireless network topology
inference (see, e.g.,
~\cite{TR13,LC17,salmond19,TG20,mehrotra2021minimax,liu2022topology,du2023network}). However,
to the best of our knowledge, our work is the first to provide
theoretical guarantees for the estimation procedure, at least under a
simplified model of network dynamics. While this model may not fully
capture the complexities of real wireless networks, we believe it
encapsulates the key challenges inherent to the inference task.

Finally, given the novelty of our approach, this work represents an
initial step in a new line of research on wireless network
inference. Further investigation is required to fully assess the
strengths and limitations of the methodology. We discuss potential
directions for future research in the conclusion section.

\subsection{Organisation}
We introduce the setup and the formal model underpinning our inference procedure in \cref{sec:setup}, which then is described in \cref{sec:estimation}. 
In \cref{sec:theory} we state and discuss our main theoretical result, while its proof is deferred to \cref{sec:proof}. We showcase and discuss our experimental results in \cref{sec:exp}. In particular, we first consider an idealised scenario in which we know exactly when a given node is transmitting (\cref{sec:exp} \ref{sec:ideal}) and we later discuss how to acquire this information using radio frequency sensors (\cref{sec:exp} \ref{sec:real}). We end with a discussion of possible future work in \cref{sec:conclusion}. Code and datasets to reproduce our experiments are available at~\cite{ourcode}.

\section{PROBLEM SETUP}
\label{sec:setup}

Let $G=(V,E)$ be a graph representing the wireless network we would like to infer: $V = \{1, \dots,n\}$ represents the set of nodes of the network, while $E \subseteq V \times V$ represents the set of edges (links). We assume $V$ is known and our goal is to infer $E$.

We discretise the time into $T$ intervals. For any time interval $t \in \{1,\dots,T\}$ and any node $u \in V$, we observe if $u$ has emitted a signal during the time interval $t$. We collect this information in a \emph{time series matrix} $TS \in \{0,1\}^{n \times T}$ such that, for any $u \in V$ and $t \in \{1,\dots,T\}$,
\[
TS(u,t) = \begin{cases}
1 & \text{if $u$ is transmitting during time interval $t$} \\
0 & \text{otherwise}.
\end{cases}
\]
Our main task is to infer the set of edges $E$ given the time series matrix $TS$.

We remark that, in practical scenarios, we do not usually have access to perfect information about when a device is transmitting. Rather, we can only obtain an approximation of the time series matrix, for example, by deploying a set of radio frequency sensors in the environment. We expand on how RF sensors can be used to approximate this time series matrix in \cref{sec:exp} \ref{sec:real}.


To formally model network dynamics, we consider $k$ independent Markov chains $X^{(1)},X^{(2)},\dots,X^{(k)}$ on a finite state space $V$ (the set of nodes of our network) and with the same transition matrix $P$. We denote with $X^{(i)}_t$ the position of the $i$-th chain at time $t$. In particular, $X^{(i)}_t = u$ means that, at time $t$, node $u$ is transmitting a message to another node. $P(u,v)$ represents the probability that $v$ transmits a message at time $t+1$, given that a message has been transmitted by $u$ at time $t$. 
Our aim is to infer the transition matrix $P$: this will allows us to infer information not only about the topology of the network, but also information about the role of certain nodes in the network. 

For technical reasons, we assume $P$ has a unique stationary distribution $\pi$, i.e., $\pi P = \pi$. This is a reasonable assumption in our context since it essentially means the network is (strongly) connected, i.e., any two nodes of the network can communicate with each other (without necessarily being connected by a link). Furthermore, we assume the Markov chains are stationary, i.e., $X^{(i)}_1 \sim \pi$ for all $1\le i \le k$. 
This technical assumption is made to simplify our analysis, but is not necessary and could be dropped with additional technical work (without modifying the proposed algorithm).

We assume $P$ and $\pi$ are unknown to us. Indeed, our goal is to estimate $P$. What we are able to observe is, for each time step $t=1,\dots,T$, how many chains are currently visiting a given vertex $v \in V$. In other words, for any $t=1,\dots,T$ and $v \in V$, we can observe the quantity $S_{t,v} = |\{i \colon X^{(i)}_t = v\}|$. We are not able, however, to observe if a specific chain is visiting $v$ at time $t$. 

\section{ESTIMATION PROCEDURE}
\label{sec:estimation}

We are now ready to state our inference procedure. Let $u,v \in V$. We use $[n]$ to denote $\{1,\dots,n\}$. We define the following quantities.
\begin{align*}
N(u,v) &= \left|\{i,j\in [k], t \in [T] \colon X^{(i)}_{t-1} = u, X^{(j)}_t = v \}\right|,\\
N(u) &= \left|\{ i \in [k], 1 \le t < T \colon X^{(i)}_{t} = u\}\right|.
\end{align*}

In other words, $N(u,v)$ represents the number of times we have observed $u$'s transmission being immediately followed by $v$'s, while $N(u)$ counts the total number of times $u$ is transmitting.
Notice that $N(u) = \sum_{v \in V} N(u,v)$. Moreover, these quantities can be computed having access only to $S_{t,v}$ for $1\le t \le T, v \in V$.

Let also $M \in \mathbb{R}^{n \times n}$ be defined as
\begin{equation*}
M(u,v) = \frac{N(u,v)}{N(u)}.
\end{equation*}
Essentially, $M(u,v)$ represents the proportion of $u$'s transmissions that are immediately followed by $v$'s.

Let $\widehat{\Pi}$ be the $n \times n$ matrix whose entries are equal to $\widehat{\Pi}(u,v) = N(v)/(kT)$ for any $u,v \in V$. Our matrix estimator $\hat{P}$ is defined as follows:
\[
\hat{P} = M - (k-1)\widehat{\Pi}.
\]

The term $(k-1)\widehat{\Pi}$ is a correction necessary because nodes that are visited by the chains more frequently will have larger entries in $M$; in particular, if the chains often visit two nodes $u$ and $v$, the corresponding entry $M(u,v)$ will be large even though there might not be an edge between $u$ and $v$. Indeed, we expect that, on average, $M(u,v) \approx (k-1)\widehat{\Pi}$ whenever $u$ is \emph{not} connected to $v$.

Computing $\hat{P}$ is computationally efficient and can be done in an online manner: we do not have to store the entire time series matrix $TS$, we can just update $N(u,v)$, $N(v)$, and $\hat{P}$  at each time step. This can be done in time $O(n^2 T)$ in total.

Returning to the wireless network setting, $\hat{P}(u,v)$ can be interpreted as our estimated probability that $v$ will transmit immediately after $u$ has transmitted. While $\hat{P}(u,v)$ is strongly correlated to the likelihood $u$ and $v$ are connected by a link, it is not a measure of how likely a link between $u$ and $v$ exists. In particular, if a node $u$ has many outgoing links, entries in the corresponding row $\hat{P}(u,\cdot)$ will necessarily be small.

To overcome this issue, we suggest using an estimator for the Laplacian associated to $P$. For a transition matrix $P$ with stationary distribution $\pi$, the Laplacian $L$ of $P$ is the matrix with entries $L(u,v) =  P(u,v) \sqrt{\pi(v)/ \pi(u)}$. We recall that any Markov chain can be interpreted as a random walk on a weighted directed graph, where $L(u,v)$ represents the weight on the edge $(u,v)$~\cite{markovmixing}. Therefore, to infer if a link is present or not, we will use the estimated matrix $\hat{L}$, defined as
\[
\hat{L}(u,v) =  \sqrt{\frac{\hat{\pi}(v)}{\hat{\pi}(u)}} \hat{P}(u,v)
\]
where $\hat{\pi}$ is the leading eigenvector of $\hat{P}$ and an estimate for the stationary distribution $\pi$. 

When we know that  links in the observed network are bidirectional, it is natural to consider a \emph{symmetrisation} of the Laplacian~\cite{MonteTetali} instead:

\[
\hat{L}_{\text{sym}}(u,v) =   \sqrt{\frac{\hat{\pi}(v)}{\hat{\pi}(u)}} \cdot  \frac{\hat{P}(u,v) + \hat{P}(v,u)}{2}.
\]  

We conclude this section by underlining two issues that arise when applying our methodology to infer the topology of real-world wireless networks. First of all, notice that our estimator only records transmissions happening in two consecutive time intervals. This means it is important to choose an appropriate interval length, otherwise network delays might severely impact our estimator. In our simulations, however, this constraint has not been detrimental to our results. The situation might be more complicated in real world settings. 
Secondly, our estimator depends on the number $k$ of Markov chains that are concurrently active. In practice, $k$ can be interpreted as the average number of devices that transmit in the same time interval. We will estimate this parameter as the ratio between the number of nonzero entries in the time series matrix $TS$ and the number of time-intervals $t$ such that there has been a transmission both at interval $t$ and at the following interval $t+1$ (this is because our estimator $\hat{P}$ only considers consecutive transmissions).

\section{THEORETICAL GUARANTEES}
\label{sec:theory}

To present the main theoretical result of our work, we introduce some additional technical notation. Recall that $n=|V|$ is the number of nodes in our network. We say an event $\mathcal{E}$ holds with high probability if the probability of $\mathcal{E}$ happening is larger than $1-n^{-1}$.  We denote with $\sigma_2(P)$ the second largest singular value of $P$ and with $\lambda(P) = 1-\sigma_2(P)^2$ the \emph{spectral gap} of $P$. Given a matrix $M$, we denote with $\|M\|$ its operator norm. We are now ready to state our main theorem.

\begin{thm}
\label{thm:main}
Let $\epsilon \in (0,1/2)$. Define $\pi_\star = \min_{u\in V} \pi(u)$ and $\pi^\star = \max_{u \in V} \pi(u)$. Assume that 
\[
T > C k^3\epsilon^{-2} \ln(kn) \cdot \lambda(P)^{-1}\frac{\pi^\star}{\pi_\star^2}
\]
for some large enough universal constant $C > 0$. Then, with high probability, 
\[
\|P - (M - (k-1)\widehat{\Pi})\| \le \epsilon.
\]
\end{thm}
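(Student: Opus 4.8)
The plan is to work in matrix form and to exploit the fact that the correction $(k-1)\widehat\Pi$ is engineered to cancel, \emph{at the level of expectations}, the bias created by the anonymity of the chains. Writing $S_{t,u}=\sum_i\mathbbm 1[X^{(i)}_t=u]$ we have $N(u,v)=\sum_t S_{t,u}S_{t+1,v}$ and $N(u)=\sum_t S_{t,u}$. Splitting the double sum over chain indices into its diagonal part ($i=j$, genuine transitions) and off-diagonal part ($i\neq j$, independent copies), and using stationarity, gives
\[
\mathbb E[N(u,v)]=(T-1)k\,\pi(u)\bigl[P(u,v)+(k-1)\pi(v)\bigr],\qquad \mathbb E[N(u)]=(T-1)k\,\pi(u).
\]
Hence the ``population'' version of $M$ is $\widetilde M:=P+(k-1)\mathbf 1\pi^{\top}$, and since $\mathbb E[\widehat\Pi]\approx\mathbf 1\pi^{\top}$ (using $\operatorname{diag}(\pi)^{-1}\pi=\mathbf 1$) the two spurious rank-one pieces cancel. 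I would therefore decompose
\[
\hat P-P=\underbrace{\bigl(M-\widetilde M\bigr)}_{=:A}+\underbrace{(k-1)\bigl(\mathbf 1\pi^{\top}-\widehat\Pi\bigr)}_{=:B},
\]
and bound $\|A\|$ and $\|B\|$ separately, targeting $\epsilon/2$ each.

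The term $B$ is the easier one. Since $\widehat\Pi=\mathbf 1\hat q^{\top}$ with $\hat q(v)=N(v)/(kT)$, the matrix $B=(k-1)\,\mathbf 1(\pi-\hat q)^{\top}$ is rank one, so $\|B\|=(k-1)\sqrt n\,\|\pi-\hat q\|_2$, and it suffices to control the $\ell_2$ deviation of the occupation frequencies $N(v)=\sum_t S_{t,v}$ from $\pi(v)$. Each $N(v)$ is a sum, over the $k$ independent chains, of a stationary additive functional, so I would apply a Bernstein-type inequality for Markov chains (whose variance proxy contributes the $\lambda(P)^{-1}$ factor) together with a union bound, obtaining $\|\pi-\hat q\|_2^2\lesssim(\lambda kT)^{-1}$ with high probability; this makes $\|B\|\lesssim\sqrt{nk/(\lambda T)}$, comfortably below $\epsilon/2$ under the stated hypothesis (recall $\pi^\star/\pi_\star^2\ge n$). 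The same occupation-count concentration is reused to show $N(u)\in[\tfrac12,\tfrac32](T-1)k\pi(u)$ simultaneously for all $u$, which lets me replace the random diagonal $D=\operatorname{diag}(N(u))$ in $M=D^{-1}\bigl(N(u,v)\bigr)_{u,v}$ by $(T-1)k\operatorname{diag}(\pi)$ up to a constant; this step already needs $T\gtrsim\lambda^{-1}\pi_\star^{-1}\log n$ and is where $\pi_\star^{-1}$ first enters.

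The heart of the argument is $A$. Setting $G(u,v):=N(u,v)-N(u)\bigl[P(u,v)+(k-1)\pi(v)\bigr]$, the moment computation above gives $\mathbb E[G]=0$ and $A=D^{-1}G$, so after the denominator reduction it remains to bound $\bigl\|\operatorname{diag}(\pi)^{-1}G\bigr\|/((T-1)k)$. Using $\mathbb E[S_{t+1,v}\mid\mathcal F_t]=(S_tP)(v)$ I would split
\[
G(u,v)=\sum_t S_{t,u}\bigl[S_{t+1,v}-(S_tP)(v)\bigr]+\sum_t S_{t,u}\bigl[(S_tP)(v)-P(u,v)-(k-1)\pi(v)\bigr]=:G_1(u,v)+G_2(u,v).
\]
The increments of $G_1$ are martingale differences with respect to $\{\mathcal F_{t+1}\}$, so I would bound $\|\operatorname{diag}(\pi)^{-1}G_1\|$ by a matrix Freedman/Bernstein inequality: the per-step terms are essentially rank one, with $\operatorname{diag}(\pi)^{-1}S_t$ of Euclidean norm $\lesssim\sqrt k/\pi_\star$ paired against the centred next-step count vector, and the predictable quadratic variation driven by the conditional binomial variances $\sum_i P(X_t^{(i)},v)\bigl(1-P(X_t^{(i)},v)\bigr)$. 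The compensator $G_2$ is a mean-zero sum of functions of the single-time configurations $S_t$; lacking a martingale structure, I would control it either by a blocking argument (grouping the time indices into $O(\lambda^{-1}\log(kn))$-separated blocks to make them nearly independent and then applying matrix Bernstein) or by a direct second-moment computation in which the temporal covariances are summed geometrically via the spectral gap. Combining the two estimates and dividing by $(T-1)k$ yields $\|A\|\le\epsilon/2$ under the stated sample size.

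I expect the bound on $A$ --- the operator-norm (as opposed to entrywise) concentration of the dependent, product-structured sum $G$ --- to be the main obstacle, for three compounding reasons. First, the anonymity introduces the off-diagonal cross terms $i\neq j$ that both force the $(k-1)\widehat\Pi$ correction and inflate the variance by factors of $k$; carefully tracking the binomial moments $\mathbb E[S_{t,u}^2]\asymp k\pi(u)+k^2\pi(u)^2$ is what ultimately produces the $k^3$ and the $\pi^\star/\pi_\star^2$ dependence, and a careless bound degrades these powers. Second, the temporal dependence rules out a plain independent-sum Bernstein inequality and forces the martingale-plus-blocking treatment, which is where $\lambda(P)^{-1}$ enters. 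Third, passing from entrywise control to the spectral norm without losing a factor of $n$ requires genuine matrix concentration together with the $\operatorname{diag}(\pi)^{-1}$ reweighting, with the union bounds over the $\Theta(n^2)$ entries and the $k$ chains accounting for the $\log(kn)$ factor and the high-probability guarantee.
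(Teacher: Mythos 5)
Your strategy matches the paper's proof essentially step for step: your martingale $G_1$ is exactly the paper's difference sequence $Y_t$ (written in terms of the counts $S_{t,u}$ rather than sums of indicators), your compensator $G_2$ is the paper's $\widetilde N P - (k-1)D_N\Pi$ term, which the paper controls with a matrix Bernstein inequality for the product chains $(X^{(i)},X^{(j)})$ --- the same spectral-gap mechanism as your blocking/second-moment alternative --- and the occupation counts $N(u)$ and the $\widehat\Pi$-versus-$\Pi$ discrepancy are handled by the same scalar Markov-chain Bernstein bound plus a union bound. The only cosmetic difference is that you reweight by $\operatorname{diag}(\pi)^{-1}$ before invoking matrix Freedman, whereas the paper applies Tropp's inequality to $\sum_t Y_t = N - (D_N+\widetilde N)P$ directly and divides by $\min_u N(u)\ge (9/10)kT\pi_\star$ at the end; the range and variance proxies you sketch ($R\asymp k^2$, quadratic variation $\asymp k^3 T\pi^\star$ driven by the conditional binomial variances) are exactly those established in the paper's Lemmas \ref{lem:YTnorm} and \ref{lem:sigmasquare}.
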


The theorem essentially tells us that $\hat{P} = M - (k-1)\widehat{\Pi}$ is a good estimator for $P$. In particular, by observing the network for a large enough time, we can approximate $P$ arbitrarily well in the operator norm. This approximation preserves important quantities of the network. For example, it approximately preserves its stationary distribution; this tells us how often, on average, nodes are visited~\cite{markovmixing}, which corresponds how often a node is transmitting in our model. It also preserves its spectral gap, which tells us how fast a Markov chain with transition matrix $P$ converges to its stationary distribution from a worst case starting point~\cite{MonteTetali}. Moreover, it preserves important connectivity properties of the networks, such as its ``cut-structure'', that is, the structure of clusters and bottlenecks in the network~\cite{PSZ,STsparse}.

What this approximation doesn't necessarily preserve are the individual links in the network. For example, consider a transition matrix of a random walk on a complete graph of $n$ vertices. Now consider an Erd\H{o}s-R\'enyi random graph $G(n,p)$: that is, a graph of $n$ vertices where we connect any two vertices with probability $p$. As long as $p \gg log(n)/n$, it is well-known that the transition matrix of a random walk on $G(n,p)$ well approximates (in the operator norm) the transition matrix of the complete graph~\cite{STsparse}. However, the $G(n,p)$ graph might have only $O(n \log(n))$ edges compared to the $\Theta(n^2)$ edges of the complete graph, i.e., only a small fraction of the edges are preserved. Indeed, it is possible to construct examples where Theorem \ref{thm:main} guarantees a good approximation with a number of samples that is much smaller than the number of edges in the graph: certain links wouldn't have been used at all and, therefore, would not be detectable.

We now discuss the sample complexity of our procedure. The number of samples (or time steps) required depends on several parameters. In particular, it is inversely linear  in the spectral gap $\lambda(P)$. The spectral gap is a crucial parameter in the analysis of Markov chains: its reciprocal, $1/\lambda(P)$  provides a bound on the mixing time of $P$, i.e.,  the time it takes for a Markov chain with transition matrix $P$ to converge to the stationary distribution from a worst case starting point~\cite{MonteTetali}. As long as $P$ is irreducible, as in our case, $\lambda(P) > 0$. We believe that, in cases where $P$ models a communication network, $1/\lambda(P)$ is relatively small. Furthermore, for natural transition matrices $P$, $1/\pi_\star$ is proportional to $n$, the size of the network. A larger value of $1/\pi_\star$ would mean, in our model, that there are nodes that transmit very rarely: to approximate $P$, we necessarily need to collect enough information about these nodes, increasing the time required. Finally, $\frac{\pi^\star}{\pi_\star}$ can be interpreted as the ratio between the largest and the smallest frequencies of transmission of a node: if nodes transmit at roughly the same frequency, this parameter becomes order of 1.

%

In summary, for many network topologies, as long as the number of concurrent transmissions $k$ is not too large, Theorem \ref{thm:main} asserts that a number of samples proportional to the size of the network suffices to well approximate $P$.

The proof of Theorem \ref{thm:main}, deferred to Section \ref{sec:proof}, is inspired by the techniques of \cite{Wolf19}. In particular, it uses matrix concentration inequalities for Markov chains \cite{matrixMarkovBernstein} and martingales \cite{Tropp}. 

We end this section by briefly discussing the validity of our Markovian model for capturing our task of estimating the topology of a wireless network. Even though communication networks do not satisfy a Markovian assumption, we hope that, on average, the behaviour of such networks can be approximated well enough by a Markovian model. Besides this consideration, however, for our model to capture the behaviour of wireless networks, an additional assumption must be considered. Since multiple chains can concurrently visit a node at any time, our estimation procedure requires to observe the number of chains visiting a node $u$ at a time $t$. In our modelling, however, a chain visiting a node $u$ at time $t$ corresponds to observing node $u$ transmitting a message during the $t$-th time interval. Therefore, $r$ chains visiting $u$ at time $t$ would correspond to $u$ transmitting $r$ messages during the same time interval. This event, besides being very unlikely, cannot even be detected: we assume we can only observe if a node is transmitting in a certain time interval, not how many messages is transmitting. For this reason, our Markovian model can only be realistic  if the chance of two or more chains visiting the same node at the same time is very small. Formally, the probability that at some time $t$ two stationary chains visit the same node can be upper bounded by
\[
\sum_{u \in V} \sum_{1 \le i < j \le k} \pi(u) = {k \choose 2} \sum_{u \in V} \pi(u)^2 \le k^2 \|\pi\|_2^2.
\]

Therefore, we assume that $k^2 \|\pi\|_2^2 \ll 1$ or, equivalently, that
\begin{equation}
\label{eq:ass_k}
k \ll \frac{1}{\|\pi\|}.
\end{equation}

If the stationary distribution of the chain is approximately uniform, we have that $\|\pi\| \approx 1/\sqrt{n}$. In this case, \cref{eq:ass_k} can be interpreted as $k \ll \sqrt{n}$.

\section{EXPERIMENTAL RESULTS}
\label{sec:exp}

We have simulated various wireless network topologies with the network simulator $\texttt{ns-3}$~\cite{ns3}. Each wireless network is constructed by generating $n$ points on the plane. Each point corresponds to a wireless device and two devices are able to communicate with one another if they are at most 50m apart. Different point sets will induce a different network topology. 
We use the IEEE 802.11b protocol with a data rate of 1Mbps.
We use the OLSR routing protocol to discover and disseminate edge state information. The Hello and TC intervals are 2 and 5 seconds, respectively. We do not send any packets within the first 30 seconds of the simulation to allow time for devices to calculate the initial shortest path between other nodes in the network. We use UDP as the transport layer protocol and each node is implemented as a UDP client. This choice has been made because UDP does not use acknowledgment messages (ACKs), possibly making link detection more difficult. However, we have also performed some simulations with TCP obtaining very similar results.

We sample 500 pairs of sender/receiver nodes $(i,j)$ uniformly at random (with replacement) and transmit 3 packets of size 100 bytes from $i$ to $j$ at time $30 + t$, where $t$ is uniformly distributed in $[0,T]$ for some transmission window length $T>0$.

The time series matrix is constructed by sampling the interval within the simulation between 30 and $30+T$ seconds into intervals of 1.5ms. We remark we haven't really optimised this number: time-intervals between 1ms and 1.5ms usually work quite well, both for the method outlined in this report and for other methods based on transfer entropy or Granger causality.

Our experiments assume the network is bidirectional, and therefore, we will use our estimate  $\hat{L}_{\text{sym}}$ for the symmetrised Laplacian. We will compare our estimate with one derived from transfer entropy. More precisely, we will consider the matrix $M_{\text{TE}}$ where the $(i,j)$-entry is equal to the transfer entropy from the time series associated with $i$ to the time series associated with $j$. To make the comparison even, we will also symmetrise this matrix by taking the average between this matrix and its transpose. Transfer entropy is estimated using the Python library PyInform \cite{PyInform} with embedding dimension $d=5$. We did not optimise such number but found a similar performance for smaller values of $d$, while larger values of $d$ might produce worse results.

\subsection{An idealised setting}
\label{sec:ideal}
We first consider an idealised setting that follows the problem setup described in \cref{sec:setup}: we assume we have direct access to the time series matrix $TS$. While we do not expect to have perfect information about the time series matrix $TS$ in practice, we believe useful to know the strengths and limitations of our approach in this idealised scenario. 

We fix a constant propagation speed (the speed of light) and path loss is dependent only on the transmission range: if the receiver is within range, it will receive the transmission at 16dbm, else it will not receive a transmission. 

In Fig.~\ref{fig:cycle} we present experimental results for a network topology corresponding to an undirected cycle  of $6$ nodes. We can see that both transfer entropy and our methodology works well in recovering links in the network. In Fig.~\ref{fig:cycle}(d) we study the effect of a varying transmission length on network estimation. In particular, we record the proportion of correctly identified links by our methodology and transfer entropy, for a varying length of the transmission window. More precisely, we say that a link $\{i,j\}$ has been correctly identified by our methodology (resp. transfer entropy) if it appears in the top $m$ entries of the upper triangular part of the $\hat{L}_{\text{sym}}$ (resp. transfer entropy) matrix, where $m$ is the number of links in the network.\footnote{We have adopted this simplistic inference procedure only to compare our estimator to transfer entropy. More complicated inference procedures, such as ones based on surrogate data approaches~\cite{pinto2024} or machine learning methods~\cite{TG20}  might be needed in real world scenarios.}
We observe that a shorter transmission window makes estimation harder since it implies a larger number of nodes will be concurrently transmitting (i.e., a larger parameter $k$ in \cref{thm:main}). We notice how our procedure slightly improves upon transfer entropy when the network is congested.

Further experimental results are reported in \cref{fig:cyclestar,fig:3by3grid,fig:ladder,fig:path20}. We highlight that our methodology consistently outperforms transfer entropy whenever the network is congested. Transfer entropy and our estimate appear to work similarly well when $T \ge 15s$, with our estimator, however, giving more weight to pairs of nodes not connected by links: we believe this might be due to an underestimation of the value of $k$ defined in \cref{sec:estimation}. A more refined approach for estimating $k$ might further improve our estimation procedure.

\begin{figure*}[h!]
    \centering
    \subfloat[][Adjacency matrix]{\includegraphics[width=0.25\textwidth]{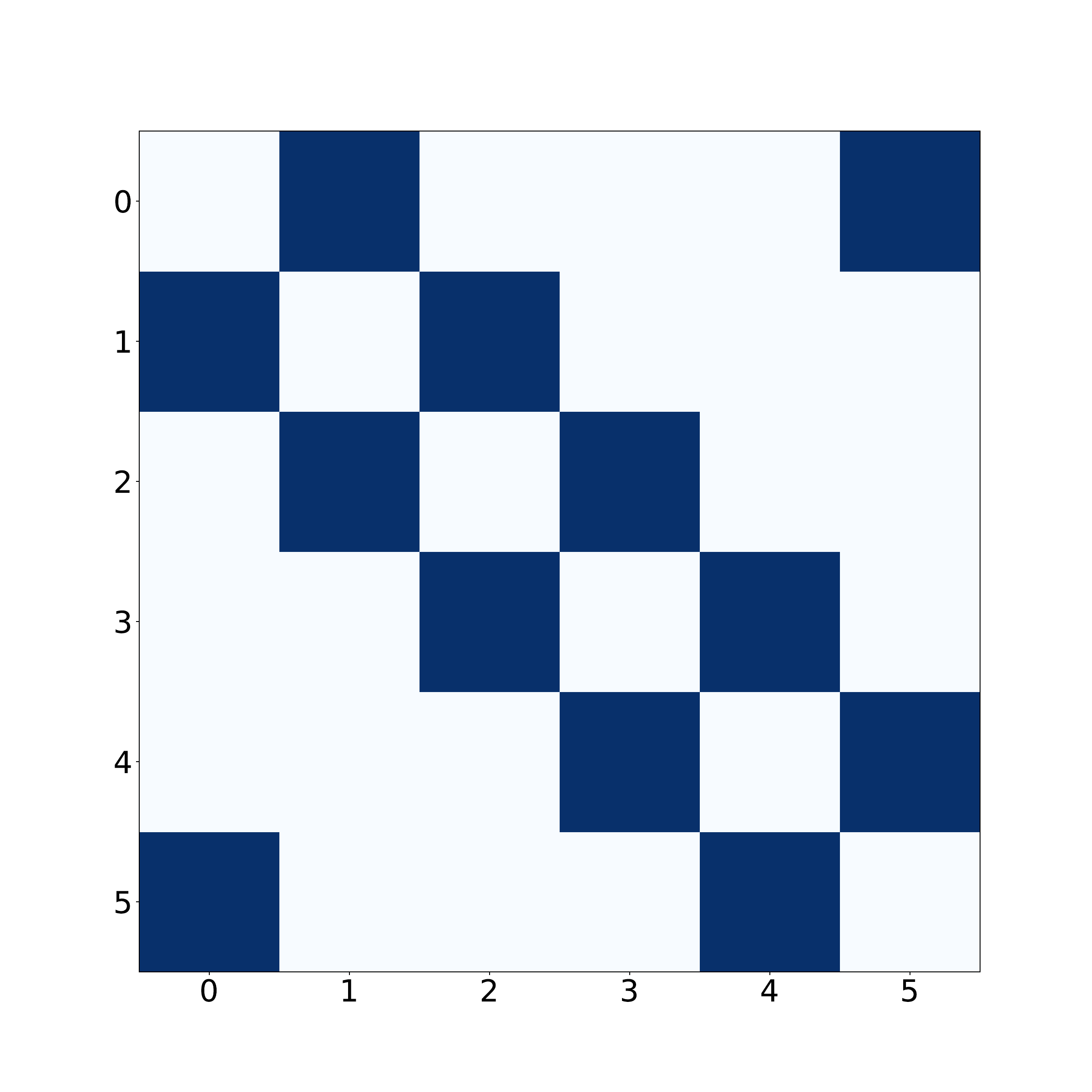}}
    \subfloat[][Our estimate $\hat{L}_{\text{sym}}$]{\includegraphics[width=0.25\textwidth]{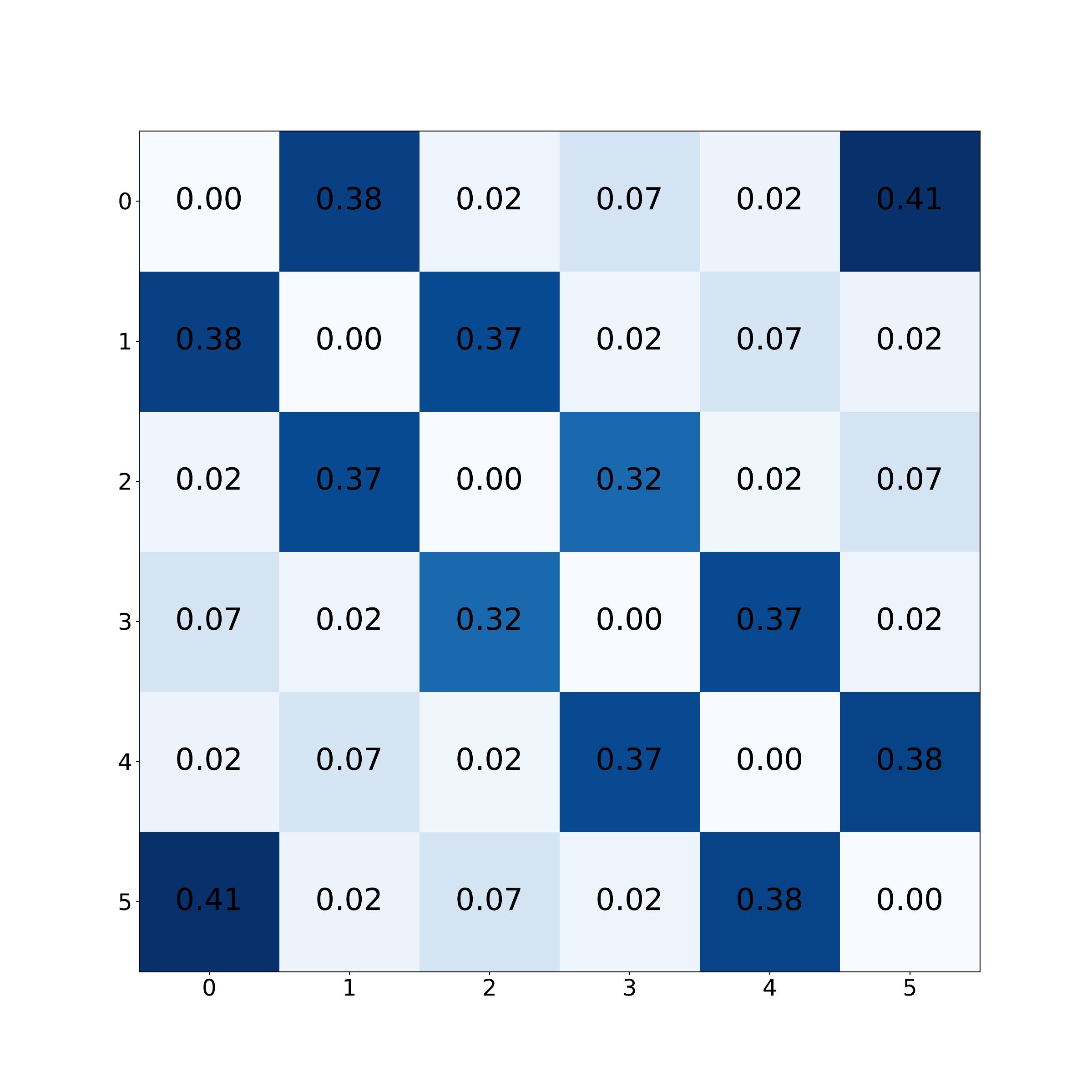}}
    \subfloat[][Transfer entropy matrix]{\includegraphics[width=0.25\textwidth]{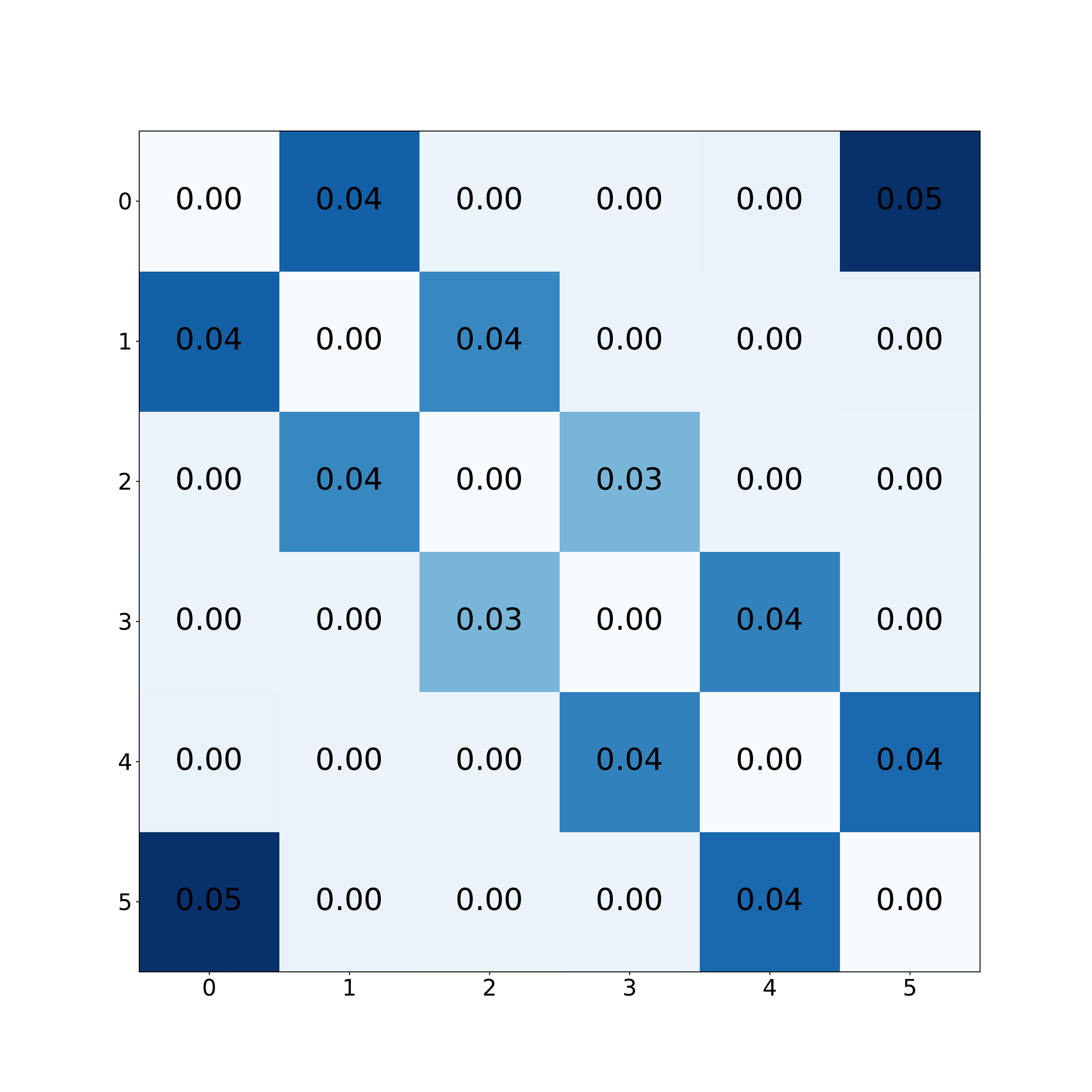}}
    \subfloat[][Varying transmission length]{\includegraphics[width=0.25\textwidth]{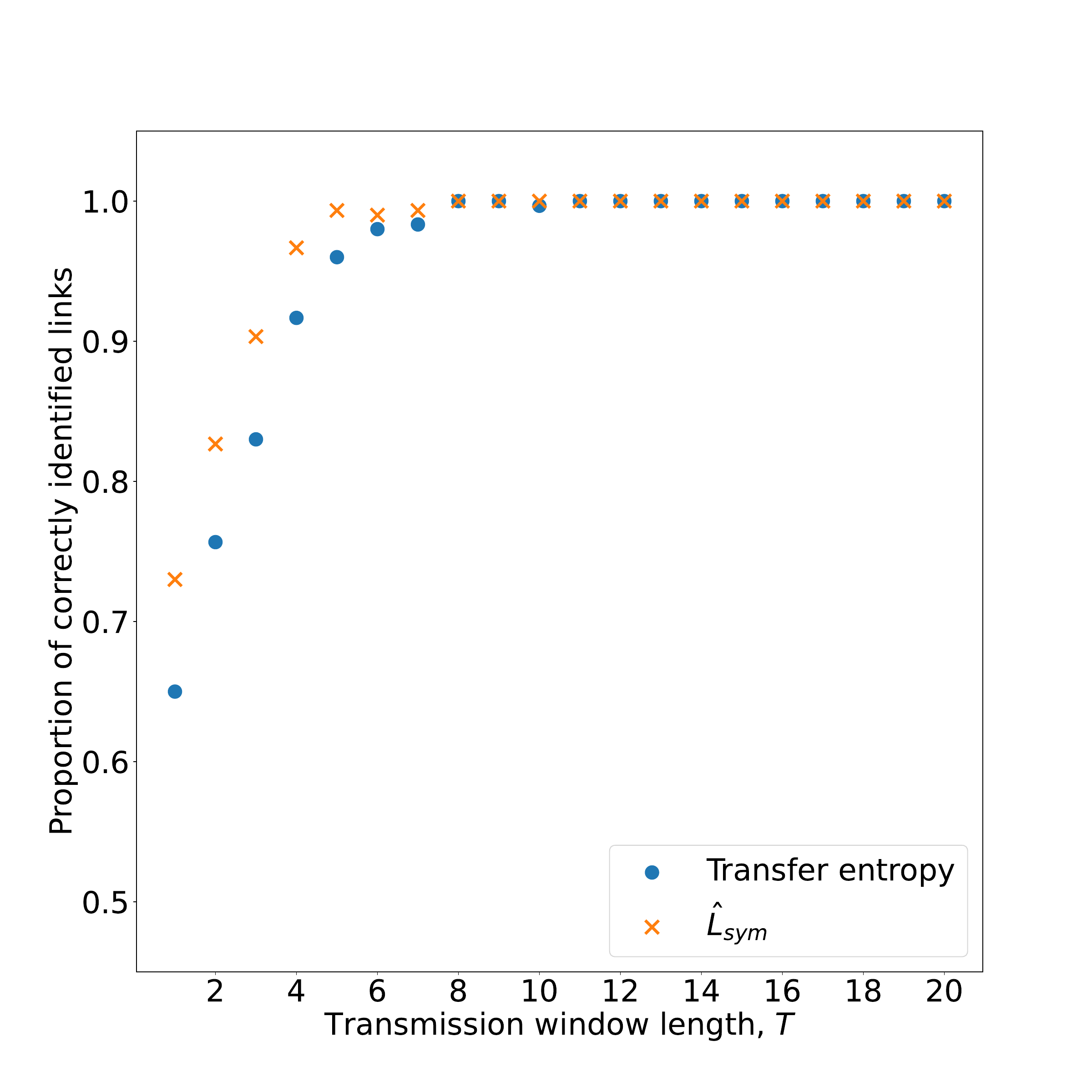}}
    \vspace{0.5cm}
    \caption{
    Experimental results for a cyclic network topology of six nodes. Figure (a) displays the adjacency matrix of the network. Figure (b) the estimated matrix $\hat{L}_{\text{sym}}$ for a transmission window length $T=15s$. Figure (c) the  transfer entropy matrix ($T=15s$). Figure (d) represents the proportion of correctly identified links by our methodology and transfer entropy, for a varying length of the transmission window.  All results are averaged over $50$ independent simulations. A description of the experimental setup is provided at the beginning of \cref{sec:exp}.
    }
    \label{fig:cycle}
\end{figure*}

\begin{figure*}[h!]
    \centering
    \subfloat[][Adjacency matrix]{\includegraphics[width=0.25\textwidth]{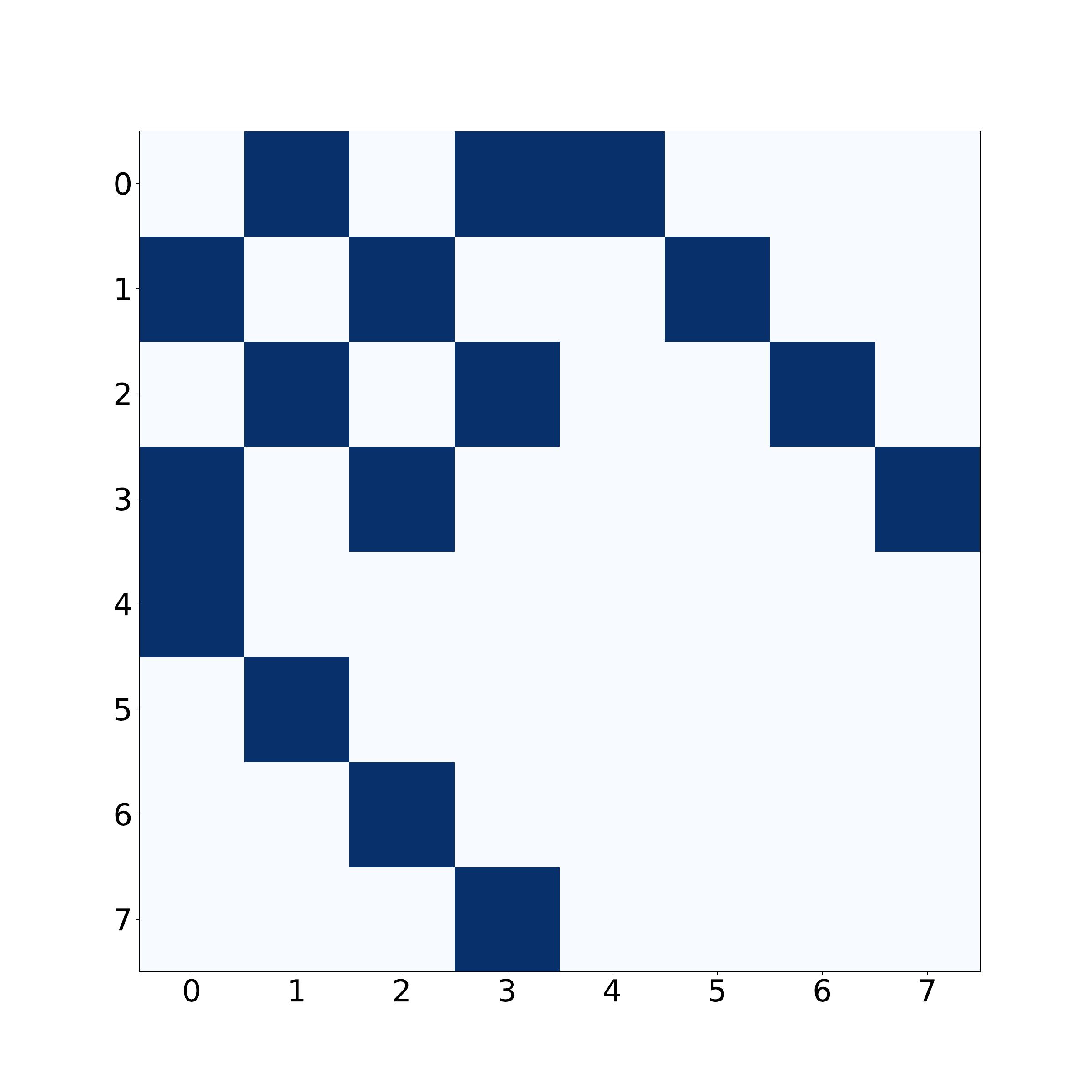}}
    \subfloat[][Our estimate $\hat{L}_{\text{sym}}$]{\includegraphics[width=0.25\textwidth]{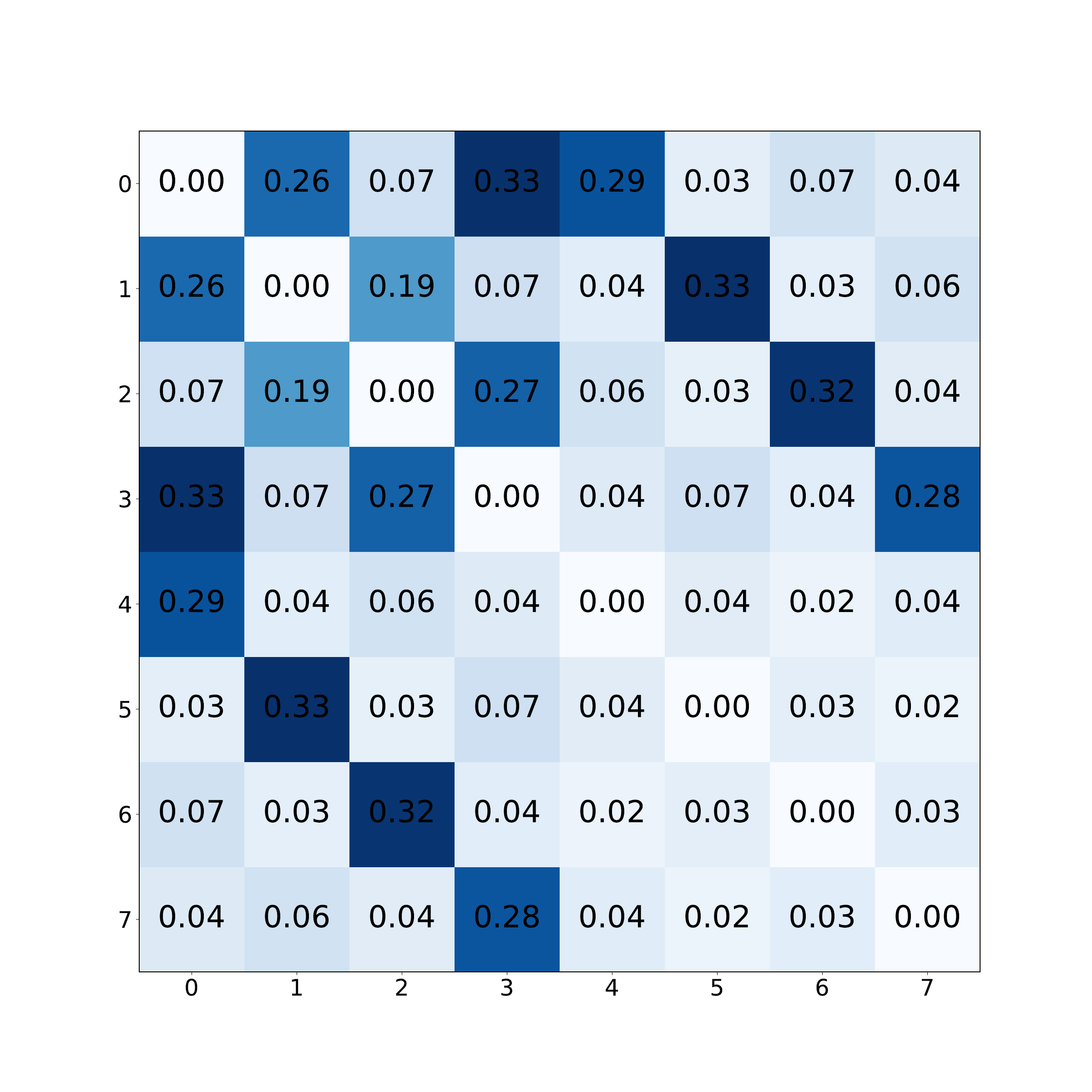}}
    \subfloat[][Transfer entropy matrix]{\includegraphics[width=0.25\textwidth]{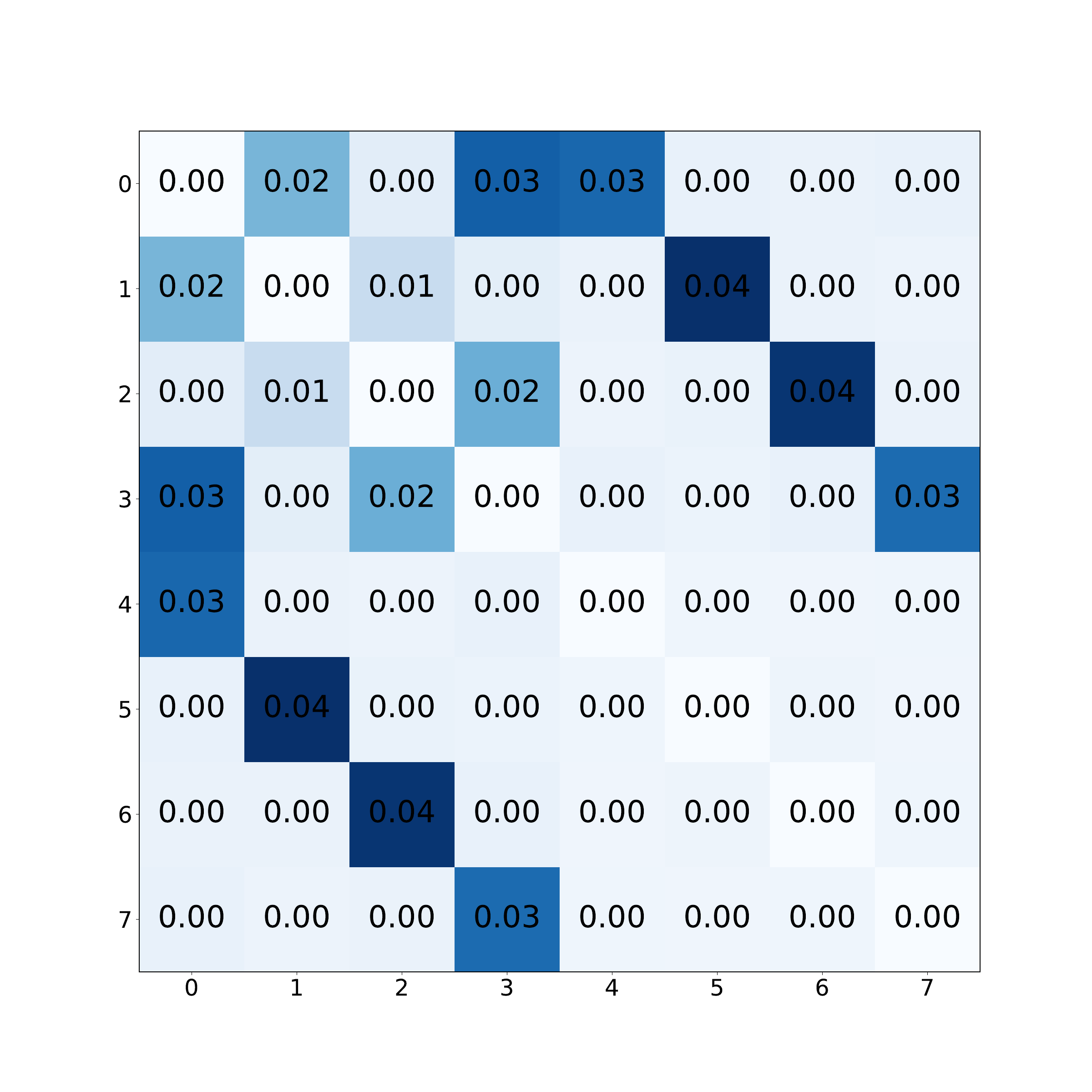}}
    \subfloat[][Varying transmission length]{\includegraphics[width=0.25\textwidth]{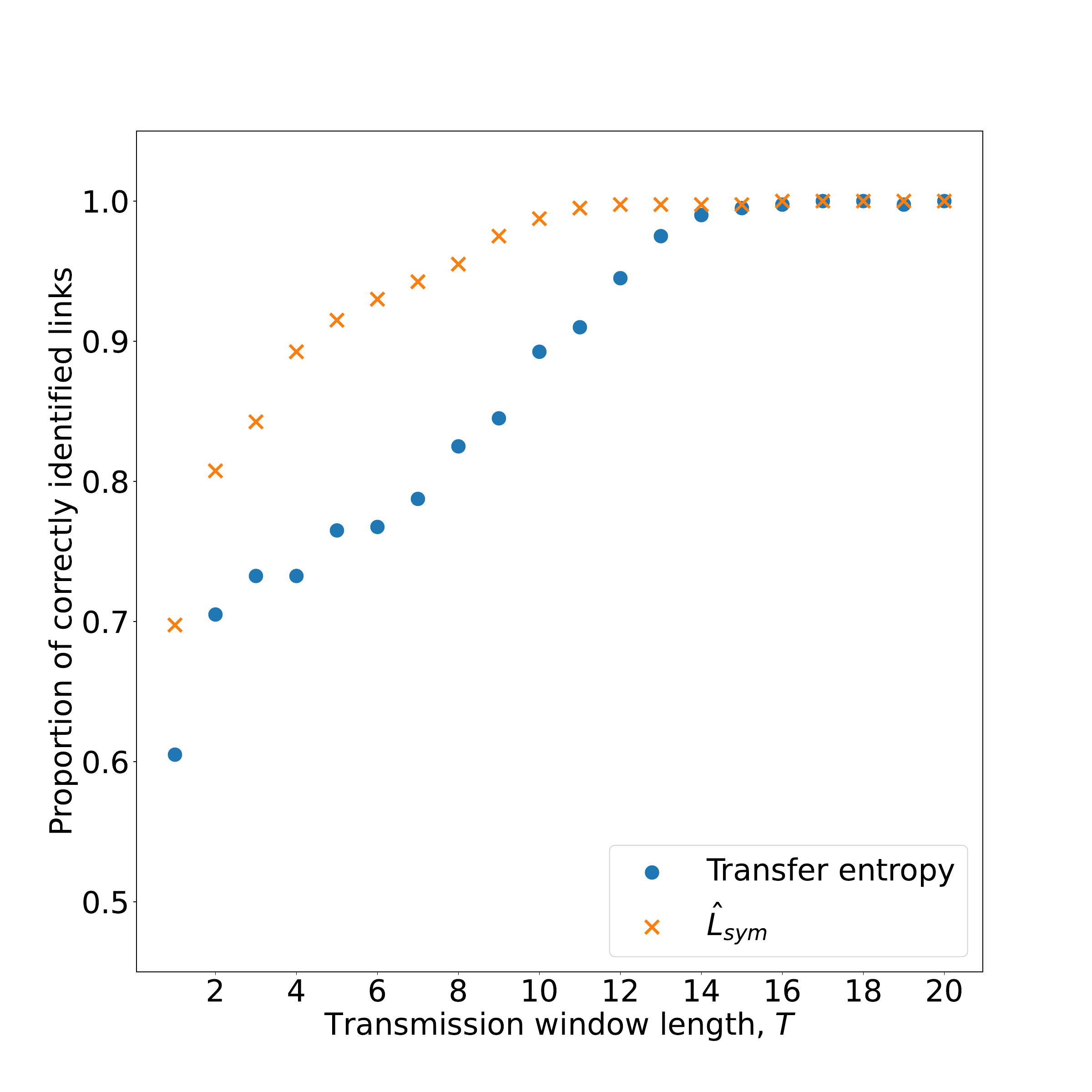}}
    \vspace{0.5cm}
    \caption{
    Experimental results for a network topology of eight nodes. See \cref{fig:cycle} for a description.    }
    \label{fig:cyclestar}
\end{figure*}

\begin{figure*}[h!]
    \centering
    \subfloat[][Adjacency matrix]{\includegraphics[width=0.25\textwidth]{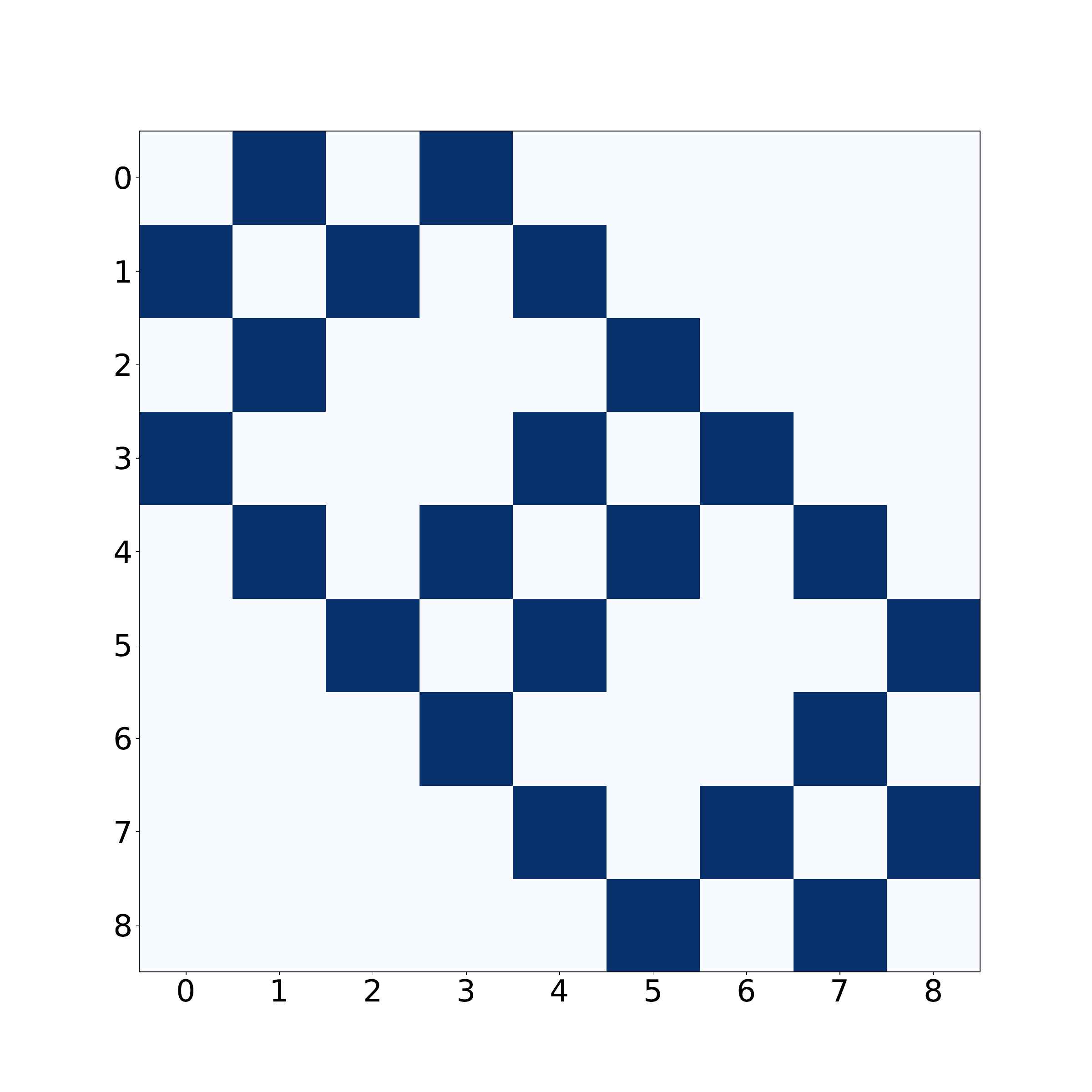}}
    \subfloat[][Our estimate $\hat{L}_{\text{sym}}$]{\includegraphics[width=0.25\textwidth]{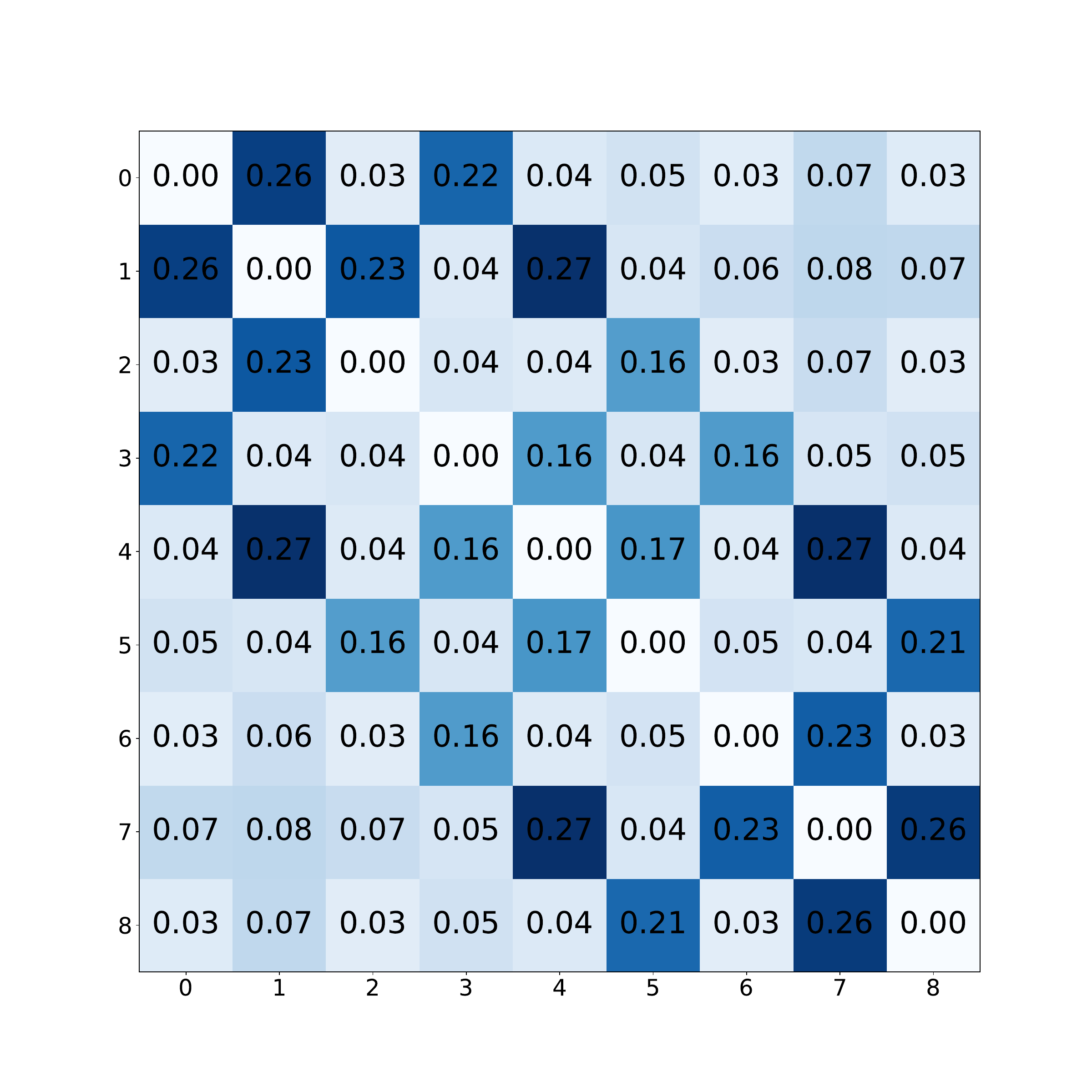}}
    \subfloat[][Transfer entropy matrix]{\includegraphics[width=0.25\textwidth]{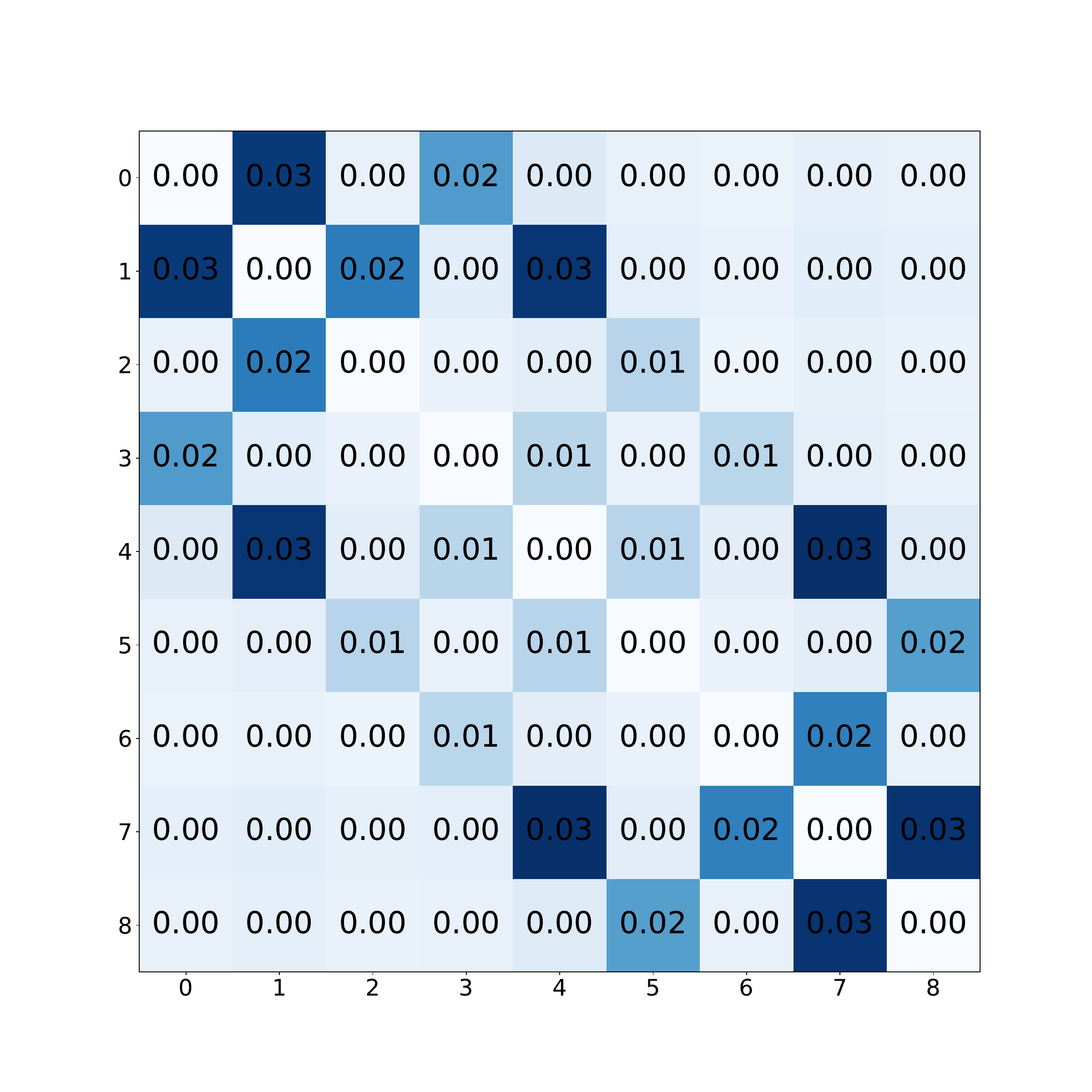}}
    \subfloat[][Varying transmission length]{\includegraphics[width=0.25\textwidth]{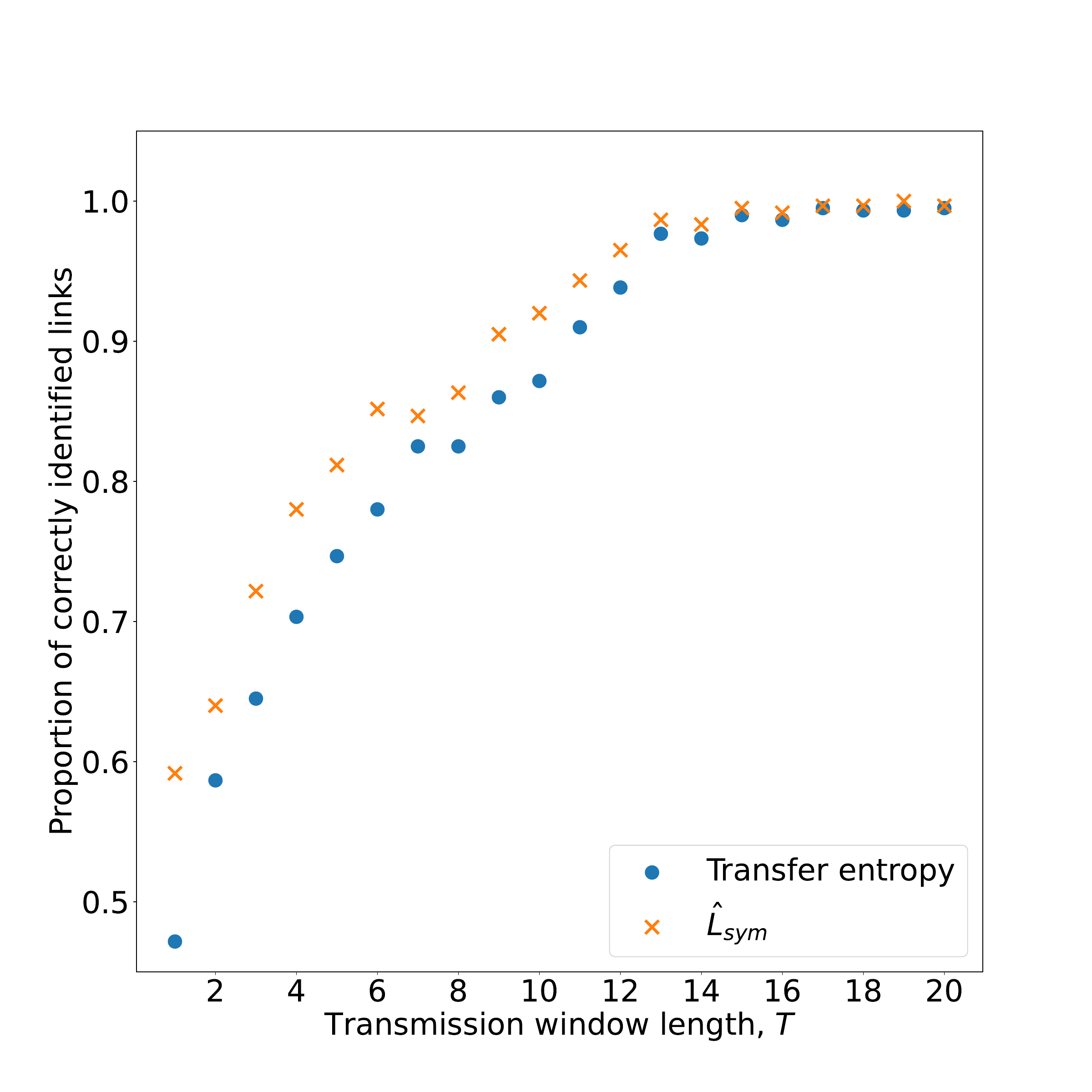}}
    \vspace{0.5cm}
    \caption{
    Experimental results for a network topology where nodes are arranged in a three-by-three grid.  See \cref{fig:cycle} for a description. 
   }
    \label{fig:3by3grid}
\end{figure*}

\begin{figure*}[h!]
    \centering
    \subfloat[][Adjacency matrix]{\includegraphics[width=0.25\textwidth]{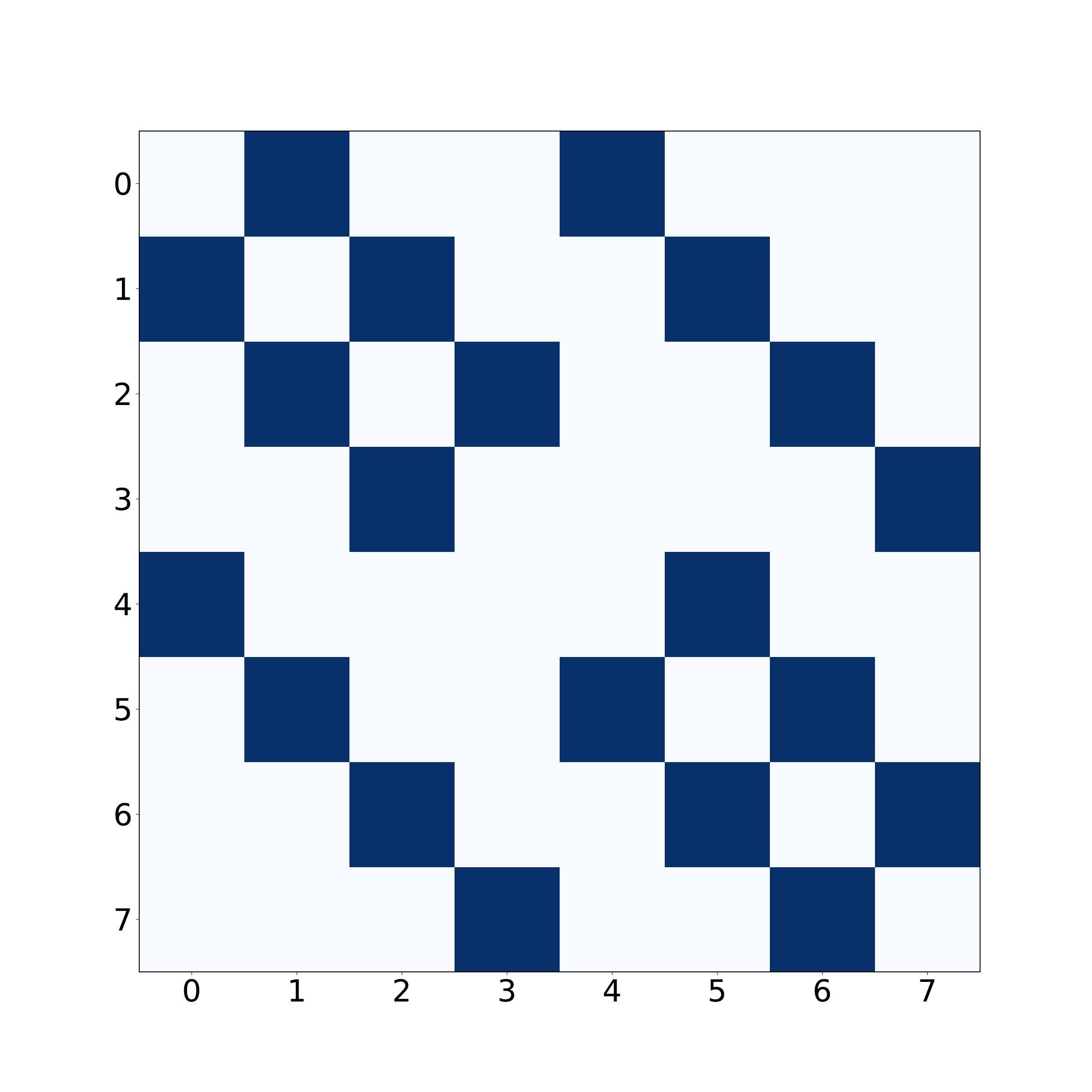}}
    \subfloat[][Our estimate $\hat{L}_{\text{sym}}$]{\includegraphics[width=0.25\textwidth]{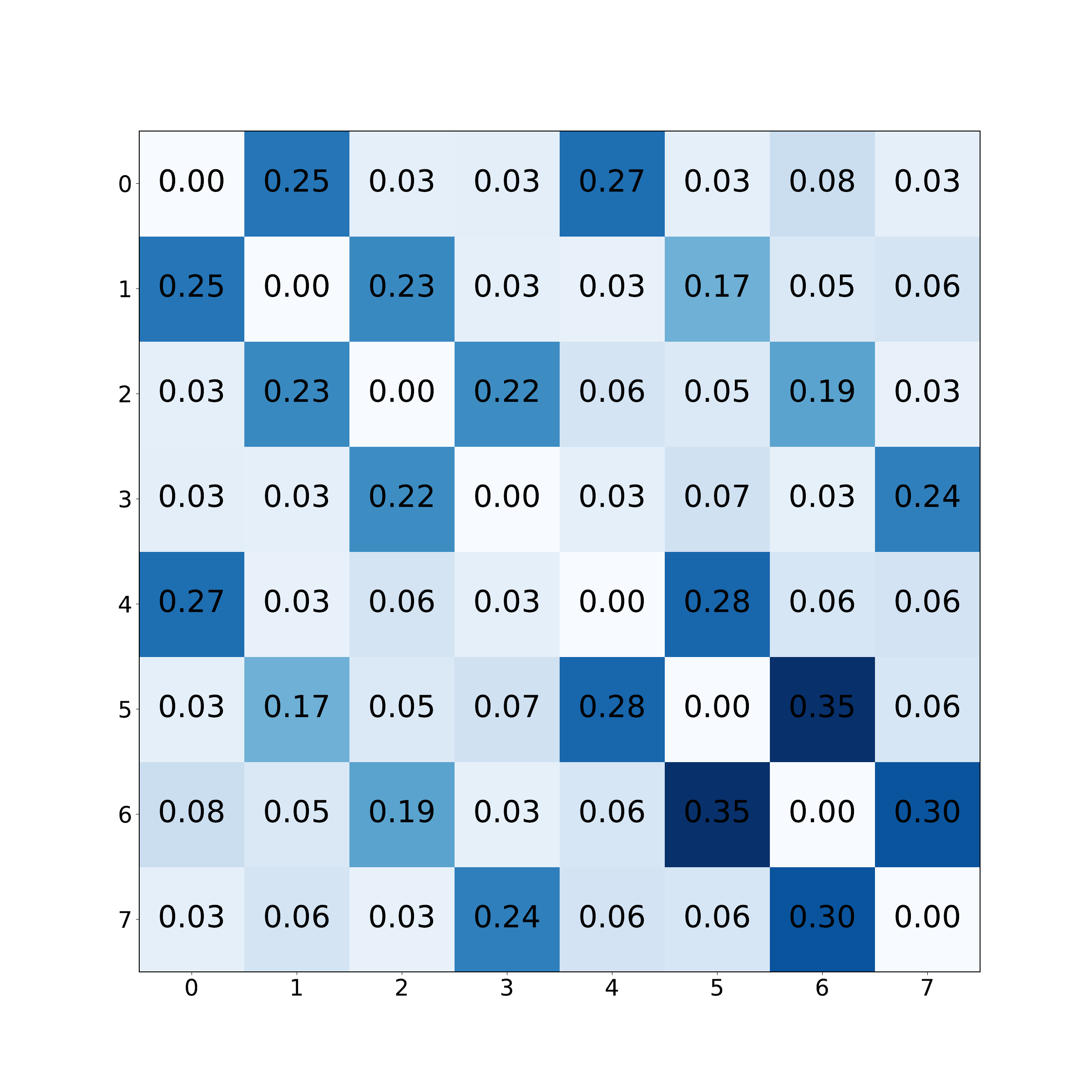}}
    \subfloat[][Transfer entropy matrix]{\includegraphics[width=0.25\textwidth]{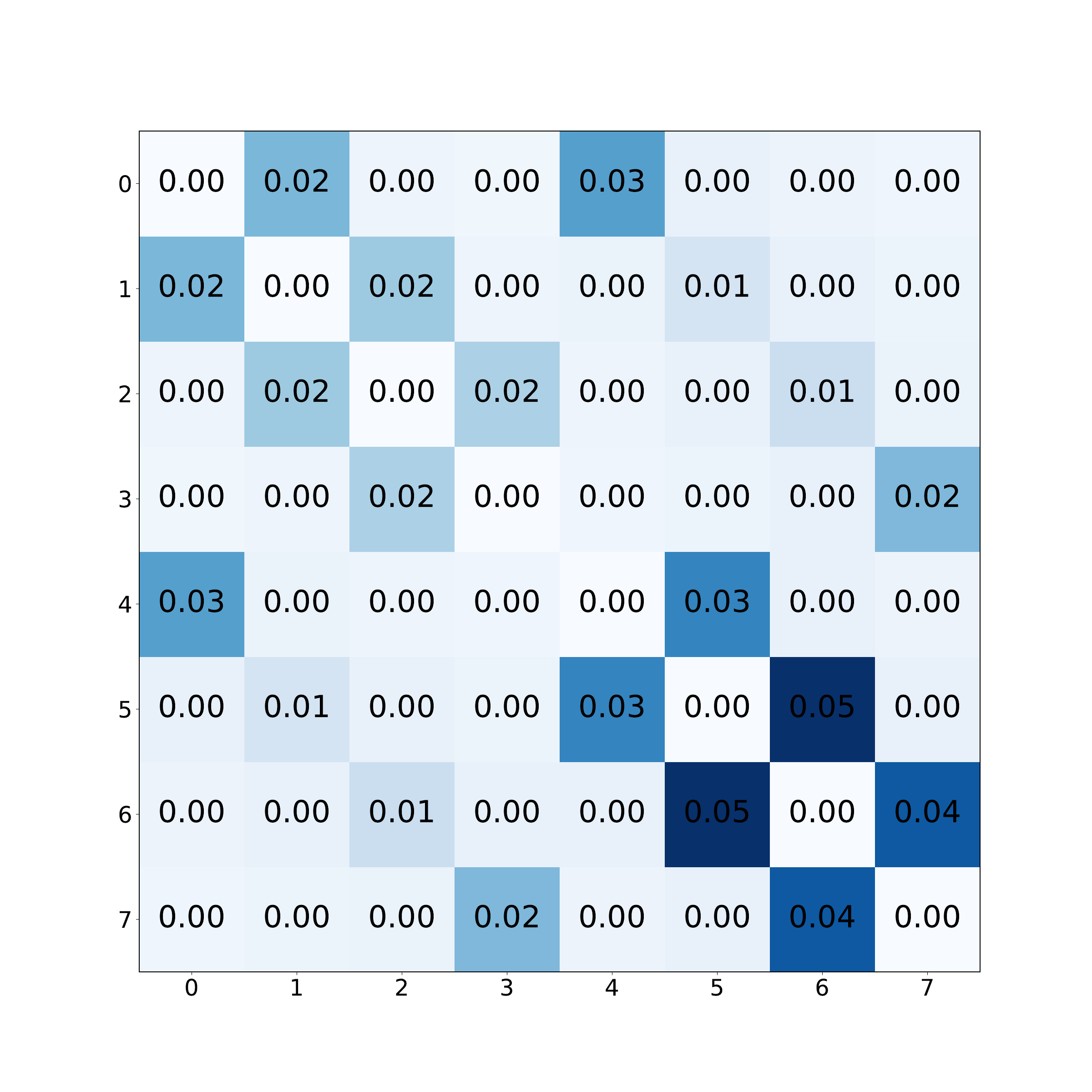}}
    \subfloat[][Varying transmission length]{\includegraphics[width=0.25\textwidth]{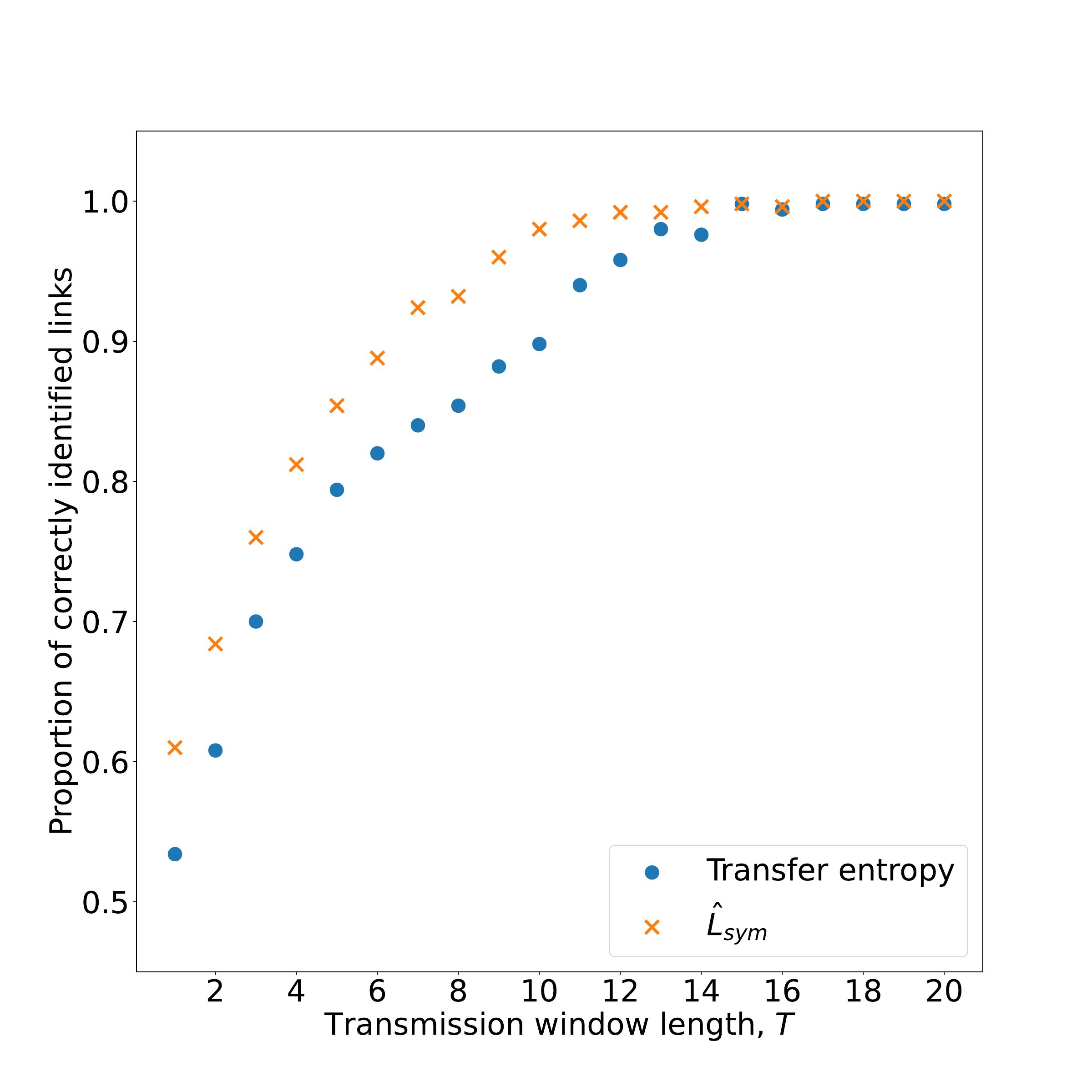}}
    \vspace{0.5cm}
    \caption{
    Experimental results for a network topology where nodes are arranged in a four-by-two grid.  See \cref{fig:cycle} for a description.    }
    \label{fig:ladder}
\end{figure*}

\begin{figure*}[h!]
    \centering
    \subfloat[][Adjacency matrix]{\includegraphics[width=0.25\textwidth]{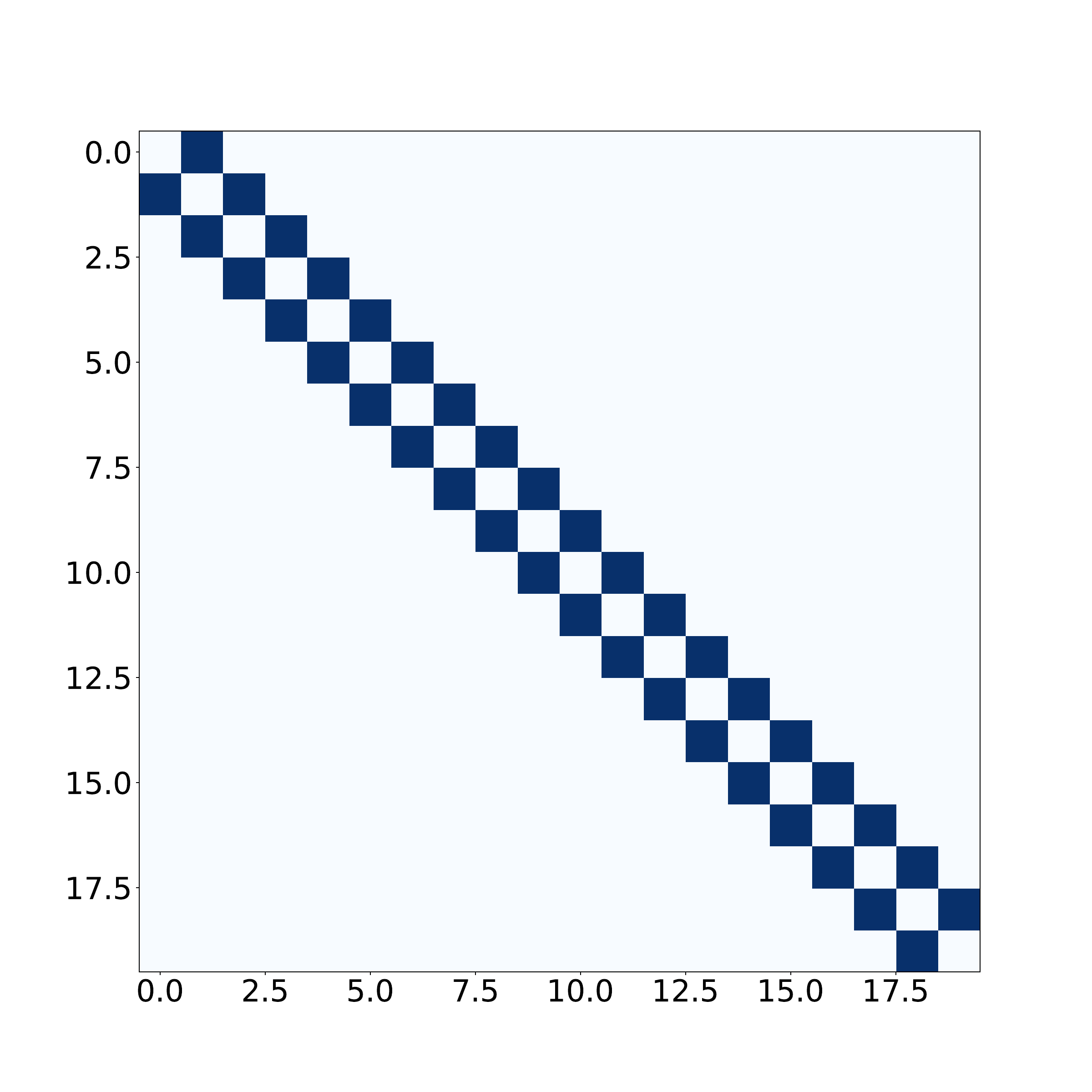}}
    \subfloat[][Our estimate $\hat{L}_{\text{sym}}$]{\includegraphics[width=0.25\textwidth]{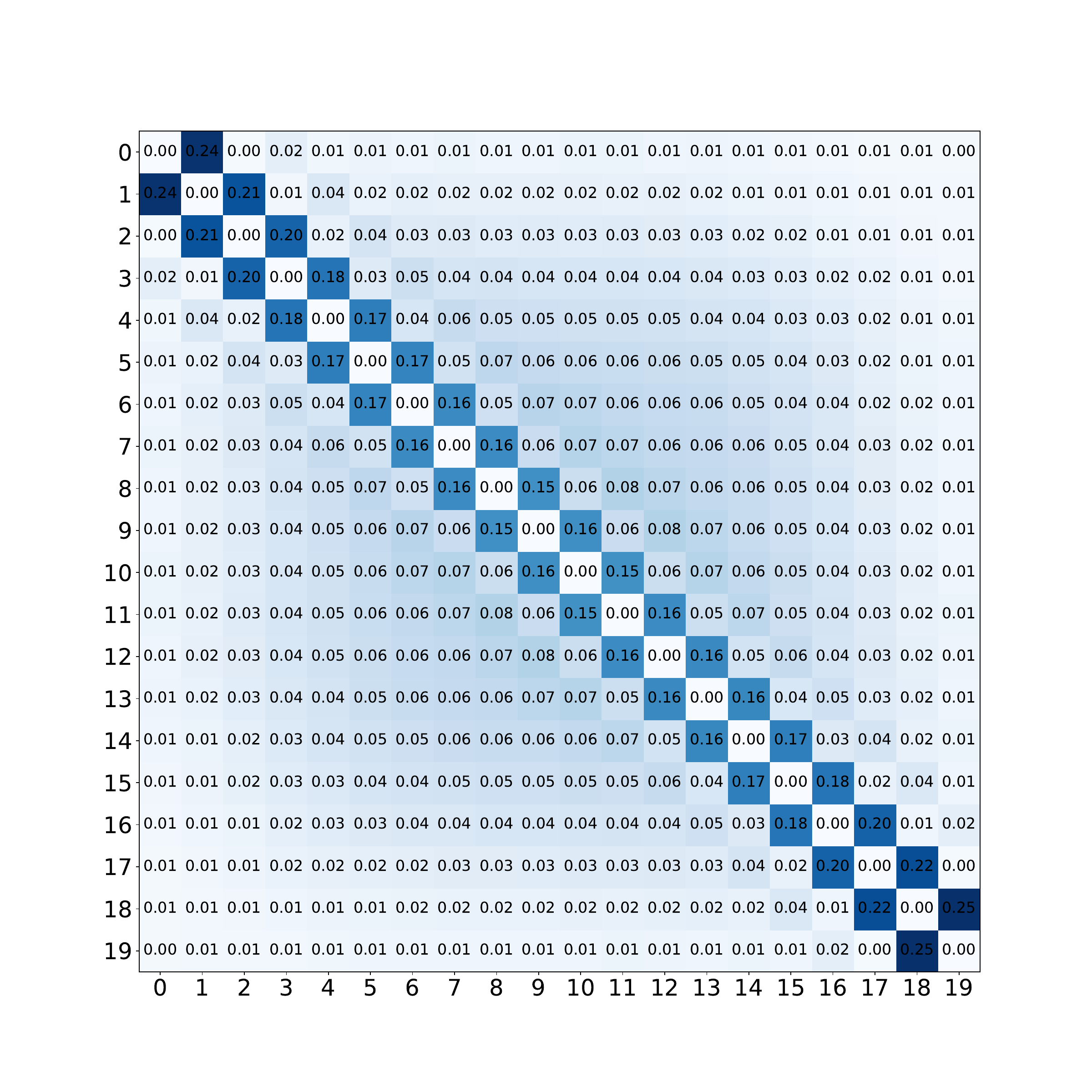}}
    \subfloat[][Transfer entropy matrix]{\includegraphics[width=0.25\textwidth]{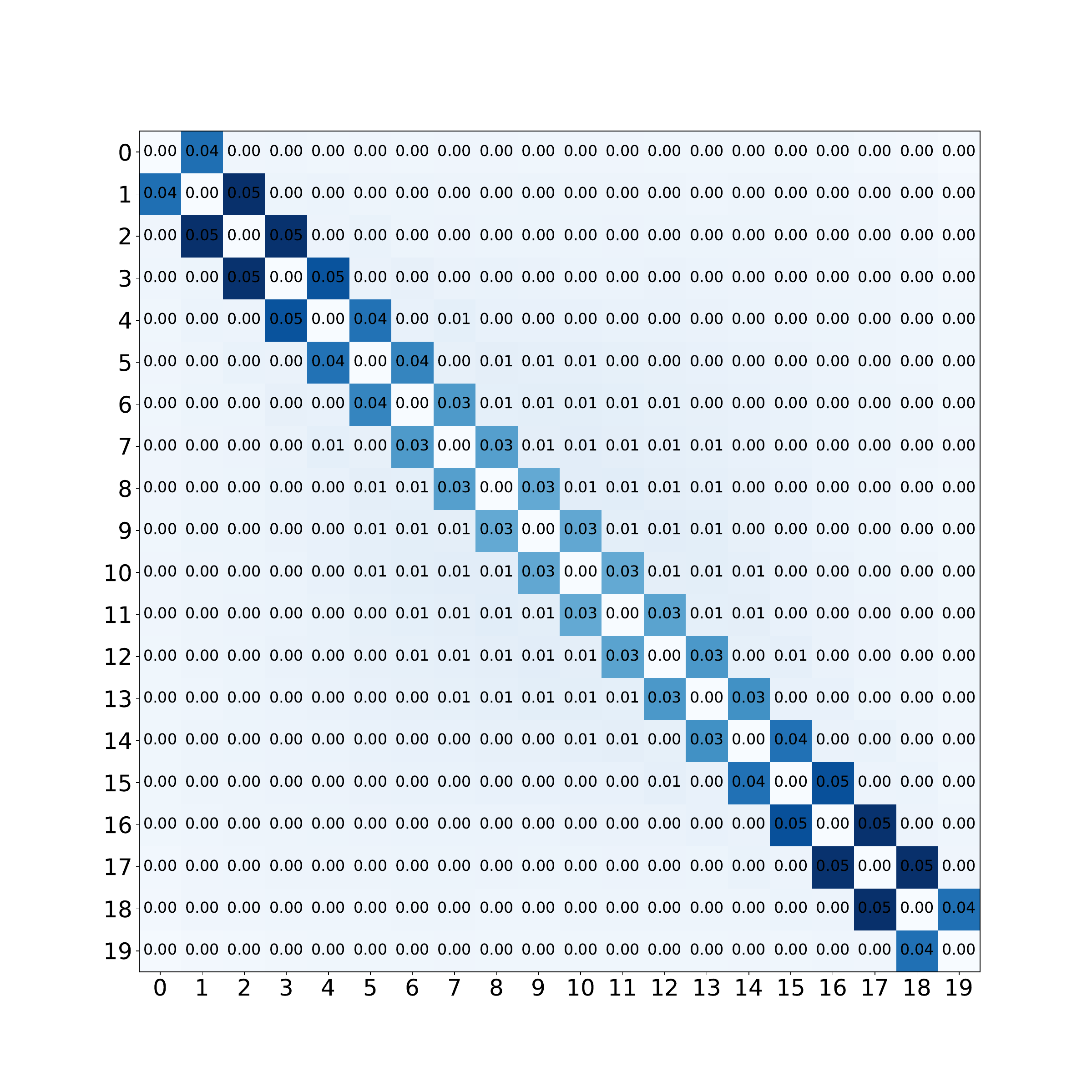}}
    \subfloat[][Varying transmission length]{\includegraphics[width=0.25\textwidth]{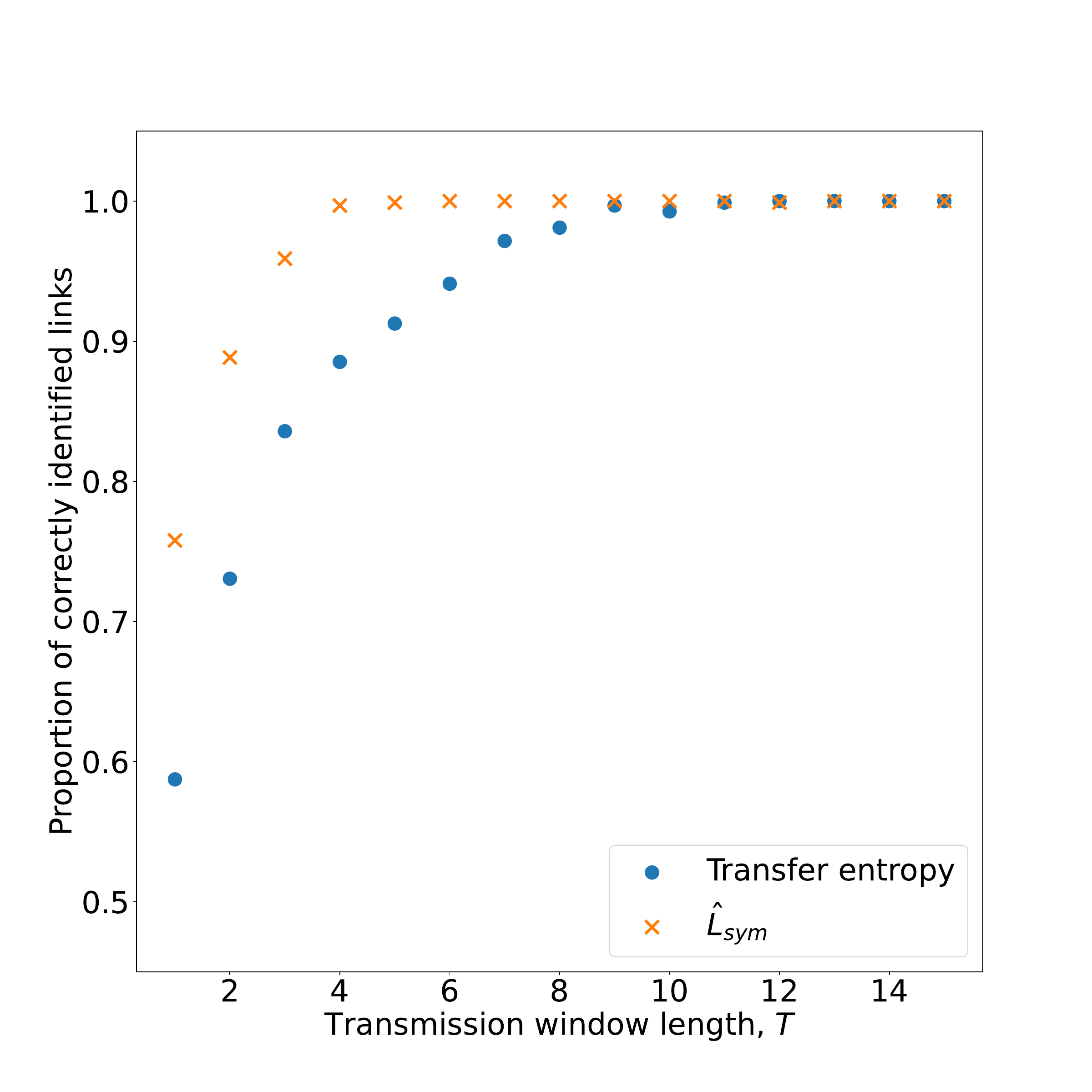}}
    \vspace{0.5cm}
    \caption{
    Experimental results for a network topology of twenty nodes arranged in a path.  See \cref{fig:cycle} for a description.    }
    \label{fig:path20}
\end{figure*}

%

\subsection{A more realistic setting}
\label{sec:real}
We now consider a more realistic scenario in which we do not have perfect information about the time series matrix TS. For simplicity, we will assume to know the number of nodes in the network, but not their position. To obtain information about packet transmission, we assume we are able to deploy a set of $s$ RF sensors in the environment. When a packet is transmitted from a node $u$, RF sensors positioned close to $u$ will be able to detect the corresponding power burst. We use the \emph{Log Distance} path loss model in our \texttt{ns-3} simulations: this implies that, the farther away from $u$ a sensor $\sigma$ is, the more attenuated the received signal by the sensor.

We again divide the time into small intervals $\{1,\dots,T'\}$ (we assume a power burst is detected by at least a sensor at each interval, and discard intervals in which no power burst is detected); we can then construct a matrix $PD \in \mathbb{R}^{m \times T'}$, where $PD(\sigma,t)$ represents the power detected at time $t$ by sensor $\sigma$. Suppose that, for a pair of time intervals $t,t'$,  the two vectors $PD(\cdot,t)$ and $PD(\cdot,t' )$ are somewhat similar: this indicates that the same node might be responsible for the detected power burst. This observation suggests a natural way to discern if the same node is transmitting during different time intervals: partition the vectors $\{PD(\cdot,t)\}_{t=1}^{T'}$ using a clustering technique such as $k$-means. Time intervals assigned to the same cluster corresponds to power bursts we associate to the same node. In this way it is possible to construct a matrix which, hopefully, approximates the time series matrix TS (since we do not assume any information about the identity of the nodes, such as their position, this is only possible up to a relabelling of the nodes).

We perform experiments using the same \texttt{ns-3} experimental setup discussed at beginning of the section, but instead of fixing the network topology we place $n$ transmitters and $s$ RF sensors at random in a rectangular box. For example, in \cref{fig:box} we have randomly placed $25$ transmitters and $100$ sensors in a square box of side $120m$. Only transmitters placed at most $50m$ apart are connected by a link in the network. This gives rise to a network topology with adjacency matrix displayed in \cref{fig:box}(d). We simulate the network behaviour with \texttt{ns-3} exchanging $500$ messages between randomly chosen pair of nodes. As explained above, we use $k$-means clustering (with $k = n$) to group together power bursts measured by different sensors in the same interval. Since we do not have any information about the identity of the transmitters (recall, we do not assume any information about the location of the sensors or transmitters), to evaluate how well measured power bursts are clustered together, we label each cluster computed by $k$-means according to the transmitter that corresponds to the largest number of bursts in the cluster. We display in \cref{fig:box}(b) the percentage of power bursts detected that are associated to the correct transmitter for different lengths of the transmission window. The results are averaged over $20$ random realisations in which we fix the same network topology but randomly vary the position of the sensors. As expected, a larger transmission window results in a higher accuracy thanks to a lower number of concurrent transmissions. 

We then compute our estimator $\hat{L}_{\text{sym}}$ together with the transfer entropy matrix $M_{\text{TE}}$. Examples of these two matrices for a particular random realisation are shown in \cref{fig:box} (e) and (f). Notice how these two matrices appear quite similar. In \cref{fig:box}(c) we compare the performances of our method with transfer entropy. Again, we measured the number of times the top $m$ entries in the two matrices correspond to links in the network, where $m$ is the number of links in the network. Both methodologies appear to perform reasonably well for $T\ge 10$, with our estimation procedure slightly but consistently improving upon transfer entropy.

We repeat the same experiments with a bigger network: we randomly place $50$ transmitters and $150$ sensors in a rectangular box of $340m \times 200m$. We simulate the network with \texttt{ns-3} exchanging now $2000$ messages between randomly chosen pairs. We perform $5$ random realisations, randomly placing both transmitters and sensors. Experimental results are shown in \cref{fig:bigbox}. We can observe a degradation of the performances, both in the proportion of identified transmitters (see \cref{fig:bigbox}(b)) and in the proportion of links of the network that corresponds to the top $m$ entries in both transfer entropy and $\hat{L}_{\text{sym}}$ matrix (see \cref{fig:bigbox}(c)). The  first observation suggests that, in large networks, a simple application of $k$-means might not be enough to identify the correct transmitter for most power bursts; future work might want to exploit potential information about the position of sensors or different source separation procedures such as using independent component analysis \cite{TG20}. The decrease in performance in correctly identifying links is not unexpected: the number of possible links increases quadratically in the number of transmitters while the number of actual links increases, roughly, linearly. Nevertheless, to help us better understand the reasons behind this decrease in performance, let us study closely \cref{fig:bigbox} (f), which compares the top entries of $\hat{L}_{\text{sym}}$ against the actual links in the network: first of all, notice how the main ``backbone'' of the network is essentially preserved. As suggested by our theoretical analysis, our method does not necessarily preserve individual links but rather structural properties of the network. Furthermore, false positive links typically connect nodes that are only two hops apart in the network. This is not surprising since if there exist links $\{u,v\}$ and $\{v,w\}$, the time series corresponding to $u$ and $w$ will be correlated. This suggests it might be possible to reduce this error by, for example, combining information about time series with different interval lengths. We leave this as future work.

\begin{figure*}[h!]
    \centering
    \subfloat[][Placement of transmitters and sensors]{\includegraphics[width=0.3\textwidth]{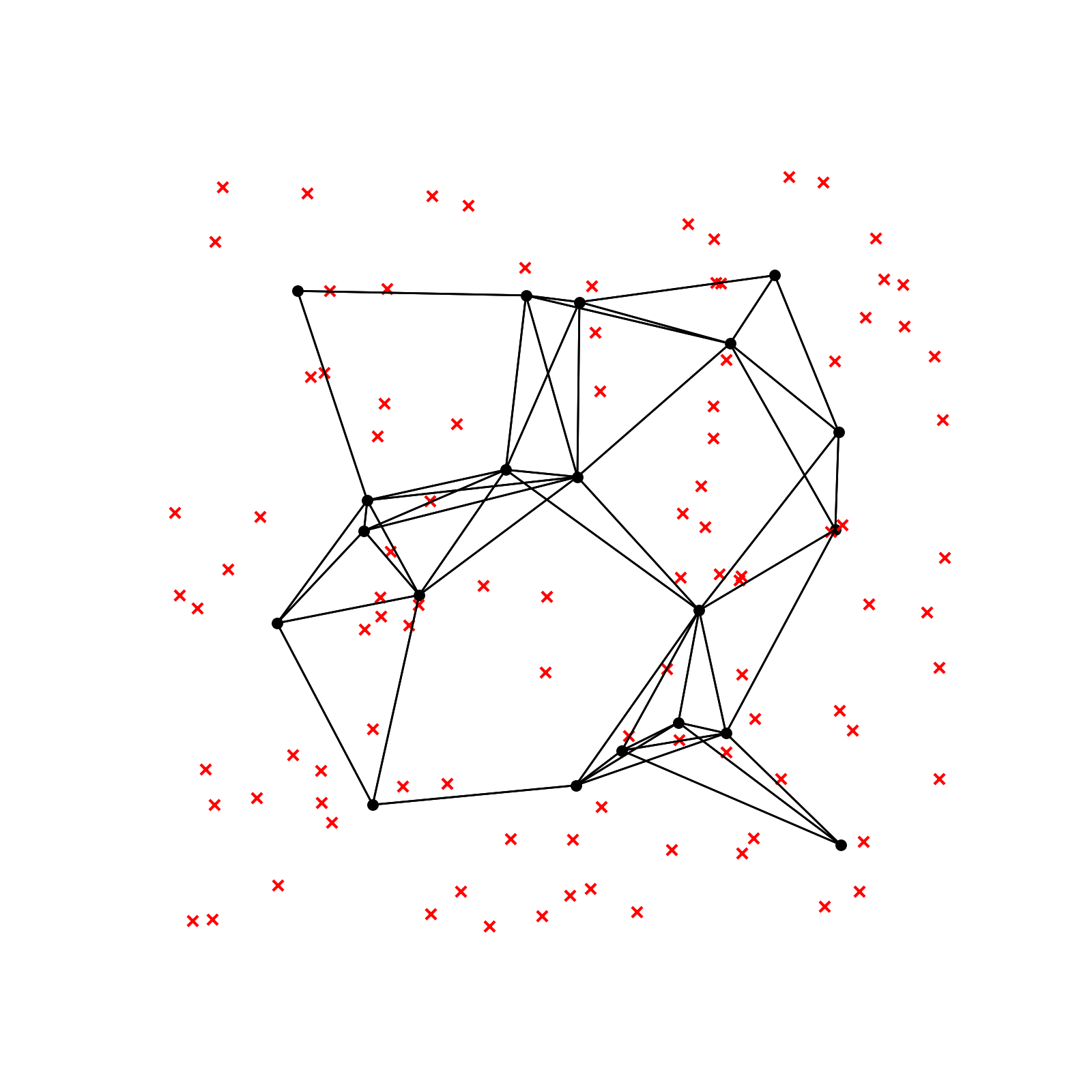}}
    \subfloat[][Correctly identified transmitters]{\includegraphics[width=0.3\textwidth]{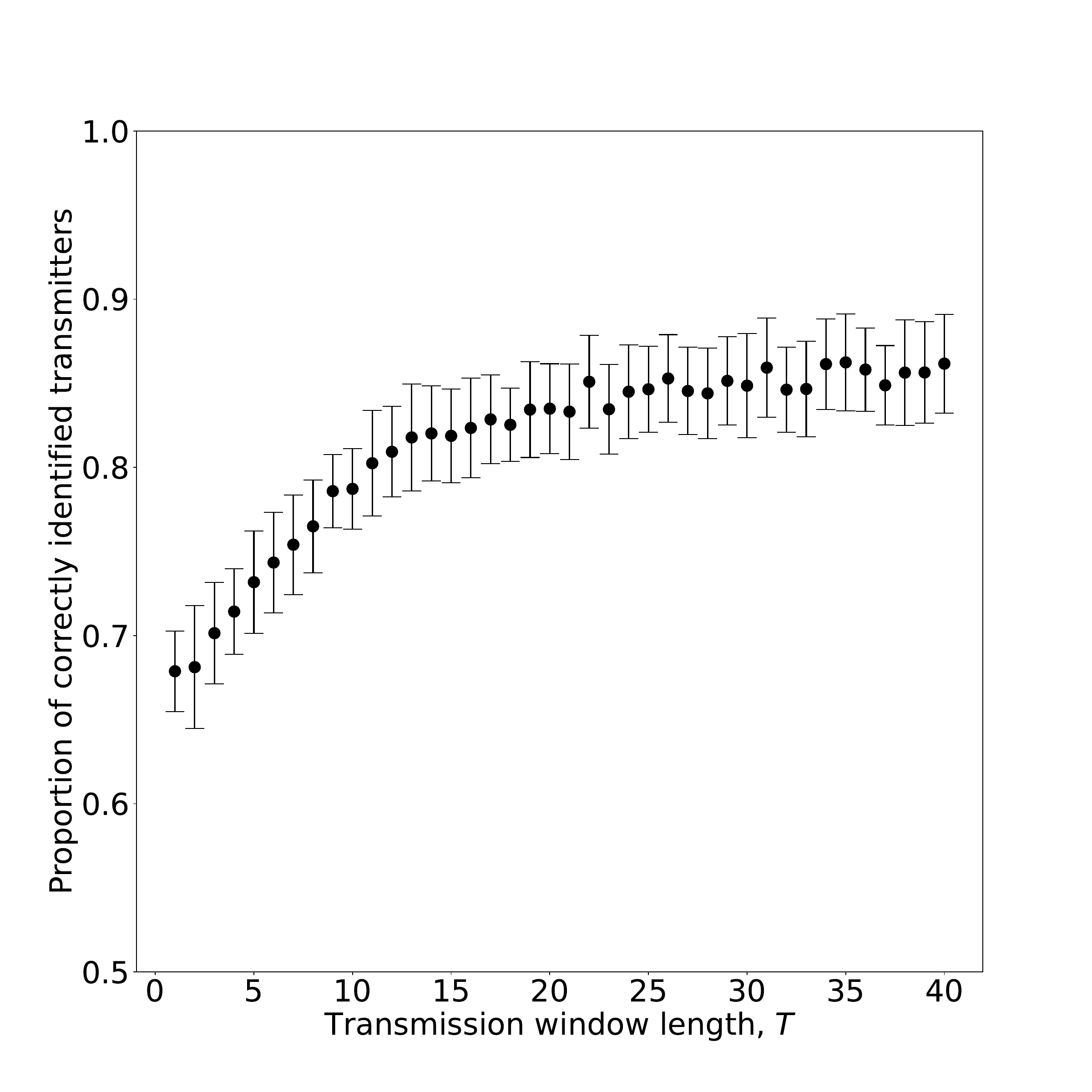}}
    \subfloat[][Correctly identified links]{\includegraphics[width=0.3\textwidth]{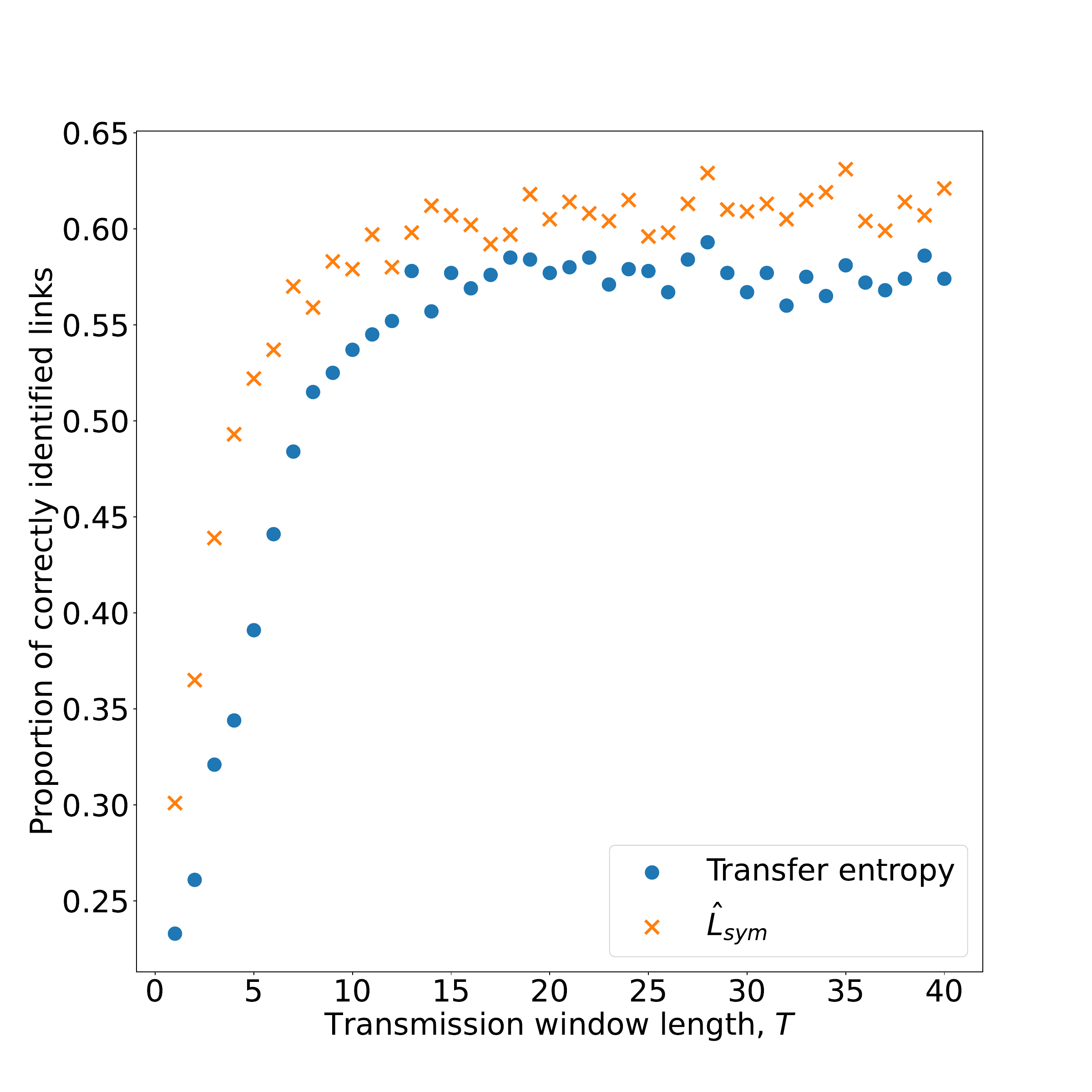}} \\
    \subfloat[][Adjacency matrix]{\includegraphics[width=0.3\textwidth]{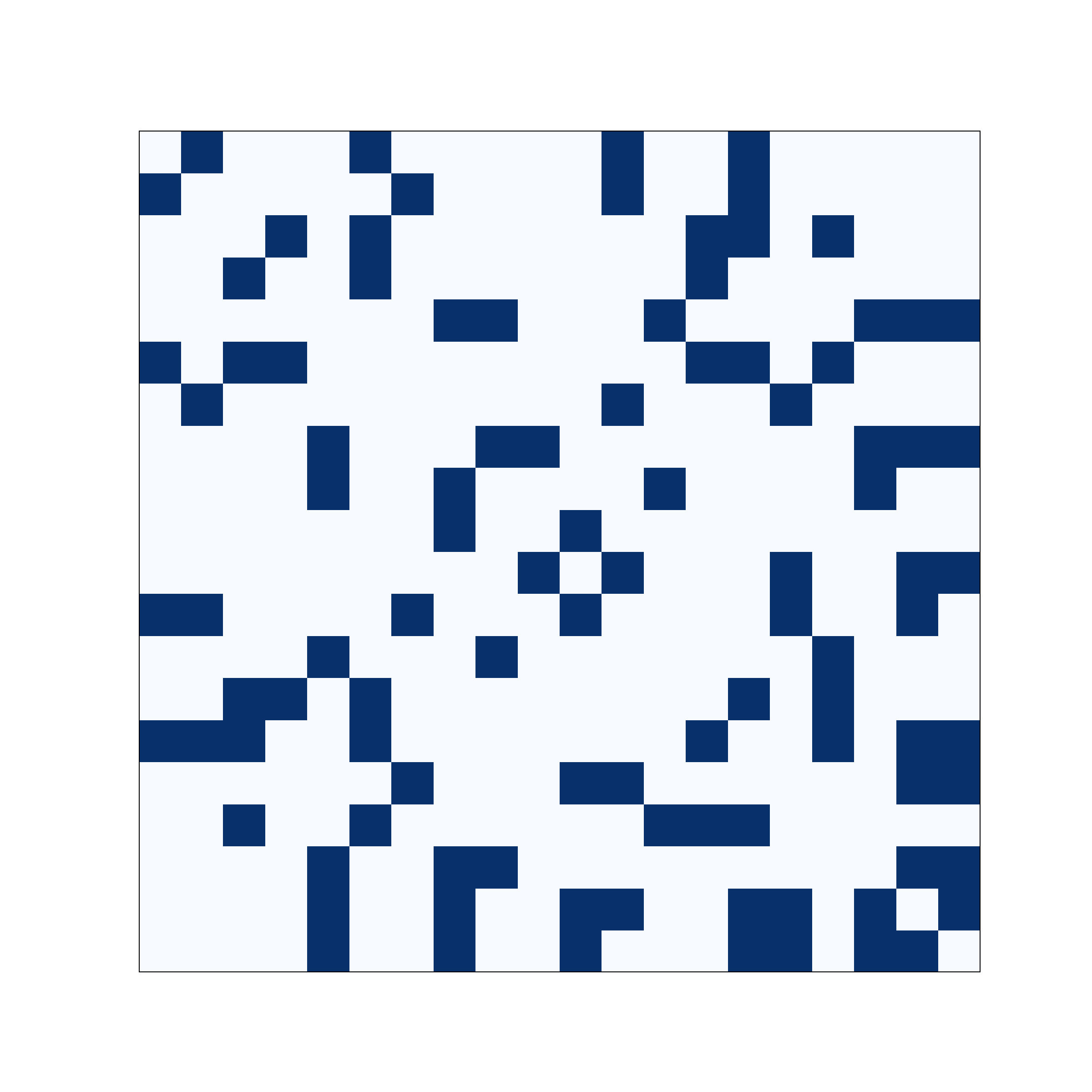}}
    \subfloat[][Our estimate $\hat{L}_{\text{sym}}$]{\includegraphics[width=0.3\textwidth]{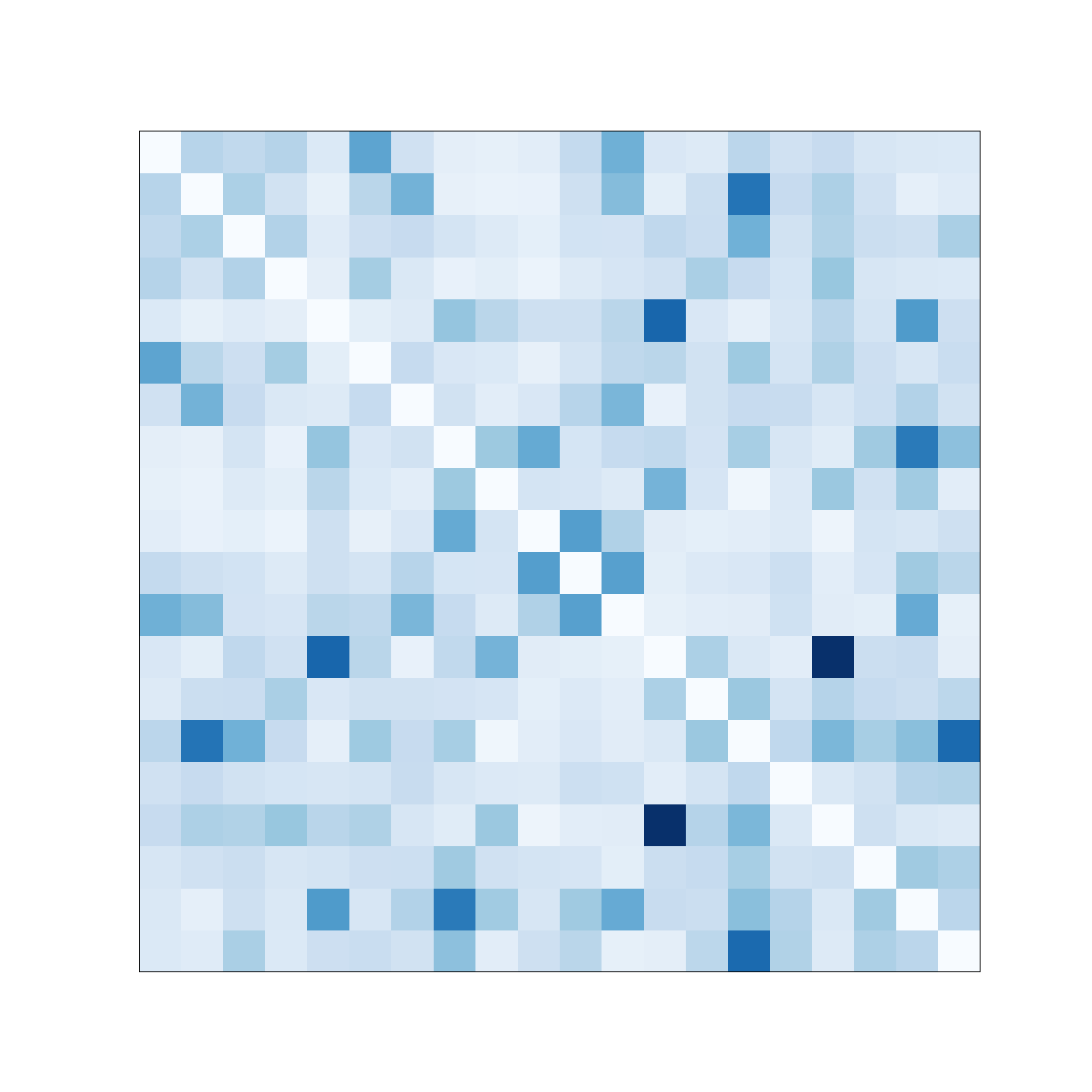}}
    \subfloat[][Transfer entropy matrix]{\includegraphics[width=0.3\textwidth]{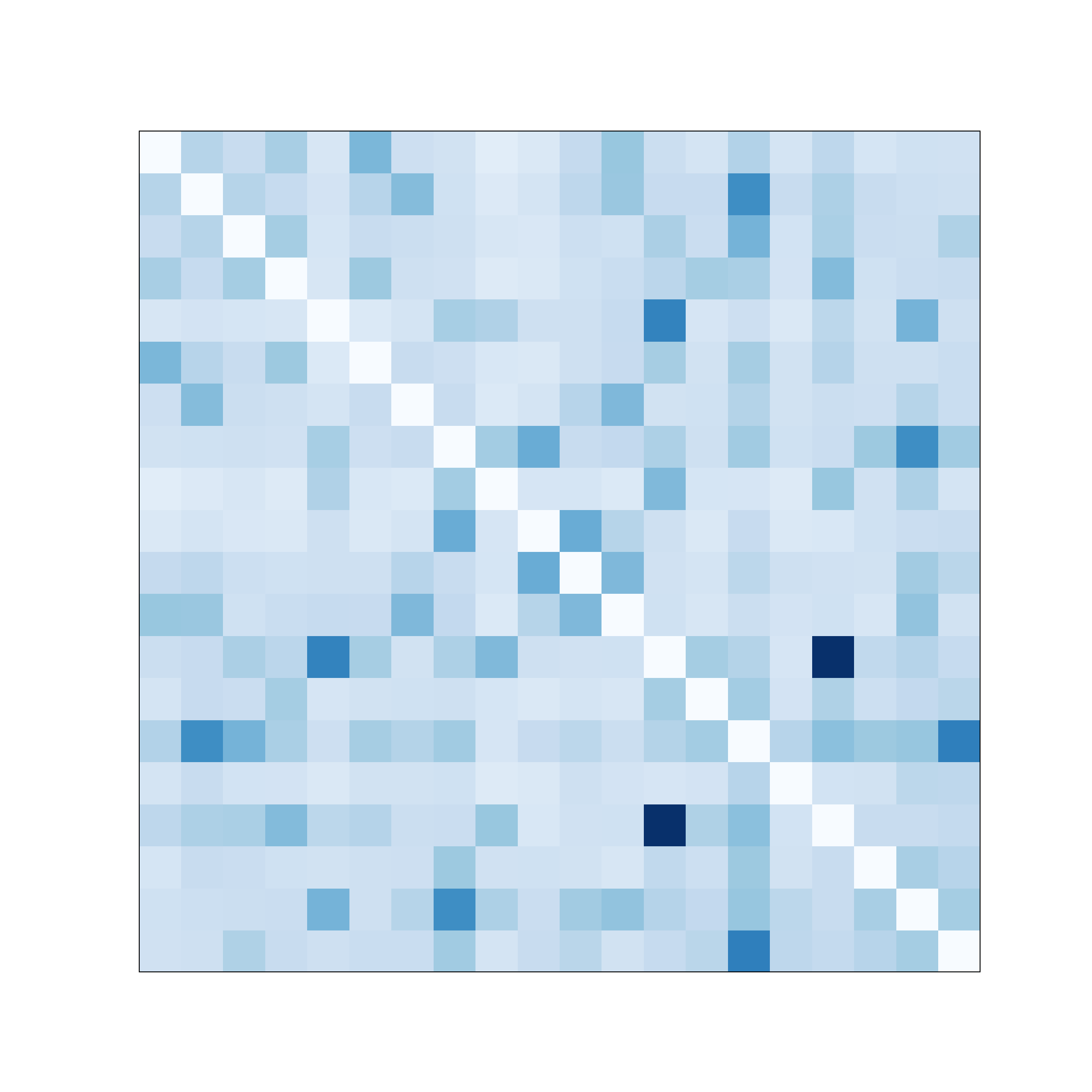}} 
    \vspace{0.5cm}
    \caption{
    Experimental results for a random network constructed by placing $25$ transmitters and $100$ RF sensors at random in a $120m \times 120m$ square box. Figure (a) displays the position of the transmitters (black circles) and of the sensors (red crosses) for a particular random realisation, together with the resulting network topology. Figure (b) shows the number of power bursts correctly associated to the corresponding transmitters for a varying transmission window length (we display the average number over 20 random realisations, together with a 95\% confidence interval). Figure (c) represents the proportion of correctly identified links by our methodology and transfer entropy, for a varying length of the transmission window (averaged over  20 random realisations).  Figure (d) displays the adjacency matrix of the network. Figure (e) the estimated matrix $\hat{L}_{\text{sym}}$ for a transmission window length $T=10s$. Figure (f) the  transfer entropy matrix ($T=10s$).
    }
    \label{fig:box}
\end{figure*}

\begin{figure*}[h!]
    \centering
    \subfloat[][Adjacency matrix]{\includegraphics[width=0.3\textwidth]{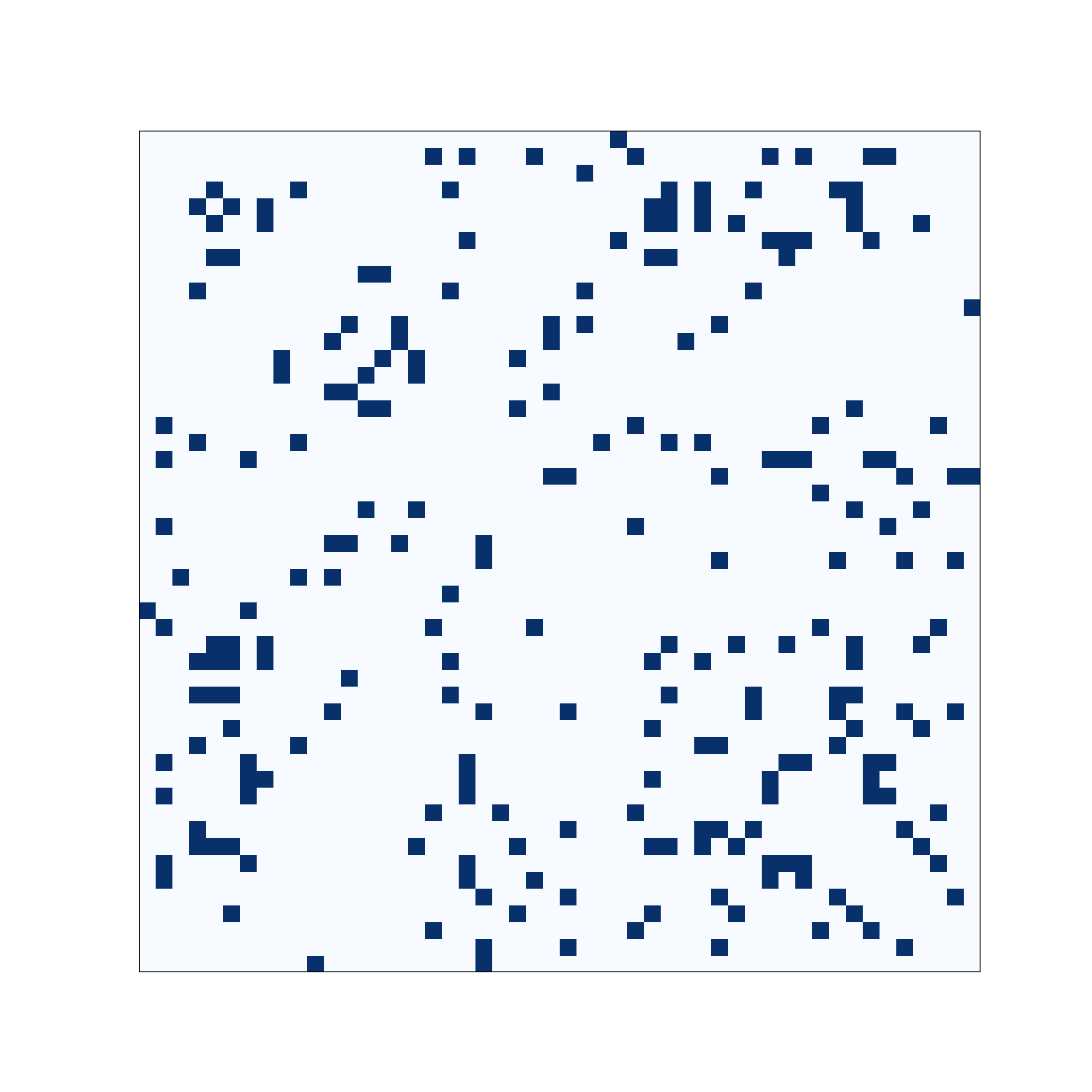}}
    \subfloat[][Our estimate $\hat{L}_{\text{sym}}$]{\includegraphics[width=0.3\textwidth]{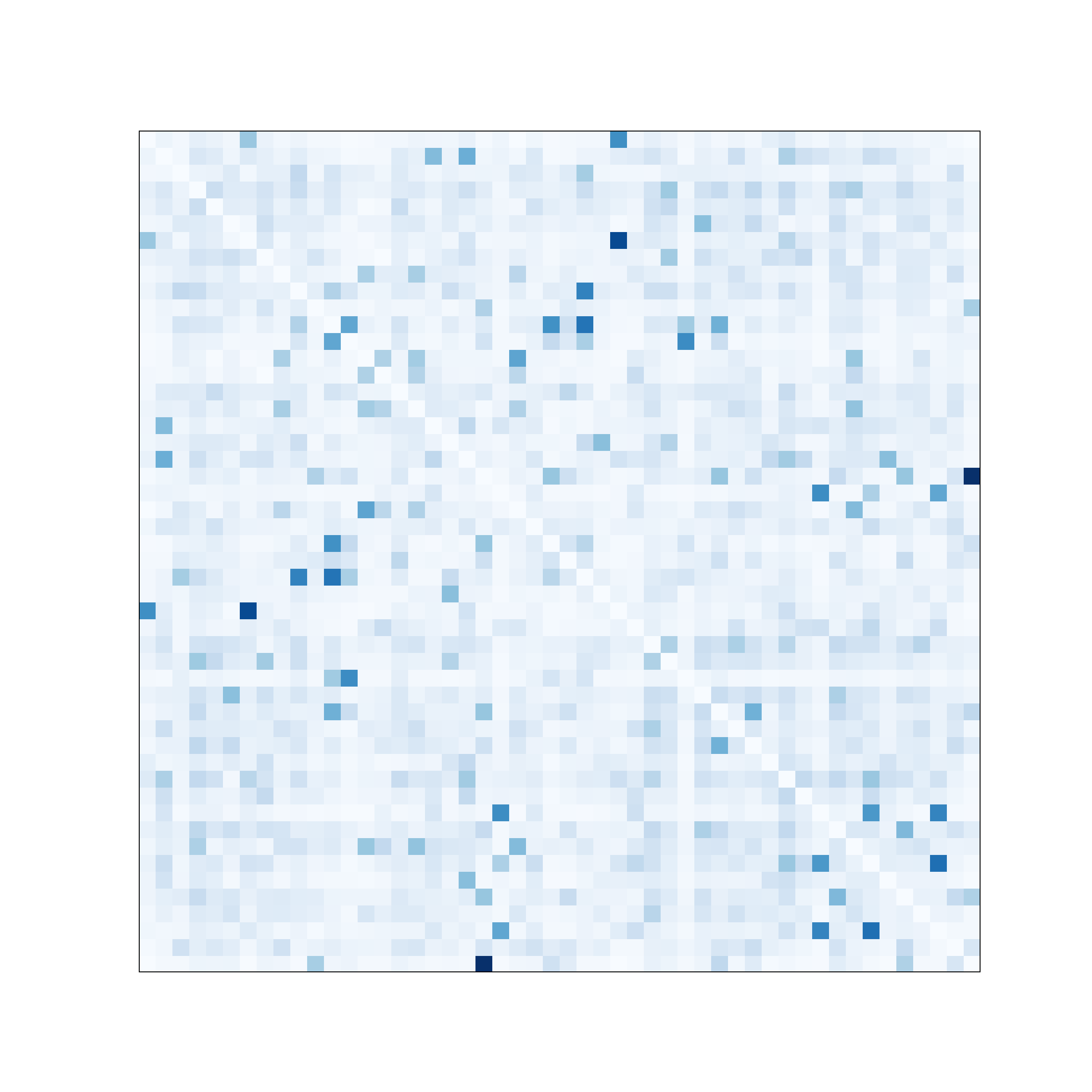}}
    \subfloat[][Transfer entropy matrix]{\includegraphics[width=0.3\textwidth]{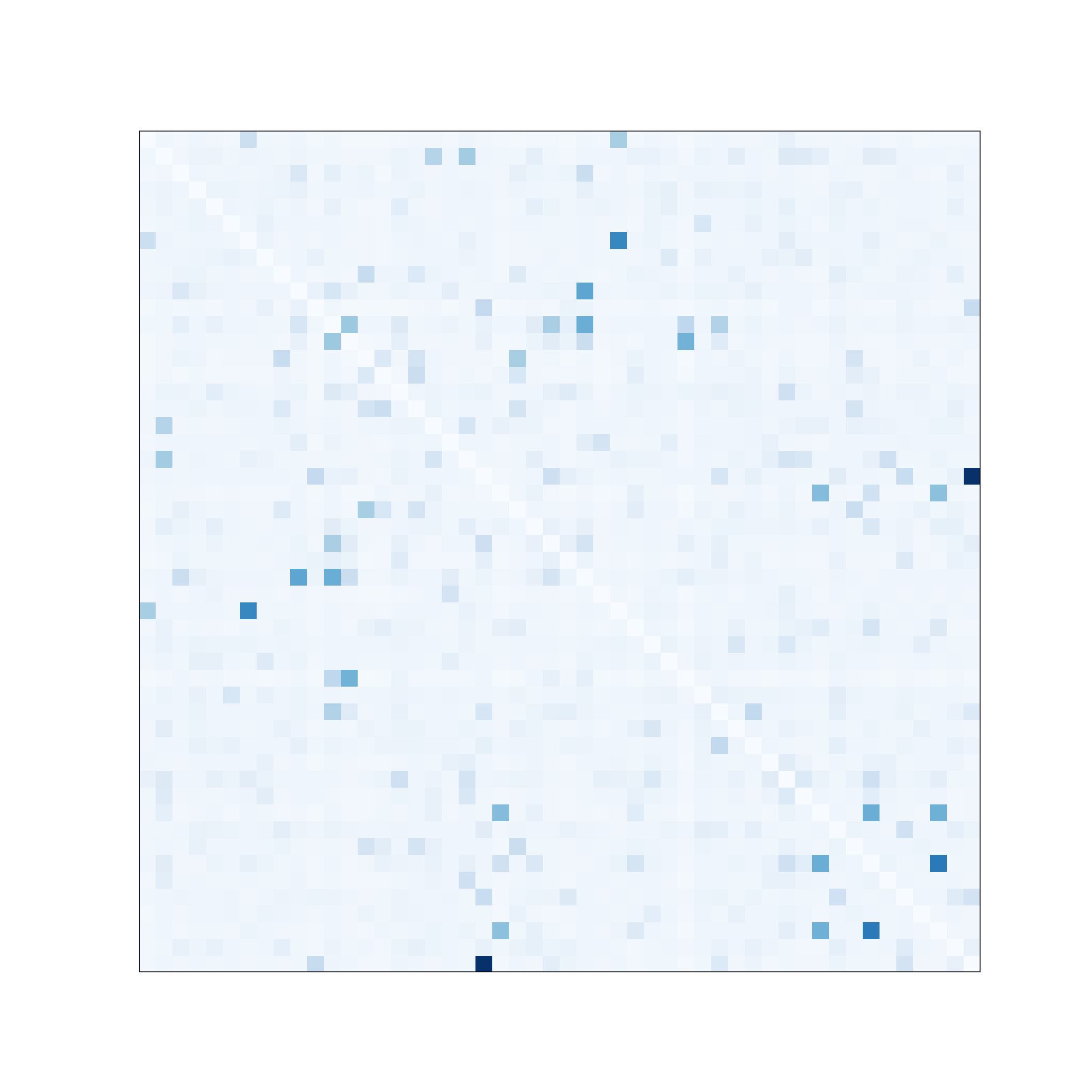}} \\
    \subfloat[][Correctly identified transmitters]{\includegraphics[width=0.3\textwidth]{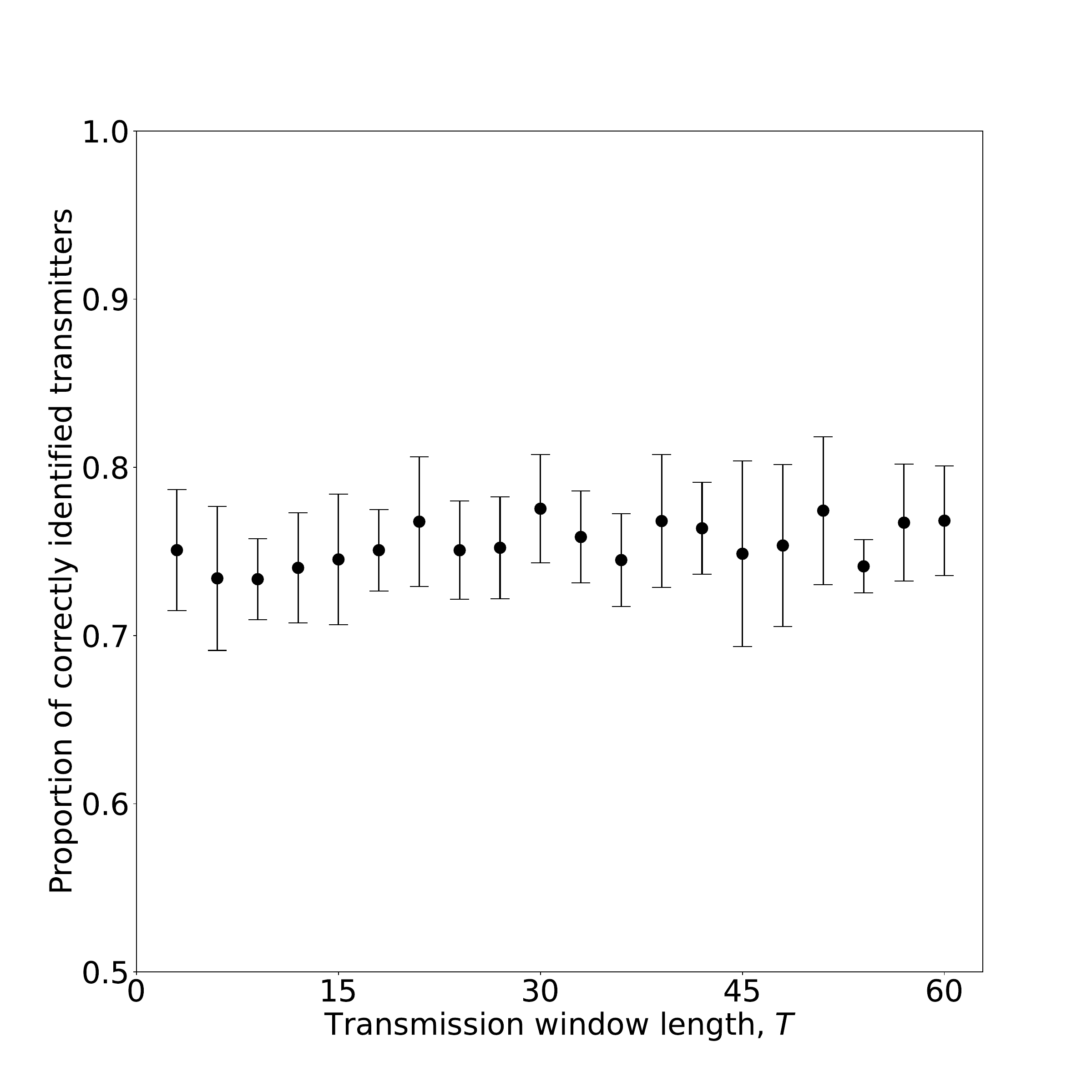}}
    \subfloat[][Correctly identified links]{\includegraphics[width=0.3\textwidth]{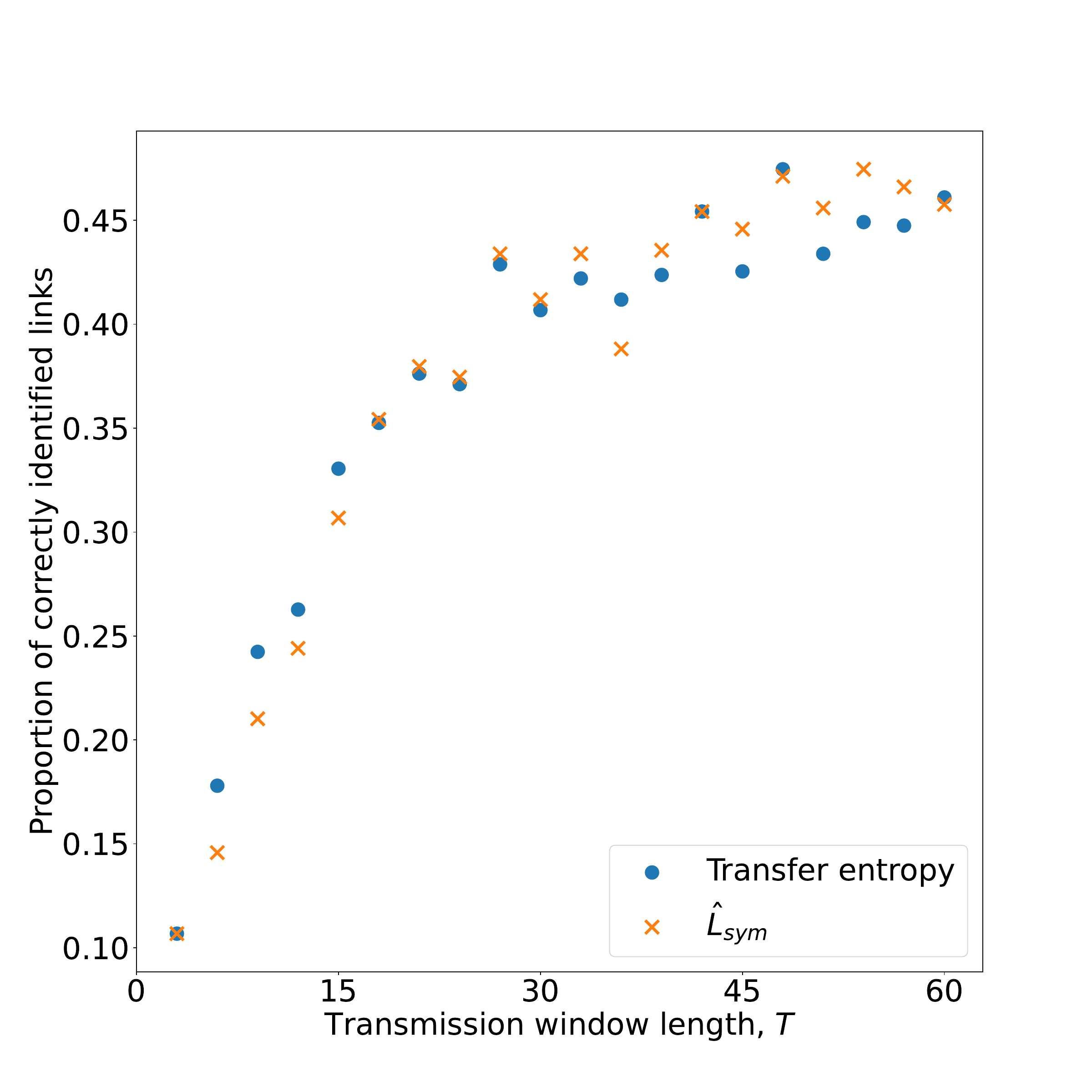}} 
    \subfloat[][Identified vs actual links in the network]{\includegraphics[width=0.3\textwidth]{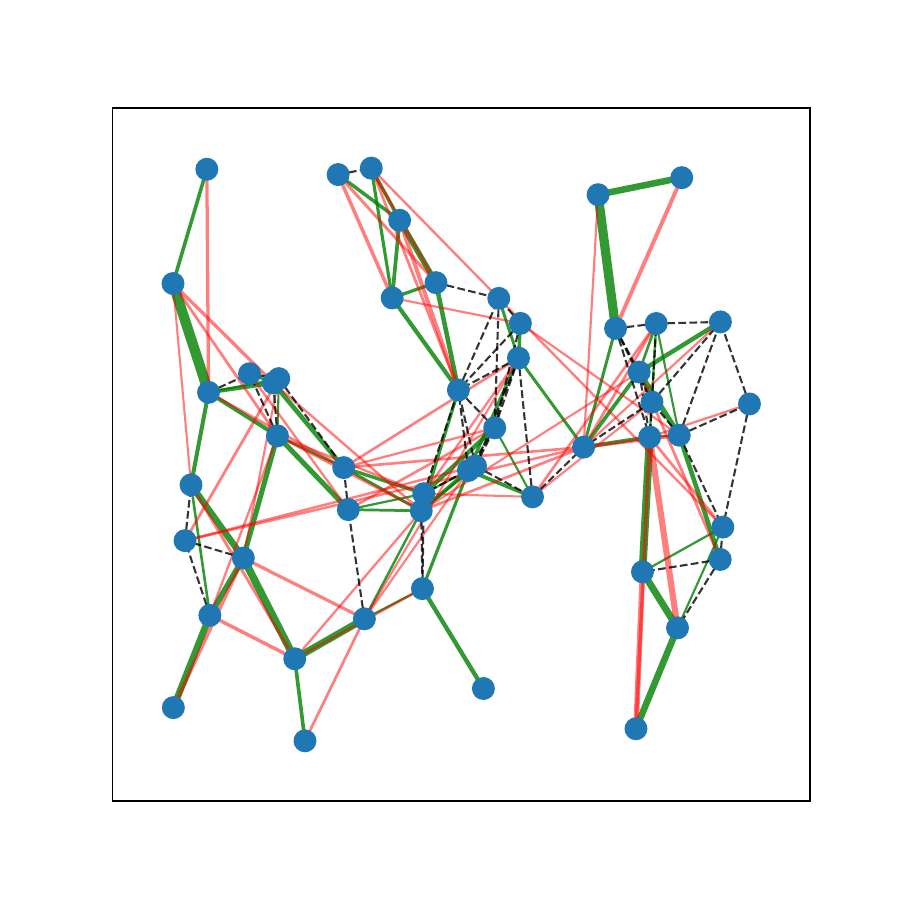}}
 
    \vspace{0.5cm}
    \caption{
    Experimental results for a random network constructed by placing $50$ transmitters and $150$ RF sensors at random in a $340m \times 200m$ rectangular box. 
   Figure (a) displays the adjacency matrix of the network for a random realisation. Figure (b) an estimated matrix $\hat{L}_{\text{sym}}$ for a transmission window length $T=60s$. Figure (c) the corresponding transfer entropy matrix ($T=60s$). 
Figure (d) shows the number of power bursts correctly associated to the corresponding transmitters for a varying transmission window length (we display the average number over 5 random realisations together with a 95\% confidence interval). Figure (e) represents the proportion of correctly identified links by our methodology and transfer entropy, for a varying length of the transmission window (averaged over 5 random realisations). Figure (f) displays (for a specific random realisation with window length $T=60s$) the correctly identified links by $\hat{L}_{\text{sym}}$ in green, and false positives in red; their thickness is proportional to the corresponding entry of $\hat{L}_{\text{sym}}$. Dashed black lines correspond to links in the network that do not appear among the top entries of $\hat{L}_{\text{sym}}$.
    }
    \label{fig:bigbox}
\end{figure*}

\section{CONCLUSION AND FUTURE WORK}
\label{sec:conclusion}
In this work, we have introduced a novel technique for estimating the
topological structure of wireless networks. We demonstrated that our
approach provably recovers information about the network topology under a
simplified Markovian model of network dynamics. Additionally, our
experimental results showed that the proposed method is competitive
with transfer entropy in realistic network simulations, consistently
outperforming it under conditions of network congestion.

We believe that, in addition to being competitive with existing
methods, our approach also provides valuable insights into the
behaviour of traditional techniques. Nevertheless, further research is
needed to enhance both the performance and the theoretical
understanding of our method.
Below, we outline several potential
directions for future work.

First, our method currently requires discretisation in the time domain
and does not account for variable network delays. To address these
limitations, one possible direction would be to consider
continuous-time Markov chains. In this framework, a random clock is
assigned to each node, and a particle moves from the current node once
the clock expires. Since the delays between consecutive packet
transmissions would vary, any estimation algorithm would need to
account for these fluctuating delays.

Another potential generalisation of our model is to employ branching
processes instead of random walks on graphs. In this scenario,
particles could appear, move, die, or duplicate according to specific
probability distributions. This approach could help address the
limitations of our current model, which does not adequately account
for varying traffic levels within the network.

Finally, measures such as transfer entropy and Granger causality are
often used as features in more sophisticated estimation
procedures. For instance, Testi and Giorgetti~\cite{TG20} utilise
these measures to train a neural network model. It would be
interesting to explore the integration of our techniques into machine
learning frameworks for network inference.

\section{PROOFS}
\label{sec:proof}
In this section we provide a proof of our main theoretical result, \cref{thm:main}. Informally, we will show that, for $T$ large enough, $M(u,v) \approx P(u,v) - (k-1)\pi(v)$ and $N(v) \approx T \cdot k \cdot \pi(v)$. Therefore, by computing $M$ and $N$, we are able to approximate $P$. We start by showing that $N(v)$ is approximately a multiple of the stationary distribution at $v$. For technical reason, we also need a bound on the following quantities ($u,v \in V$):
\begin{align}
&\widetilde{N}(u,v) = \sum_{t=1}^{T} \sum_{i\ne j}   \mathbbm{1}\{X_{t}^{(i)} = u, X_{t}^{(j)} = v \}, \label{eq:tildeN} \\
&Q(u,v) = \sum_{t=1}^{T} \sum_{i,i'} \mathbbm{1}\{X_{t}^{(i)} = u, X_{t}^{(i')} = v\}  \label{eq:Qdef}.
\end{align}

We will use the following recent matrix Bernstein inequality for Markov chains~\cite{matrixMarkovBernstein}.

\begin{thm}
\label{thm:matrixMarkovBernstein}
Let $\{X_i\}_{i=1}^\infty$ be a Markov chain with finite state space $V$, stationary distribution $\pi$, and positive spectral gap $\lambda > 0$. Let $f \colon V \to \mathbb{C}^{d \times d}$ such that: (1) $f(u)$ is Hermitian and there exists $A > 0$ such that  $\|f(u)\| \le A$ for all $u \in V$; (2) $\Ex{v \sim \pi}{f(v)} = 0$; (3) $\|\Ex{v \sim \pi} {f(v)^2}\| \le \sigma^2$. Then, for any $\epsilon \in (0,1)$ and $T \in \mathbb{N}$, it holds that
\[
\Pro{\left\| \sum_{t=1}^T f(X_t)\right\| > \epsilon} \le 2 d^2 \exp\left( - \frac{\epsilon^2 \lambda /72 }{T \sigma^2 + A \epsilon} \right).
\]
\end{thm}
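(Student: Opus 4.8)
The plan is to adapt the matrix Laplace-transform (Chernoff) method to dependent summands, replacing the independence-based factorisation of the matrix moment generating function (MGF) with a spectral argument that exploits the gap $\lambda$. Throughout, write $Y_T = \sum_{t=1}^T f(X_t)$, a Hermitian random matrix. First I would reduce the two-sided operator-norm tail to a one-sided largest-eigenvalue tail: since $\|Y_T\| = \max\{\lambda_{\max}(Y_T), \lambda_{\max}(-Y_T)\}$ and $-f$ satisfies the same three hypotheses as $f$, a union bound costs only a factor $2$. Then the exponential Markov inequality in its matrix form gives, for every $\theta > 0$,
\[
\Pro{\lambda_{\max}(Y_T) > \epsilon} \le e^{-\theta\epsilon}\,\E{\operatorname{tr}\exp(\theta Y_T)}.
\]
Everything thus reduces to controlling the trace-MGF $g(\theta) := \E{\operatorname{tr}\exp(\theta Y_T)}$.

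The core difficulty is that the $f(X_t)$ are neither independent nor commuting, so Tropp's route via Lieb concavity (which would factor $g$ into single-variable MGFs) is unavailable. The plan is to represent $g(\theta)$ through tilted transition operators. Viewing matrix-valued observables as elements of $L^2(\pi)\otimes\mathbb{C}^{d\times d}$ and associating to each state the completely positive map $\Phi_v(X) = e^{\theta f(v)/2}\,X\,e^{\theta f(v)/2}$ on the $d^2$-dimensional Hilbert--Schmidt space, one unfolds $\operatorname{tr}\exp(\theta Y_T)$ into a trace of an alternating product of the kernel $P$ and the maps $\Phi_{X_t}$. Linearising the non-commuting exponential (via the multivariate Golden--Thompson inequality, or a Lie--Trotter/interpolation estimate) is what makes this unfolding valid; the factor $d^2$ enters precisely here, reflecting that the natural state space for the unfolded dynamics is the $d^2$-dimensional space of $d\times d$ matrices, so that passing from a trace to an operator norm on that space costs its dimension.

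Next I would decouple the $T$ steps using the spectral gap. Split the transition operator into its stationary rank-one part and a remainder that contracts at rate $\sigma_2 = \sqrt{1-\lambda}$ on the orthogonal complement of constants (for non-reversible $P$ this is the second singular value in $L^2(\pi)$, equivalently the gap of the multiplicative reversibilisation $PP^*$). This bounds the $T$-fold operator product by a power of the single-step averaged operator $E_\theta := \Ex{v\sim\pi}{e^{\theta f(v)}}$, up to a correction quantitatively controlled by $\lambda$; this is the origin of the factor $1/\lambda$ in the exponent. A Taylor expansion of the single-step operator, using $\Ex{v\sim\pi}{f(v)} = 0$ to kill the first-order term, $\|\Ex{v\sim\pi}{f(v)^2}\| \le \sigma^2$ to bound the second-order term, and $\|f(v)\| \le A$ to control the remainder, yields for $\theta$ in the regime $\theta A \lesssim 1$ a single-step estimate $\|E_\theta\| \le \exp(c\,\theta^2\sigma^2)$, and hence a bound of the shape $g(\theta) \le 2d^2 \exp(c' T\theta^2\sigma^2/\lambda)$ on that regime.

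Finally, I would combine this with the $e^{-\theta\epsilon}$ prefactor and optimise over $\theta$ in the Bernstein fashion, choosing $\theta \asymp \epsilon\lambda/(T\sigma^2 + A\epsilon)$ and truncating at the boundary $\theta A \asymp 1$ of the valid regime; balancing the sub-Gaussian ($T\sigma^2$-dominated) and sub-exponential ($A\epsilon$-dominated) regimes produces the stated form, with the explicit constant $72$ absorbing the Taylor remainders and the truncation. The main obstacle is the trace-MGF step: establishing the operator-product bound that simultaneously tames non-commutativity (requiring multivariate Golden--Thompson or a careful interpolation) and Markov dependence (requiring the gap-based decoupling), while keeping explicit constants and the correct $d^2$ dimension dependence. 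By contrast, the symmetrisation, the exponential Markov step, and the final optimisation over $\theta$ are routine.
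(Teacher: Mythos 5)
The paper does not prove this statement at all: Theorem~\ref{thm:matrixMarkovBernstein} is imported from the reference cited as~\cite{matrixMarkovBernstein} (``We will use the following recent matrix Bernstein inequality for Markov chains'') and is applied as a black box in the proofs of Lemma~\ref{lem:piapprox} and Theorem~\ref{thm:main}. There is therefore no internal proof to compare your attempt against, and any review must judge the attempt on its own merits.

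On those merits, your plan follows the established route to bounds of this type, namely the strategy behind the matrix expander Chernoff bound of Garg, Lee, Song and Srivastava and its Markov-chain extensions: symmetrise to reduce to a largest-eigenvalue tail, apply the matrix Laplace-transform method, unfold the trace MGF into alternating products of the transition kernel and the tilting maps $\Phi_v$, control non-commutativity with the multivariate Golden--Thompson inequality (which is indeed where the $d^2$ dimension factor originates), decouple the time steps via the spectral gap (the origin of $1/\lambda$), Taylor-expand the tilted one-step operator using the mean-zero and variance hypotheses, and optimise over $\theta$. Your final arithmetic is also consistent: with $\theta \asymp \epsilon\lambda/(T\sigma^2 + A\epsilon)$ the constraint $\theta A \lesssim 1$ is automatic because $\lambda \le 1$, so the Bernstein form of the exponent does come out. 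The weakness is that, as a proof, this remains a skeleton: the two steps you explicitly defer --- (i) making the unfolding rigorous via multivariate Golden--Thompson, which yields an integral over imaginary-time shifts rather than a clean product and must then be converted into an operator-norm statement on the $d^2$-dimensional Hilbert--Schmidt space, and (ii) the quantitative contraction estimate that replaces the $T$-fold alternating product by powers of the averaged tilted operator $E_\theta$ --- constitute essentially the entire technical content of the theorem, and the explicit constant $72$ cannot be recovered without executing them. So you have correctly identified the strategy and the source of every factor in the bound, but the statement is not yet established by what you have written; given that the paper itself treats this theorem as an external tool, citing the literature (as the paper does) would be the appropriate resolution rather than reproving it.
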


We use \cref{thm:matrixMarkovBernstein} to bound the probability $N$, $\tilde{N}$, and $Q$ deviate from their expectation.

\begin{lem}
\label{lem:piapprox}
Let $\epsilon \in (0,1/2), \delta \in (0,1)$. Let $T > C k^2 (\pi^\star/\pi_\star)  \ln(kn/\delta)/(\epsilon^2 \lambda(P) \pi_\star)$ for some large enough universal constant $C > 0$. Let $D_N$ be the $n \times n$ diagonal matrix such that $D_N(u,u) = N(u)$. Then, with probability at least $1-\delta$, for all $v \in V$, the following holds:
\begin{align*}
&\left|N(v) - kT\pi(v)\right| \le \epsilon kT\pi(v), \\
&\left\|\widetilde{N} -  T k(k-1) \pi \pi^T\right\|  \le \epsilon  T k(k-1) \pi_\star \\
&\left\|Q - T k(k-1) \pi \pi^T - D_N\right\| \le T k^2 \pi^\star.
\end{align*}
\end{lem}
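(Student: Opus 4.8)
The plan is to recognise all three statistics as additive functionals of a single Markov chain and then invoke the matrix Bernstein inequality \cref{thm:matrixMarkovBernstein}. Set $Y_t = (X_t^{(1)},\dots,X_t^{(k)}) \in V^k$; since the chains are independent with common kernel $P$ and each is stationary, $\{Y_t\}$ is a stationary Markov chain on $V^k$ with transition matrix $P^{\otimes k}$ and law $\pi^{\otimes k}$. The enabling observation is that the product chain has the \emph{same} spectral gap as $P$: with $P^*$ the adjoint of $P$ in $\ell^2(\pi)$, we have $(P^{\otimes k})^*(P^{\otimes k}) = (P^*P)^{\otimes k}$, whose eigenvalues are products of the eigenvalues $1=\sigma_1(P)^2\ge\sigma_2(P)^2\ge\cdots$ of $P^*P$; the two largest products are $1$ and $\sigma_2(P)^2$, so $\sigma_2(P^{\otimes k})=\sigma_2(P)$ and $\lambda(P^{\otimes k})=\lambda(P)$. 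Introducing the occupancy vector $c_t\in\mathbb{Z}_{\ge 0}^n$ with $c_t(u)=|\{i:X_t^{(i)}=u\}|$, one has the identities $\sum_{i,i'}e_{X_t^{(i)}}e_{X_t^{(i')}}^{\top}=c_tc_t^{\top}$ and $\sum_{i\ne j}e_{X_t^{(i)}}e_{X_t^{(j)}}^{\top}=c_tc_t^{\top}-\mathrm{diag}(c_t)$, so $N$, $\widetilde{N}$ and $Q$ are each sums over $t$ of a function of $Y_t$. All expectations below are under $\pi^{\otimes k}$.

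For the first bound I would argue vertex by vertex (i.e.\ take $d=1$). Writing $N(v)=\sum_{t=1}^{T-1}c_t(v)$ and centering $f_v(y)=c(v)-k\pi(v)$, stationarity gives $\E{c_t(v)}=k\pi(v)$ and independence the variance $k\pi(v)(1-\pi(v))\le k\pi^\star=:\sigma^2$, while $|f_v|\le k=:A$. Applying \cref{thm:matrixMarkovBernstein} with deviation $\eta=\tfrac12\epsilon kT\pi(v)$, spectral gap $\lambda(P)$, and a union bound over the $n$ vertices shows the failure probability drops below $\delta$ once $T\gtrsim \lambda(P)^{-1}\epsilon^{-2}\pi_\star^{-1}\ln(n/\delta)$, which the stated hypothesis exceeds comfortably; the gap between $\E{N(v)}=(T-1)k\pi(v)$ and the target centre $kT\pi(v)$ costs only an additive $k\pi(v)$, absorbed for $T\ge 1/\epsilon$.

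The second bound is the crux and requires the genuine matrix inequality with $d=n$. Because the inner sum runs over all ordered pairs $i\ne j$, $\widetilde{N}$ is symmetric, so $f(y)=cc^{\top}-\mathrm{diag}(c)-k(k-1)\pi\pi^{\top}$ is Hermitian, and independence gives $\E{cc^{\top}-\mathrm{diag}(c)}=k(k-1)\pi\pi^{\top}$, whence $\widetilde{N}-Tk(k-1)\pi\pi^{\top}=\sum_{t=1}^{T}f(Y_t)$ is centred. The norm parameter is immediate: $\|cc^{\top}\|=\|c\|^2\le k^2$ and $\|\mathrm{diag}(c)\|\le k$ give $A=O(k^2)$. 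The main obstacle is the variance parameter $\sigma^2=\|\E{f^2}\|$, a fourth-moment computation in $c$; care is needed since the extremal test direction is the (near) all-ones direction rather than a coordinate direction. Using $x^{\top}\E{f^2}x=\E{\|fx\|^2}=\E{\|c\|^2(c^{\top}x)^2}-\|k(k-1)(\pi^{\top}x)\pi\|^2+(\text{lower order})$ together with $\E{\|c\|^2}=k+k(k-1)\|\pi\|^2=O(k)$ (using the modelling assumption $k\|\pi\|\ll1$) and $\E{(c^{\top}x)^2}\le k\pi^\star+k^2\langle\pi,x\rangle^2$, the dominant factorised contribution is $O(k^3\pi^\star)$, so $\sigma^2=O(k^3\pi^\star)$. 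Feeding $A=O(k^2)$, $\sigma^2=O(k^3\pi^\star)$, $\lambda=\lambda(P)$ and $\eta=\epsilon Tk(k-1)\pi_\star$ into \cref{thm:matrixMarkovBernstein} with $d=n$, the binding requirement $\eta^2\lambda\gtrsim T\sigma^2\ln(n/\delta)$ becomes $T\gtrsim \lambda(P)^{-1}\epsilon^{-2}\pi^\star\pi_\star^{-2}k^{-1}\ln(n/\delta)$, again implied by the hypothesis.

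The third bound then follows with no further probabilistic work. From the identities above, $Q=\sum_t c_tc_t^{\top}=\widetilde{N}+\mathrm{diag}\big(\sum_{t=1}^{T}c_t\big)$ while $D_N=\mathrm{diag}\big(\sum_{t=1}^{T-1}c_t\big)$, so $Q-D_N=\widetilde{N}+\mathrm{diag}(c_T)$. Hence, on the event provided by the second bound, $\|Q-Tk(k-1)\pi\pi^{\top}-D_N\|\le\|\widetilde{N}-Tk(k-1)\pi\pi^{\top}\|+\|\mathrm{diag}(c_T)\|\le\epsilon Tk(k-1)\pi_\star+k\le Tk^2\pi^\star$, the last inequality holding once $T$ exceeds the stated threshold. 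A union bound over the three events, each with failure probability $\delta/3$, finishes the argument. The only genuinely delicate step is the fourth-moment estimate $\sigma^2=O(k^3\pi^\star)$ for $\widetilde{N}$; everything else is the product-chain reduction or elementary algebra.
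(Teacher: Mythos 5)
Your proposal is correct in substance and reaches the same three bounds, but it takes a genuinely different route from the paper on the two probabilistic estimates. For $N(v)$ the paper applies the scalar Bernstein inequality to each chain $X^{(i)}$ separately (with the simple function $\mathbbm{1}\{u=v\}(1-\pi(v))$, variance $\le \pi(v)$) and union-bounds over the $kn$ pairs $(i,v)$ --- which is where the $\ln(kn/\delta)$ in the hypothesis comes from --- whereas you fold all $k$ chains into the occupancy vector $c_t$ of the full product chain on $V^k$ and union-bound only over $v$. For $\widetilde{N}$ the divergence is sharper: the paper fixes a pair $i\ne j$, applies \cref{thm:matrixMarkovBernstein} to the \emph{pair} chain on $V\times V$ with an indicator-minus-$\pi\pi^T$ matrix function (so $A\le 2$ and a one-line variance bound $\le 2\pi^\star$), and then sums the $k(k-1)$ per-pair deviations by the triangle inequality; you instead treat $\widetilde{N}-Tk(k-1)\pi\pi^T$ as a single centred matrix functional $f(Y_t)=c_tc_t^{\top}-\mathrm{diag}(c_t)-k(k-1)\pi\pi^{\top}$ of the $k$-fold product chain. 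Your way buys a cleaner statement (one application of the inequality, no union bound over pairs) at the price of the fourth-moment computation of $\|\mathbb{E}[f^2]\|$, which the paper's pairwise decomposition avoids entirely; the paper's way keeps every variance calculation trivial at the price of bookkeeping over $O(k^2)$ events. Your identification of the product-chain spectral gap with $\lambda(P)$ is legitimate and is the same fact the paper uses for the pair chain, and your derivation of the third bound (via $Q-D_N=\widetilde{N}+\mathrm{diag}(c_T)$) is actually more careful about the $t$-range mismatch than the paper's one-line remark.

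One caveat on the only step you yourself flag as delicate: the estimate $\sigma^2=O(k^3\pi^\star)$ is only sketched, and as written it leans on the modelling assumption $k\|\pi\|_2\ll 1$ (via $\mathbb{E}[\|c\|^2]=O(k)$), which is \emph{not} a hypothesis of Lemma \ref{lem:piapprox}. The bound does survive without that assumption, but only because the potentially dominant $k^4\|\pi\|_2^4$ contribution from quadruples of distinct indices cancels against $\|k(k-1)(\pi^{\top}x)\pi\|^2$ up to an $O(k^3)$ multiplicity defect, and the remaining coincidence patterns ($j=i'$, etc.) each contribute $O(k^3)\cdot\|\pi\|_2^2\le O(k^3\pi^\star)$; to make the argument self-contained you should carry out that cancellation explicitly rather than invoke \eqref{eq:ass_k}.
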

\begin{proof}
We first fix $v \in V$ and compute the expectation of $\E{N(v)}$:
\begin{align*}
\E{N(v)} &= \sum_{i=1}^k \sum_{t=1}^{T} \Pro{X^{(j)}_t = v} = kT\pi(v),
\end{align*}
where the last equality holds because we assume the chains start from stationarity.

We now have to bound how far $N(v)$ is from its expectation. We do this by applying the scalar version of Theorem \ref{thm:matrixMarkovBernstein}. Let $f \colon V \to \mathbb{R}$ be defined as $f(u) = \mathbbm{1}\{u = v\}(1- \pi(v))$. Clearly,  $|f(u)| \le 1$ for all $u \in V$, $\Ex{u \sim \pi}{f(u)} = 0$ and $\Ex{v \sim \pi}{f(v)^2} \le \pi(v)$. Notice that $N(v) - \E{N(v)} = \sum_{i=1}^k \sum_{t=1}^T f(X^{(i)}_t)$. Let us fix $i \in \{1,\dots,k\}$. Then, by Theorem \ref{thm:matrixMarkovBernstein} and the assumption on $T$, we have that with probability at most $\delta/(k n)$, $\sum_{t=1}^T f(X^{(i)}_t) \le \epsilon T \pi(v)$. The first inequality in the statement of the lemma follows by taking a union bound over all $v \in V$ and $i \in \{1,\dots,k\}$.


For the second inequality, we first compute $\E{\widetilde{N}(u,v)}$:
\begin{align*}
\E{\widetilde{N}(u,v)} &= \sum_{t=1}^{T} \sum_{i\ne j} \E{  \mathbbm{1}\{X_{t}^{(i)} = u, X_{t}^{(j)} = v \}} \\
&= k(k-1) T \pi(u) \pi(v).
\end{align*}

To control the deviation of $\widetilde{N}(u,v)$ from its expectation, fix $i \ne j$ and consider the product chain $(X^{(i)},X^{(j)})$. It has state space $V \times V$ and stationary distribution $\pi((u,v)) = \pi(u) \pi(v)$. Moreover, its spectral gap is equal to the spectral gap $\lambda(P)$ of the single chain $X^{(i)}$. Consider $g_{ij} \colon V \times V \to \mathbb{R}^{n \times n}$ defined as follows.
\[
g_{ij}((u,v))(x,y) = \begin{cases}
1 - \pi(u)\pi(v) &\text{ if } \{u,v\} = \{x,y\} \\
- \pi(x)\pi(y) &\text{ otherwise.}
\end{cases}
\]
Notice that $\widetilde{N} - \E{\widetilde{N}} = \sum_{i\ne j} \sum_{t=1}^{T} g_{ij}(X_{t}^{(i)}, X_{t}^{(j)})$. Clearly, $\|g_{ij}((u,v))\| \le 2$ for all $(u,v) \in V \times V$. Furthermore, simple calculations show that $\|\Ex{(u,v) \sim \pi \times \pi} {g_{ij}((u,v))^2} \le 2 \pi^{\star}\|$. Therefore, with probability at least $\delta/k^2$, Theorem \ref{thm:matrixMarkovBernstein} implies that $\left\|\sum_{t=1}^{T} g_{ij}(X_{t}^{(i)}, X_{t}^{(j)})\right\| \le Tk\pi_{\star}$. A union bound over all $1 \le i \ne j \le k$ yields the second statement.

Finally, the third inequality follows simply from the first two by noticing that $Q(u,v) = \widetilde{N}(u,v) + D_N(u,v)$.
\end{proof}

We now prove concentration for $M$. 
We apply similar techniques as in Kontorovich and Wolfer~\cite{Wolf19}.  We will need the following concentration inequality for matrix martingales, which is due to Tropp~\cite{Tropp}. 

\begin{thm}
\label{thm:matrixfree}
Consider a random process $\{Y_k \colon k=0,1,2,\dots\}$ such that each $Y_k$ is a $d_1 \times d_2$ matrix. Assume this process satisfies the martingale property:
\[
\Ex{k-1}{Y_k} = Y_{k-1} \quad \text{ for all } k=1,2,\dots,
\]
where with $\Ex{k-1}{Y_k}$ we denote the conditional expectation of $Y_k$ w.r.t. ``the past''.

Let $X_k = Y_k - Y_{k-1}$ for $k=1,2,\dots$, which implies $\{X_k\}_k$ is a martingale difference sequence for $\{Y_k\}_k$, i.e.,
\[
\Ex{k-1}{X_k} = 0 \quad \text{ for all } k=1,2,\dots.
\]

Assume the operator norm of $X_k$ is bounded by $R$ almost surely:
\[
\|X_k\| \le R \quad \text{a.s. for all } k=1,2,\dots
\]
Define $W^{(1)}_k = \sum_{j=1}^k \Ex{j-1}{X_j^*X_j }$ and $W^{(2)}_k = \sum_{j=1}^k \Ex{j-1}{X_jX_j^* }$  for $k=1,2,\dots$. Further assume their operator norm is bounded by $\sigma^2$:
\[
\max\left(\|W^{(1)}_k\|,\|W^{(2)}_k\|\right) \le \sigma^2 \quad \text{a.s. for all } k=1,2,\dots
\]

Then, for all $t > 0$, it holds that
\[
\Pro{\exists k \ge 0 \colon \|Y_k\| \ge t} \le (d_1+d_2) \exp\left( - \frac{t^2/2}{\sigma^2 + Rt/3}\right).
\]
\end{thm}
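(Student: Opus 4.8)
The plan is to prove this matrix Freedman inequality by the standard two-part strategy: first reduce the rectangular operator-norm bound to a one-sided largest-eigenvalue bound for self-adjoint matrix martingales via Hermitian dilation, and then settle the self-adjoint case with the matrix Laplace-transform method built around a trace-exponential supermartingale. For the reduction I would introduce the Hermitian dilation
\[
\mathcal{H}(Y) = \begin{pmatrix} 0 & Y \\ Y^* & 0 \end{pmatrix},
\]
a self-adjoint matrix of dimension $d_1+d_2$ satisfying $\|\mathcal{H}(Y)\| = \lambda_{\max}(\mathcal{H}(Y)) = \|Y\|$. Since $\mathcal{H}$ is linear, $\{\mathcal{H}(Y_k)\}$ is a self-adjoint matrix martingale with increments $\mathcal{H}(X_k)$, and the block identity $\mathcal{H}(X_k)^2 = \operatorname{diag}(X_kX_k^*,\, X_k^*X_k)$ shows $\|\mathcal{H}(X_k)\| = \|X_k\| \le R$ while the predictable quadratic variation $\sum_{j\le k}\Ex{j-1}{\mathcal{H}(X_j)^2} = \operatorname{diag}(W_k^{(2)}, W_k^{(1)})$ has norm $\max(\|W_k^{(1)}\|,\|W_k^{(2)}\|)\le \sigma^2$. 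Thus it suffices to bound $\Pro{\exists k \colon \lambda_{\max}(S_k)\ge t}$ for a self-adjoint martingale $S_k = \sum_{j\le k}D_j$ with $\|D_j\|\le R$, $\Ex{j-1}{D_j}=0$, and predictable variation $V_k=\sum_{j\le k}\Ex{j-1}{D_j^2}$ obeying $\|V_k\|\le\sigma^2$; the factor $d_1+d_2$ then enters through the dilated dimension.

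For the self-adjoint case I would fix $\theta>0$ and the scalar function $g(\theta) = (e^{\theta R}-\theta R-1)/R^2$, and study the process $Z_k = \operatorname{tr}\exp(\theta S_k - g(\theta)V_k)$ with $Z_0 = d_1+d_2$. The heart of the argument is that $Z_k$ is a supermartingale. Conditioning on the past, I would write $\theta S_k - g(\theta)V_k = H + \theta D_k$ where $H = \theta S_{k-1} - g(\theta)V_{k-1} - g(\theta)\Ex{k-1}{D_k^2}$ is predictable, and invoke Tropp's master inequality — a consequence of Lieb's concavity theorem (that $A\mapsto\operatorname{tr}\exp(H+\log A)$ is concave) combined with Jensen — to obtain $\Ex{k-1}{\operatorname{tr}\exp(H+\theta D_k)} \le \operatorname{tr}\exp(H + \log\Ex{k-1}{e^{\theta D_k}})$. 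A scalar eigenvalue estimate, using that $x\mapsto(e^{\theta x}-\theta x-1)/x^2$ is increasing on $[-R,R]$, gives $e^{\theta D_k}\preceq I + \theta D_k + g(\theta)D_k^2$ in the semidefinite (Loewner) order $\preceq$; taking $\Ex{k-1}{\cdot}$, using $\Ex{k-1}{D_k}=0$, and the operator-monotone bound $\log(I+A)\preceq A$ then yields $\log\Ex{k-1}{e^{\theta D_k}}\preceq g(\theta)\Ex{k-1}{D_k^2}$. Substituting back and applying monotonicity of the trace exponential collapses the bound to $\operatorname{tr}\exp(\theta S_{k-1}-g(\theta)V_{k-1}) = Z_{k-1}$, establishing the supermartingale property.

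Finally I would convert this into the maximal tail bound. As $V_k$ is positive semidefinite with $\|V_k\|\le\sigma^2$, we have $V_k\preceq\sigma^2 I$, so $Z_k \ge e^{-g(\theta)\sigma^2}\operatorname{tr}\exp(\theta S_k) \ge e^{\theta\lambda_{\max}(S_k)-g(\theta)\sigma^2}$; hence on $\{\exists k\colon\lambda_{\max}(S_k)\ge t\}$ the nonnegative supermartingale exceeds $e^{\theta t - g(\theta)\sigma^2}$. Ville's maximal inequality for nonnegative supermartingales, with $\E{Z_0}=d_1+d_2$, then gives $\Pro{\exists k\colon\lambda_{\max}(S_k)\ge t} \le (d_1+d_2)\exp(g(\theta)\sigma^2 - \theta t)$. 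Using $g(\theta)\le \theta^2/(2(1-\theta R/3))$ and optimizing with $\theta = t/(\sigma^2 + Rt/3)$ produces the exponent $-\tfrac{t^2/2}{\sigma^2 + Rt/3}$, completing the proof via the dilation reduction. The main obstacle is precisely the supermartingale step: unlike the scalar Freedman inequality, one cannot simply exponentiate the sum and pull the conditional expectation inside because of non-commutativity, so the entire argument hinges on Lieb's concavity theorem to legitimately move $\Ex{k-1}{\cdot}$ through the trace exponential.
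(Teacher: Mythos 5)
This theorem is not proved in the paper at all: it is imported as Tropp's matrix Freedman inequality (the paper's cited reference for \cref{thm:matrixfree}), so there is no internal proof to compare against. Your proposal is a correct reconstruction of the standard proof from that source --- Hermitian dilation $\mathcal{H}(Y)$ to reduce the rectangular case to a self-adjoint martingale of dimension $d_1+d_2$, Lieb's concavity theorem plus Jensen to show $\operatorname{tr}\exp\bigl(\theta S_k - g(\theta)V_k\bigr)$ is a nonnegative supermartingale, Ville's maximal inequality, and the choice $\theta = t/(\sigma^2+Rt/3)$ with $g(\theta)\le \tfrac{\theta^2/2}{1-\theta R/3}$ to produce the stated exponent --- and each step checks out, including the block identity $\mathcal{H}(X_k)^2=\operatorname{diag}(X_kX_k^*,X_k^*X_k)$ that maps $(W^{(2)}_k,W^{(1)}_k)$ onto the dilated quadratic variation. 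The only cosmetic difference from Tropp's original formulation is that he places the variance bound $\|W_k\|\le\sigma^2$ inside the probability event rather than assuming it almost surely; under the paper's a.s.\ hypothesis the two coincide, so your argument fully establishes the statement as given.
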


We define the following random variables. For any $u,v \in V$ and $t=1,2,\dots,T$,
\begin{align}
&Y_t(u,v) = \sum_{i=1}^k \mathbbm{1}\{X_{t-1}^{(i)} = u\}\left( \mathbbm{1}\{X_{t}^{(i)} = v\} - P(u,v)\right) \nonumber \\
		& +  \sum_{1\le i \ne j\le k}  \mathbbm{1}\{X_{t-1}^{(i)} = u\} \mathbbm{1}\{X_{t}^{(j)} = v\} \nonumber \\
		&- \sum_{1\le i \ne j\le k}  \mathbbm{1}\{X_{t-1}^{(i)} = u\}\sum_{w \in V}  \mathbbm{1}\{X_{t-1}^{(j)} = w\} P(w,v). \label{eq:YT}
\end{align}
Notice that 
\[
\Ex{t-1}{ \mathbbm{1}\{X_{t-1}^{(i)} = u\} \mathbbm{1}\{X_{t}^{(i)} = v\}} = \mathbbm{1}\{X_{t-1}^{(i)} = u\} P(u,v)
\]
while, for $i \ne j$,
\begin{align*}
&\Ex{t-1}{ \mathbbm{1}\{X_{t-1}^{(i)} = u\} \mathbbm{1}\{X_{t}^{(j)} = v\} } \\
&\quad =  \mathbbm{1}\{X_{t-1}^{(i)} = u\} \sum_{w \in V}  \mathbbm{1}\{X_{t-1}^{(j)} = w\} P(w,v).
\end{align*}
Therefore,
\[
\Ex{t-1}{Y_t(u,v)} = 0,
\]
which implies that $Y_t$ is a martingale difference sequence.

Let $\widetilde{N}(u,v)=\sum_{t=1}^{T} \sum_{i\ne j}   \mathbbm{1}\{X_{t}^{(i)} = u, X_{t}^{(j)} = v \}$. Then, for any $u,v \in V$,
\begin{align*}
\sum_{t=1}^T Y_t(u,v) &= \left|\{ i,j \in [k], t \in [T] \colon X^{(i)}_{t-1} = u, X^{(j)}_t = v \}\right| \\
	& - \left|\{ i \in [k], 0 \le t < T \colon X^{(i)}_{t} = u\}\right| P(u,v) \\
	& - \sum_{w\in V} \sum_{t=1}^{T} \sum_{i\ne j}   \mathbbm{1}\{X_{t}^{(i)} = u, X_{t}^{(j)} = w \} P(w,v)  \\
	= &N(u,v) - N(u)P(u,v) - \sum_{w \in V} \widetilde{N}(u,w) P(w,v).
\end{align*}

Notice the previous equation can be rewritten in matrix form as
\[
\sum_{t=1}^T Y_t = N - (D_N + \widetilde{N}) P.
\]

Furthermore,
\begin{align*}
\E{\widetilde{N}P(u,v)} &= \sum_{w \in V} \widetilde{N}(u,w) P(w,v)\\
 &= T k(k-1) \pi(u) \sum_{w \in V}\pi(w)P(w,v) \\ 
 &= T k(k-1) \pi(u)\pi(v),
\end{align*}
where the last equality follows because $\pi$ is the stationary distribution of $P$.

This combined with the fact that $\E{N(u)}=Tk\pi(u)$, suggests $\widetilde{M}P(u,v) \approx (k-1)\pi(v)$, where $\widetilde{M} =D_N^{-1} \widetilde{N}$. Indeed, our aim is to prove that $M \approx P + (k-1)\Pi$ where $\Pi(u,v) = \pi(v)$.

By Lemma \ref{lem:piapprox}, with high probability, $\|\widetilde{N} - \E{\widetilde{N}}\| < \epsilon/10$. 
Therefore, 
\begin{align*}
\|\widetilde{N}P - k(k-1) T \pi\pi^T\| &< \|\E{\widetilde{N}}P - k(k-1) T \pi\pi^T\|  \\
&\quad + \|(\widetilde{N} - \E{\widetilde{N}})P\| \\
&\le \|\widetilde{N} - \E{\widetilde{N}}\| \|P\| \\
&\le \frac{Tk\pi_{\star} \epsilon}{10}.
\end{align*}
Hence,
\begin{equation}
\label{eq:NP}
\|D_N(\widetilde{M}P - (k-1) \Pi)\| <  \frac{Tk\pi_{\star} \epsilon}{8}.
\end{equation}

We will apply Theorem \ref{thm:matrixfree} to show that $\|N - (D_N + \widetilde{N}) P\|$ is small. This, together with Equation (\ref{eq:NP}) will imply the theorem.

We start by obtaining a bound on $\|Y_t\|$.

\begin{lem}
\label{lem:YTnorm}
Let $Y_t$ ($t=1,2,\dots$) be defined as in (\ref{eq:YT}). Then, for any $t$, it holds that
\[
\|Y_t\|\le \sqrt{2}k^2.
\]
\end{lem}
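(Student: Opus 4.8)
The plan is to recognise that, despite its three-line definition, $Y_t$ is in fact a rank-one matrix, after which the norm bound is immediate. Introduce the three count/sum vectors associated with the single time step $t$: let $\alpha, \beta \in \mathbb{Z}_{\ge 0}^n$ and $\gamma \in \mathbb{R}_{\ge 0}^n$ be given by
\[
\alpha_u = |\{i : X_{t-1}^{(i)} = u\}|, \quad \beta_v = |\{j : X_t^{(j)} = v\}|, \quad \gamma_v = \sum_{j=1}^k P(X_{t-1}^{(j)}, v).
\]
Each of these is a nonnegative vector summing to $k$: $\alpha$ and $\beta$ because the $k$ chains are distributed among the nodes at times $t-1$ and $t$ respectively, and $\gamma$ because each row of $P$ is a probability distribution, so $\sum_v \gamma_v = \sum_j \sum_v P(X_{t-1}^{(j)}, v) = k$.

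First I would expand each of the three summands in (\ref{eq:YT}) entrywise, writing $a_i = X_{t-1}^{(i)}$ and $b_i = X_t^{(i)}$ for brevity. The first line equals $\sum_i \mathbbm{1}\{a_i = u, b_i = v\} - \alpha_u P(u,v)$; the second line, after completing the $i \ne j$ sum to a full double sum and subtracting the diagonal, equals $\alpha_u \beta_v - \sum_i \mathbbm{1}\{a_i = u, b_i = v\}$; and the third line equals $\alpha_u \gamma_v - \alpha_u P(u,v)$, here using $\mathbbm{1}\{a_i = u\} P(a_i, v) = \mathbbm{1}\{a_i = u\} P(u,v)$. Summing the first two and subtracting the third, the diagonal terms $\sum_i \mathbbm{1}\{a_i = u, b_i = v\}$ cancel between the first and second lines, and the two copies of $\alpha_u P(u,v)$ cancel between the first and third lines. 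What remains is simply
\[
Y_t(u,v) = \alpha_u(\beta_v - \gamma_v), \qquad \text{i.e.} \qquad Y_t = \alpha (\beta - \gamma)^T.
\]

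With $Y_t$ exhibited as a rank-one outer product, its operator norm factorises as $\|Y_t\| = \|\alpha\|_2 \, \|\beta - \gamma\|_2$. Since $\alpha$ has nonnegative entries summing to $k$, $\|\alpha\|_2 \le \|\alpha\|_1 = k$. For the second factor I would exploit that $\beta$ and $\gamma$ are both nonnegative, so their inner product is nonnegative and
\[
\|\beta - \gamma\|_2^2 = \|\beta\|_2^2 + \|\gamma\|_2^2 - 2\langle \beta, \gamma \rangle \le \|\beta\|_2^2 + \|\gamma\|_2^2 \le \|\beta\|_1^2 + \|\gamma\|_1^2 = 2k^2,
\]
giving $\|\beta - \gamma\|_2 \le \sqrt{2}\,k$. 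Multiplying the two bounds yields $\|Y_t\| \le \sqrt{2}\,k^2$, as claimed. Note that naively bounding $\|\beta - \gamma\|_2 \le \|\beta\|_2 + \|\gamma\|_2 \le 2k$ by the triangle inequality would lose the $\sqrt{2}$ and only give $2k^2$, so it is essential to use the nonnegativity of $\beta$ and $\gamma$ to discard the cross term.

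The only genuine obstacle is the first step: seeing that the seemingly unwieldy definition collapses to a rank-one matrix. Once the entrywise bookkeeping is carried out carefully and the cancellations are spotted, everything else is a one-line norm estimate. I would therefore concentrate the care on verifying the three entrywise identities and the sign of each cancelling term, and treat the final norm computation as routine.
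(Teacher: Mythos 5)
Your proof is correct, and it takes a genuinely different route from the paper's. The paper bounds $\|Y_t\|$ via the interpolation inequality $\|Y_t\|\le\sqrt{\|Y_t\|_1\|Y_t\|_\infty}$, estimating the maximum absolute column sum by $k^2$ and the maximum absolute row sum by $2k^2$ directly from the three-line definition. You instead observe that the definition collapses entrywise: writing $\alpha_u=|\{i:X_{t-1}^{(i)}=u\}|$, $\beta_v=|\{j:X_t^{(j)}=v\}|$ and $\gamma_v=\sum_{j}P(X_{t-1}^{(j)},v)$, the diagonal terms of the completed double sums cancel against the single-sum line and the two $\alpha_uP(u,v)$ terms cancel against each other, leaving exactly $Y_t=\alpha(\beta-\gamma)^T$. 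I verified this identity; it is right, and it is consistent with the martingale property since $\Ex{t-1}{\beta_v}=\gamma_v$ while $\alpha$ and $\gamma$ are measurable with respect to the past. From there $\|Y_t\|=\|\alpha\|_2\|\beta-\gamma\|_2\le k\cdot\sqrt{2}\,k$ follows exactly as you say, with the nonnegativity of $\beta$ and $\gamma$ needed to discard the cross term and recover the constant $\sqrt2$ rather than $2$. Your argument is arguably more transparent than the paper's: it gives the operator norm of $Y_t$ exactly (as a product of Euclidean norms) rather than through the $\ell_1$/$\ell_\infty$ bound, and the rank-one structure it exposes would also considerably streamline the subsequent computation of $\Ex{t-1}{Y_t^TY_t}$ and $\Ex{t-1}{Y_tY_t^T}$ in the variance bound of the next lemma, which the paper instead handles by a four-way expansion. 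The one thing the paper's approach buys is that it requires no global simplification of $Y_t$ and would survive perturbations of the definition that break the exact cancellation.
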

\begin{proof}
We bound $\|Y_t\|$ by applying Holder's inequality for matrix norms: $\|Y_t\| \le \sqrt{\|Y_t\|_1\|Y_t\|_{\infty}}$, where $\|Y_t\|_1 = \max_{v \in V} \sum_{u \in V} |Y_t(u,v)|$ and $\|Y_t\|_\infty = \max_{u \in V} \sum_{v \in V} |Y_t(u,v)|$.

We start with a bound on the $\ell_1$ norm.
\begin{align*}
&\|Y_t\|_1 = \max_{v \in V} \sum_{u \in V} |Y_t(u,v)| \\
&\le  \max_{v \in V} \left\{\left|\sum_{i \colon X^{(i)}_{t} = v} (1 - P(X^{(i)}_{t-1},v))\right| \right. \\
&\quad + k \cdot \sum_{i=1}^k \left. \left|\sum_{j \colon X^{(j)}_{t} = v} (1 - P(X^{(j)}_{t-1},v))\right| \right\} \\
& \le k + (k-1)k = k^2.
\end{align*}

For the $\ell_\infty$ norm, first notice that
\begin{align*}
&\sum_{v \in V} \left|\left( \mathbbm{1}\{X_{t}^{(i)} = v\} - P(u,v)\right)\right| \\
&\quad= 1 - P(u,X_{t}^{(i)}) + \sum_{w \ne X_{t}^{(i)}} P(u,w) \\
&\quad= 2(1-P(u,X_{t}^{(i)})).
\end{align*}
Therefore,
\begin{align*}
\|Y_t\|_\infty &= \max_{u \in V} \sum_{v \in V} |Y_t(u,v)| \\
&\le  \max_{u \in V} \left\{2\sum_{i=1}^k (1-P(u,X_{t}^{(i)})) \right. \\
&\quad + \left. \sum_{i=1}^k  \left|\sum_{j\ne i} 2(1 - P(X^{(j)}_{t-1},v))\right| \right\} \\
& \le 2k + 2(k-1)k = 2k^2.
\end{align*}

Hence, 
\begin{equation*}
\|Y_t\| \le \sqrt{\|Y_t\|_1\cdot \|Y_t\|_{\infty}} \le \sqrt{2}k^2.
\end{equation*}
\end{proof}

The following lemma provides a bound on $\sigma^2$ as required by Theorem \ref{thm:matrixfree}. The proof of the lemma is not particularly enlightening but it is relatively involved, since it requires keeping track of different terms arising from the multiplication of $Y_t$ with its transpose.

\begin{lem}
\label{lem:sigmasquare}
Let $Y_t$ ($t=1,2,\dots$) be defined as in (\ref{eq:YT}). Let $W^{(1)}_t = \sum_{j=1}^t \Ex{j-1}{Y_j^TY_j }$ and $W^{(2)}_k = \sum_{j=1}^t \Ex{j-1}{Y_jY_j^T }$. Then, with high probability, it holds that
\begin{equation*}
\|W^{(1)}_T\| \le 8 k^3  T \pi^{\star}.
\end{equation*}
and
\begin{equation*}
\|W^{(2)}_T\| \le 2 k^3  T \pi^{\star}.
\end{equation*}
\end{lem}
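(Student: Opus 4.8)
The plan is to exploit a structural simplification of $Y_t$ that a brute-force term-by-term expansion obscures: \emph{each $Y_t$ is a rank-one matrix}. Introduce the $\mathcal F_{t-1}$-measurable column vector $\alpha_{t-1}=\sum_{i=1}^k e_{X_{t-1}^{(i)}}\in\mathbb R^n$, recording how many chains sit at each state at time $t-1$, and $\beta_t=\sum_{j=1}^k e_{X_t^{(j)}}$. Writing the three lines of \eqref{eq:YT} in matrix form, the first line equals $\big(\sum_i e_{X_{t-1}^{(i)}}e_{X_t^{(i)}}^T\big)-D_{t-1}P$, the second equals $\alpha_{t-1}\beta_t^T-\sum_i e_{X_{t-1}^{(i)}}e_{X_t^{(i)}}^T$, and the third equals $-\alpha_{t-1}\alpha_{t-1}^T P+D_{t-1}P$, where $D_{t-1}=\mathrm{diag}(\alpha_{t-1})$. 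The diagonal corrections $D_{t-1}P$ cancel, as do the ``collision'' terms $\sum_i e_{X_{t-1}^{(i)}}e_{X_t^{(i)}}^T$, leaving
\[
Y_t=\alpha_{t-1}\beta_t^T-\alpha_{t-1}\alpha_{t-1}^T P=\alpha_{t-1}\,\gamma_t^T,\qquad \gamma_t:=\beta_t-P^T\alpha_{t-1}.
\]
Since $\Ex{t-1}{\beta_t}=P^T\alpha_{t-1}$, one has $\Ex{t-1}{\gamma_t}=0$, which both re-derives the martingale property and makes the two required second-moment matrices transparent.

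From the rank-one form, $Y_t^TY_t=\|\alpha_{t-1}\|^2\,\gamma_t\gamma_t^T$ and $Y_tY_t^T=\|\gamma_t\|^2\,\alpha_{t-1}\alpha_{t-1}^T$, where $\|\alpha_{t-1}\|^2$ and $\|\gamma_t\|^2$ are scalars. Taking conditional expectations and using that the chains are conditionally independent given $\mathcal F_{t-1}$, the matrix $\Ex{t-1}{\gamma_t\gamma_t^T}$ is the conditional covariance of $\beta_t$, namely $\sum_j\big(\mathrm{diag}(p_t^{(j)})-p_t^{(j)}(p_t^{(j)})^T\big)$ with $p_t^{(j)}:=P^Te_{X_{t-1}^{(j)}}=P(X_{t-1}^{(j)},\cdot)^T$. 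Discarding the negative semidefinite correction and taking a trace respectively yields the clean estimates
\[
\Ex{t-1}{\gamma_t\gamma_t^T}\preceq\mathrm{diag}\!\big(P^T\alpha_{t-1}\big),\qquad \Ex{t-1}{\|\gamma_t\|^2}=\sum_j\big(1-\|p_t^{(j)}\|^2\big)\le k .
\]

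It then remains to assemble the sums and invoke \cref{lem:piapprox}. For $W^{(2)}_T$, the scalar bound gives $W^{(2)}_T\preceq k\sum_t\alpha_{t-1}\alpha_{t-1}^T$; since $\sum_t\alpha_{t-1}\alpha_{t-1}^T$ is exactly $Q$ up to one boundary term of norm $\le k^2$ (negligible because $T\pi^\star\gg1$ under the hypothesis on $T$), the third bound of \cref{lem:piapprox} gives $\|W^{(2)}_T\|\le k\|Q\|\le k\big(Tk(k-1)\|\pi\|^2+\|D_N\|+Tk^2\pi^\star\big)$, and using $\|\pi\|^2\le\pi^\star$ together with $\|D_N\|=\max_u N(u)\le\tfrac32 kT\pi^\star$ the leading order is $2Tk^3\pi^\star$. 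For $W^{(1)}_T$, bound $\|\alpha_{t-1}\|^2\le k^2$ deterministically and sum the Loewner domination to obtain $W^{(1)}_T\preceq k^2\,\mathrm{diag}\big(\sum_t P^T\alpha_{t-1}\big)$, a diagonal matrix whose operator norm is $k^2\max_v\sum_u N(u)P(u,v)$ (up to $O(k)$ boundary corrections per coordinate). The first bound of \cref{lem:piapprox} combined with stationarity $\pi P=\pi$ gives $\sum_u N(u)P(u,v)\le(1+\epsilon)kT\pi(v)\le\tfrac32 kT\pi^\star$, whence $\|W^{(1)}_T\|\le 8k^3T\pi^\star$.

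I expect the main obstacle to be bookkeeping rather than conceptual, once the identity $Y_t=\alpha_{t-1}\gamma_t^T$ is spotted; indeed that collapse is precisely what removes the need for the tedious product-of-transposes computation. The two delicate points are (i) justifying the passage $\Ex{t-1}{\gamma_t\gamma_t^T}\preceq\mathrm{diag}(P^T\alpha_{t-1})$ by a Loewner-order argument, so that the resulting bound is \emph{diagonal} and its norm is a coordinate maximum controlled by $N$; and (ii) reconciling the time-index shifts between $\sum_t\alpha_{t-1}\alpha_{t-1}^T$ and $Q$, and between $\sum_{i,t}\mathbbm{1}\{X_{t-1}^{(i)}=u\}$ and $N(u)$, absorbing the $O(k)$ and $O(k^2)$ boundary discrepancies into the constants. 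The deterministic estimate $\|\alpha_{t-1}\|^2\le k^2$ used for $W^{(1)}_T$ is lossy but sufficient for the stated constant; a sharper analysis retaining $\mathrm{tr}(Q)$ would tighten it but is not needed here.
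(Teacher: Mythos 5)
Your proposal is correct, and it takes a genuinely different route from the paper. The paper proves this lemma by brute force: it expands each entry of $Y_j^TY_j$ (and of $Y_jY_j^T$) into four double sums over chain indices, computes the conditional expectation of each of the four pieces separately ($A_j,B_j,C_j,D_j$ and their primed analogues), and bounds each contribution via the interpolation $\|\cdot\|\le\sqrt{\|\cdot\|_1\|\cdot\|_\infty}$ together with the high-probability bounds on $N$, $\widetilde N$ and $Q$ from \cref{lem:piapprox}. Your key observation --- that the three lines of \eqref{eq:YT} telescope to the rank-one factorisation $Y_t=\alpha_{t-1}\gamma_t^T$ with $\gamma_t=\beta_t-P^T\alpha_{t-1}$ --- is verifiably correct (the $(u,v)$ entry collapses to $\alpha_{t-1}(u)\bigl(\beta_t(v)-(P^T\alpha_{t-1})(v)\bigr)$ once one notes that the single-chain correction terms are exactly the missing $i=j$ diagonal terms of the cross sums), and it replaces the entire four-way expansion by the conditional covariance of a sum of $k$ independent one-hot vectors. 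This buys several things: the martingale property and the a.s.\ bound $\|Y_t\|=\|\alpha_{t-1}\|\,\|\gamma_t\|\le 2k^2$ come for free; the Loewner dominations $\Ex{t-1}{\gamma_t\gamma_t^T}\preceq\mathrm{diag}(P^T\alpha_{t-1})$ and $\Ex{t-1}{\|\gamma_t\|^2}\le k$ are exactly the PSD-order inputs that the matrix Freedman inequality wants; and the reduction of $\sum_t\alpha_{t-1}\alpha_{t-1}^T$ to $Q$ and of $\sum_t P^T\alpha_{t-1}$ to $\sum_u N(u)P(u,\cdot)=(1+\epsilon)kT\pi(\cdot)$ makes the dependence on \cref{lem:piapprox} transparent. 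The only caveats are cosmetic: your $W^{(2)}_T$ bound comes out as $2Tk^3\pi^\star$ plus lower-order terms of size $O(Tk^2\pi^\star+k^3)$ rather than exactly $2Tk^3\pi^\star$, and there are $O(k^2)$ boundary terms from the index shift between $\sum_{t=1}^T\alpha_{t-1}\alpha_{t-1}^T$ and $Q$; neither matters downstream, since only $\max\bigl(\|W^{(1)}_T\|,\|W^{(2)}_T\|\bigr)\le 8k^3T\pi^\star$ is fed into \cref{thm:matrixfree} as $\sigma^2$, and the paper's own term-by-term constants do not assemble into its stated bounds any more tightly.
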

\begin{proof}
We first obtain a bound on the operator norm of $W^{(1)}_t$ Let $u,v \in V$. Then,
\begin{align*}
&Y_j^TY_j(u,v) = \sum_{w \in V} Y_j(w,u) Y_j(w,v)\\
&= \sum_{w \in V}    \sum_{i=1}^k \mathbbm{1}\{X_{j-1}^{(i)} = w\}\left( \mathbbm{1}\{X_{j}^{(i)} = u\} - P(w,u)\right) \\
& +	\sum_{w \in V} \sum_{1\le i \ne \ell \le k}  \mathbbm{1}\{X_{j-1}^{(i)} = w\}\left( \mathbbm{1}\{X_{j}^{(\ell)} = u\} - P(X_{j-1}^{(\ell)},u)\right)  \\
&\quad  \cdot \left(\sum_{i'=1}^k \mathbbm{1}\{X_{j-1}^{(i')} = w\}\left( \mathbbm{1}\{X_{j}^{(i')} = v\} - P(w,v)\right)  + \right. \\
&\quad  \left.	 \sum_{1\le i' \ne \ell'\le k}  \mathbbm{1}\{X_{j-1}^{(i')} = w\}\left( \mathbbm{1}\{X_{j}^{(\ell')} = v\} - P(X_{j-1}^{(\ell')},v) \right) \right) \\
&= \sum_{w \in V} \sum_{i, i'}  \mathbbm{1}\{X_{j-1}^{(i)} = X_{j-1}^{(i')} = w\} \left( \mathbbm{1}\{X_{j}^{(i)} = u\} - P(w,u)\right) \\
&\qquad \qquad \cdot \left( \mathbbm{1}\{X_{j}^{(i')} = v\} - P(w,v)\right) \\
&+  \sum_{w \in V} \sum_{i=1}^k \sum_{i'\ne \ell'} \mathbbm{1}\{X_{j-1}^{(i)} = X_{j-1}^{(i')} = w\} \cdot\\
&	\left( \mathbbm{1}\{X_{j}^{(i)} = u\} - P(w,u)\right) \left( \mathbbm{1}\{X_{j}^{(\ell')} = v\} - P(X_{j-1}^{(\ell')},v) \right) \\
&+ \sum_{w \in V} \sum_{i \ne \ell} \sum_{i'}  \mathbbm{1}\{X_{j-1}^{(i)} = X_{j-1}^{(i')} = w\}  \cdot \\
&\left( \mathbbm{1}\{X_{j}^{(\ell)} = u\} - P(X_{j-1}^{(\ell)},u)\right)\left( \mathbbm{1}\{X_{j}^{(i')} = v\} - P(w,v)\right) \\
&+  \sum_{w \in V} \sum_{i\ne \ell} \sum_{i'\ne \ell'} \mathbbm{1}\{X_{j-1}^{(i)} = X_{j-1}^{(i')} = w\}   \\
&\quad \cdot\left( \mathbbm{1}\{X_{j}^{(\ell)} = u\} - P(X_{j-1}^{(\ell)},u)\right) \\
&\quad \cdot \left( \mathbbm{1}\{X_{j}^{(\ell')} = v\} - P(X_{j-1}^{(\ell')},v) \right). 
\end{align*}

We now study its expectation $ \Ex{j-1}{Y_j^TY_j }$. In particular, we split it in four summations as above. We start with the first. Recall $X_{j-1}^{(i)}$ and $X_{j-1}^{(i')}$ are independent for $i \ne i'$. Therefore,
\begin{align*}
&A_j(u,v) \coloneqq \mathbb{E}_{j-1} \Bigl[\sum_{w \in V} \sum_{i, i'}  \mathbbm{1}\{X_{j-1}^{(i)} = X_{j-1}^{(i')} = w\}  \\
& \left( \mathbbm{1}\{X_{j}^{(i)} = u\} - P(w,u)\right) \left( \mathbbm{1}\{X_{j}^{(i')} = v\} - P(w,v)\right)\Bigr] \\
&=\sum_{w \in V} \sum_{i=1}^k  \mathbb{E}_{j-1} \Bigl[\mathbbm{1}\{X_{j-1}^{(i)}  = w\} \left( \mathbbm{1}\{X_{j}^{(i)} = u\} - P(w,u)\right)    \\
&\qquad \cdot \left( \mathbbm{1}\{X_{j}^{(i)} = v\} - P(w,v)\right) \Bigr] \\
&= \sum_{i=1}^k  \sum_{w \in V} \mathbbm{1}\{X_{j-1}^{(i)}  = w\}  \Bigl( \mathbbm{1}\{u=v\}  P(w,u)(1-P(w,u)) \\
&\qquad- \mathbbm{1}\{u \ne v\} P(w,u)P(w,v) \Bigr).
\end{align*}

Analogously, we have that
\begin{align*}
&B_j(u,v) \coloneqq\mathbb{E}_{j-1} \Bigl[\sum_{w \in V} \sum_{i=1}^k \sum_{i'\ne \ell'} \mathbbm{1}\{X_{j-1}^{(i)} = X_{j-1}^{(i')} = w\}  \\
&	\left( \mathbbm{1}\{X_{j}^{(i)} = u\} - P(w,u)\right) \left( \mathbbm{1}\{X_{j}^{(\ell')} = v\} - P(X_{j-1}^{(\ell')},v) \right)\Bigr] \\
& = \sum_{w \in V} \sum_{i \ne i'}\mathbb{E}_{j-1} \Bigl[\mathbbm{1}\{X_{j-1}^{(i)}  = X_{j-1}^{(i')} = w\} \\
&\quad \cdot \left( \mathbbm{1}\{X_{j}^{(i)} = u\} - P(w,u)\right) \\
&\qquad \cdot \left( \mathbbm{1}\{X_{j}^{(i)} = v\} - P(w,v) \right)\Bigr] \\
& = \sum_{w \in V} \sum_{i \ne i'} \mathbbm{1}\{X_{j-1}^{(i)}  = X_{j-1}^{(i')} = w\} \\
&\qquad \cdot \Bigl( \mathbbm{1}\{u=v\}  P(w,u)(1-P(w,u)) \\
&\qquad - \mathbbm{1}\{u \ne v\} P(w,u)P(w,v) \Bigr).
\end{align*}

For the third sum,
\begin{align*}
&C_j(u,v) \coloneqq\mathbb{E}_{j-1} \Bigl[\sum_{w \in V} \sum_{i \ne \ell} \sum_{i'}  \mathbbm{1}\{X_{j-1}^{(i)} = X_{j-1}^{(i')} = w\} \\
&\qquad \cdot \left( \mathbbm{1}\{X_{j}^{(\ell)} = u\} - P(X_{j-1}^{(\ell)},u)\right) \\
&\qquad  \cdot \left( \mathbbm{1}\{X_{j}^{(i')} = v\} - P(w,v)\right) \Bigr] \\
& = \mathbb{E}_{j-1} \Bigl[\sum_{w \in V} \sum_{i \ne \ell}  \mathbbm{1}\{X_{j-1}^{(i)} = X_{j-1}^{(\ell)} = w\} \\
& \qquad \cdot \left( \mathbbm{1}\{X_{j}^{(\ell)} = u\} - P(w,u)\right) \\
&\qquad \cdot \left( \mathbbm{1}\{X_{j}^{(\ell)} = v\} - P(w,v)\right) \Bigr] \\
& =\sum_{w \in V} \sum_{i \ne \ell} \mathbbm{1}\{X_{j-1}^{(i)}  = X_{j-1}^{(\ell)} = w\} \\
&\qquad \cdot \Bigl( \mathbbm{1}\{u=v\}  P(w,u)(1-P(w,u)) \\
&\qquad \quad  - \mathbbm{1}\{u \ne v\}  P(w,u)P(w,v) \Bigr).
\end{align*}

Finally, for the fourth sum,
\begin{align*}
&D_j(u,v) \coloneqq \mathbb{E}_{j-1} \Bigl[\sum_{w \in V} \sum_{i\ne \ell} \sum_{i'\ne \ell'} \mathbbm{1}\{X_{j-1}^{(i)} = X_{j-1}^{(i')} = w\} \\
&\qquad \cdot	\left( \mathbbm{1}\{X_{j}^{(\ell)} = u\} - P(X_{j-1}^{(\ell)},u)\right) \\
&\qquad \cdot \left( \mathbbm{1}\{X_{j}^{(\ell')} = v\} - P(X_{j-1}^{(\ell')},v) \right)\Bigr] \\
& = \mathbb{E}_{j-1} \Bigl[\sum_{w \in V} \sum_{i,i'\ne \ell} \mathbbm{1}\{X_{j-1}^{(i)} = X_{j-1}^{(i')} = w\} \\
&\qquad \cdot	\left( \mathbbm{1}\{X_{j}^{(\ell)} = u\} - P(X_{j-1}^{(\ell)},u)\right) \\
&\qquad \cdot \left( \mathbbm{1}\{X_{j}^{(\ell)} = v\} - P(X_{j-1}^{(\ell)},v)\right)\Bigr] \\
& = \sum_{w,w' \in V} \sum_{i, i' \ne \ell} \mathbbm{1}\{X_{j-1}^{(i)}  = X_{j-1}^{(i')} = w, X_{j-1}^{(\ell)} = w'\} \\
&\qquad \cdot	\Bigl(\mathbbm{1}\{u=v\} P(w',u)(1-P(w',u))  \\
&\qquad \quad   - \mathbbm{1}\{u \ne v\}P(w',u)P(w',v)\Bigr)\\
& = \sum_{w \in V} \sum_{i, i' \ne \ell} \mathbbm{1}\{X_{j-1}^{(i)}  = X_{j-1}^{(i')}, X_{j-1}^{(\ell)} = w\} \\
&\qquad \cdot	 \Bigl( \mathbbm{1}\{u=v\}  P(w,u)(1-P(w,u)) \\
& \qquad \quad   - \mathbbm{1}\{u \ne v\}  P(w,u)P(w,v)\Bigr).
\end{align*}

Notice that we have $W^{(1)}_T = \sum_{t=1}^T (A_t + B_t + C_t + D_t)$. Moreover,
\begin{align*}
\sum_{t=1}^T A_t(u,v) &=  \mathbbm{1}\{u=v\}  \sum_{w \in V} N(w) P(w,u)(1-P(w,u)) \\
&\quad - \mathbbm{1}\{u \ne v\} \sum_{w \in V}N(w)  P(w,u)P(w,v).
\end{align*}
By Lemma \ref{lem:piapprox}, with high probability, we have that $N(u) \le 2 k T \pi(u)$ for any $u \in V$. Therefore,
\begin{align*}
\left\|\sum_{j=1}^T A_t\right\|_1 &\le 2 \max_{u \in V}  \sum_{w \in V} N(w) P(w,u) \\
&\le 4 k T  \max_{u \in V} \sum_{w \in V} \pi(w) P(w,u) \\
&= 4 k T \pi^\star,
\end{align*}
where the last equality follows from the fact that $\pi$ is the stationary measure. Essentially the same calculations show that
\[
\left\|\sum_{j=1}^T A_t\right\|_\infty \le 4 k T \pi^\star,
\]
which implies that
\[
\left\|\sum_{j=1}^T A_t\right\| \le \sqrt{\left\|\sum_{j=1}^T A_t\right\|_1 \left\|\sum_{j=1}^T A_t\right\|_\infty}  \le 4 k T \pi^\star
\]

To bound the contribution to the operator norm given by $B_t$, observe that by Lemma \ref{lem:piapprox}, w.h.p. and for any $u \in V$, $\widetilde{N}(u,u) \le 2 k(k-1) T \pi(u)$.
Therefore, similarly as above, we have that
\begin{align*}
\left\|\sum_{t=1}^T B_t\right\|_1 &\le 2 \max_{u \in V}  \sum_{w \in V} \widetilde{N}(w,w) P(w,u) \\
&\le 4 k(k-1) T  \max_{u \in V} \sum_{w \in V} \pi(w) P(w,u) \\
&\le 4 k(k-1) T \pi^\star 
\end{align*}
and
\[
\left\|\sum_{t=1}^T B_t\right\|_\infty \le 4 k(k-1) T \pi^\star,
\]
which implies that
\[
\left\|\sum_{t=1}^T B_t\right\| \le 4 k(k-1) T \pi^\star.
\]

Analogously, we can show that
\[
\left\|\sum_{t=1}^T C_t\right\| \le 4 k(k-1) T \pi^\star.
\]

In a similar way we bound the contribution given by $D_t$. With high probability, by Lemma \ref{lem:piapprox} it holds that,
\begin{align*}
\left\|\sum_{t=1}^T D_t\right\| &\le  2 k^2 \max_{u \in V}  \sum_{w \in V} N(w) P(w,u)\\
&\le 4 k^3 T \pi^\star.
\end{align*}

By summing the contribution of $A_t, B_t, C_t, D_t$ we prove the first part of the lemma.

We can obtain an analogous bound for $\|W^{(2)}_t\|$. We start by looking at each individual entry $Y_jY_j^T(u,v)$

\begin{align*}
&Y_jY_j^T(u,v) = \sum_{w \in V} Y_j(u,w) Y_j(v,w)\\
&= \sum_{w \in V}  \Bigl(  \sum_{i=1}^k \mathbbm{1}\{X_{j-1}^{(i)} = u\}\left( \mathbbm{1}\{X_{j}^{(i)} = w\} - P(u,w)\right) \\
 &+	 \sum_{1\le i \ne \ell \le k}  \mathbbm{1}\{X_{j-1}^{(i)} = u\}\left( \mathbbm{1}\{X_{j}^{(\ell)} = w\} - P(X_{j-1}^{(\ell)},w)\right) \Bigr) \\
& \cdot \Bigl(\sum_{i'=1}^k \mathbbm{1}\{X_{j-1}^{(i')} = v\}\left( \mathbbm{1}\{X_{j}^{(i')} = w\} - P(v,w)\right)  \\
& +	 \sum_{1\le i' \ne \ell'\le k}  \mathbbm{1}\{X_{j-1}^{(i')} = v\}\left( \mathbbm{1}\{X_{j}^{(\ell')} = w\} - P(X_{j-1}^{(\ell')},w) \right) \Bigr) \\
&= \sum_{w \in V} \sum_{i, i'}  \mathbbm{1}\{X_{j-1}^{(i)} = u, X_{j-1}^{(i')} = v\} \\
&\qquad \cdot \left( \mathbbm{1}\{X_{j}^{(i)} = w\} - P(u,w)\right) \\
&\qquad \cdot \left( \mathbbm{1}\{X_{j}^{(i')} = w\} - P(v,w)\right) \\
&\quad +  \sum_{w \in V} \sum_{i=1}^k \sum_{i'\ne \ell'} \mathbbm{1}\{X_{j-1}^{(i)}  = u, X_{j-1}^{(i')} = v\}  \\
&\qquad	\cdot \left( \mathbbm{1}\{X_{j}^{(i)} = w\} - P(u,w)\right) \\
&\qquad \cdot \left( \mathbbm{1}\{X_{j}^{(\ell')} = w\} - P(X_{j-1}^{(\ell')},w) \right) \\
&\quad+ \sum_{w \in V} \sum_{i \ne \ell} \sum_{i'}  \mathbbm{1}\{X_{j-1}^{(i)} = u, X_{j-1}^{(i')} = v\} \\
&\qquad \cdot \left( \mathbbm{1}\{X_{j}^{(\ell)} = w\} - P(X_{j-1}^{(\ell)},w)\right) \\
&\qquad \cdot \left( \mathbbm{1}\{X_{j}^{(i')} = w\} - P(v,w)\right) \\
&\quad+  \sum_{w \in V} \sum_{i\ne \ell} \sum_{i'\ne \ell'} \mathbbm{1}\{X_{j-1}^{(i)} = u, X_{j-1}^{(i')} = v\} \\
&\qquad	\left( \mathbbm{1}\{X_{j}^{(\ell)} = w\} - P(X_{j-1}^{(\ell)},w)\right) \\
&\qquad \left( \mathbbm{1}\{X_{j}^{(\ell')} = w\} - P(X_{j-1}^{(\ell')},w) \right). 
\end{align*}

We now look at its expectation $ \Ex{j-1}{Y_j^TY_j }$. As done previously, we split it in four summations. By using the independence of $X_{j}^{(i)}$ from $X_{j}^{(i')}$ for $i \ne i'$, we have that
\begin{align*}
A_j'(u,v) &\coloneqq \mathbb{E}_{j-1} \Bigl[ \sum_{w \in V} \sum_{i, i'}  \mathbbm{1}\{X_{j-1}^{(i)} = u, X_{j-1}^{(i')} = v\} \\&\quad \left( \mathbbm{1}\{X_{j}^{(i)} = w\} - P(u,w)\right)\left( \mathbbm{1}\{X_{j}^{(i')} = w\} - P(v,w)\right)\Bigr] \\
&=  \mathbbm{1}\{u=v, X_{j-1}^{(i)} = u\}  \sum_{w \in V}\Bigl(P(u,w)(1-P(u,w))^2  \\
& \qquad+ P(u,w)^2(1-P(u,w))  \Bigr) \\
&= \mathbbm{1}\{u=v, X_{j-1}^{(i)} = u\}  \sum_{w \in V}P(u,w)(1-P(u,w)).
 \end{align*}
 Therefore,
 \begin{align*}
 \sum_{t=1}^T A_t'(u,v) &= \sum_j \mathbbm{1}\{u=v, X_{j-1}^{(i)} = u\} \\
 &\qquad \cdot  \sum_{w \in V} \mathbbm{1}\{X_{j-1}^{(i)} = u\}P(u,w)(1-P(u,w))\\
  &\le \mathbbm{1}\{u=v\} N(u)  \sum_{w \in V} P(u,w) \\
  &= \mathbbm{1}\{u=v\} N(u).
  \end{align*}
Hence,
\[
\left\| \sum_t A_t' \right\| \le \max_u N(u) \le 2 k T \pi^\star.
\]  
  
 Similarly,
 \begin{align*}
&B_j'(u,v) \coloneqq\mathbb{E}_{j-1} \Bigl[  \sum_{w \in V} \sum_{i=1}^k 
\sum_{i'\ne \ell'} \mathbbm{1}\{X_{j-1}^{(i)}  = u, X_{j-1}^{(i')} = v\} \\
&	\left( \mathbbm{1}\{X_{j}^{(i)} = w\} - P(u,w)\right) \left( \mathbbm{1}\{X_{j}^{(\ell')} = w\} - P(X_{j-1}^{(\ell')},w) \right)\Bigr] \\
&= \sum_{w \in V} \sum_{i \ne i'} \mathbbm{1}\{X_{j-1}^{(i)}  = u, X_{j-1}^{(i')} = v\} \\
&\qquad	\left( \mathbbm{1}\{X_{j}^{(i)} = w\} - P(u,w)\right)^2 .
 \end{align*}
  Therefore,
\[
\left\| \sum_t B_t' \right\| \le \left\| \widetilde{N}\right\| \le 2 k(k-1) T \pi^\star.
\]

 Similarly,
 \begin{align*}
&C_j'(u,v) \coloneqq\mathbb{E}_{j-1} \Bigl[  \sum_{w \in V} \sum_{i \ne \ell} \sum_{i'}  \mathbbm{1}\{X_{j-1}^{(i)} = u, X_{j-1}^{(i')} = v\} \\
&\left( \mathbbm{1}\{X_{j}^{(\ell)} = w\} - P(X_{j-1}^{(\ell)},w)\right)\left( \mathbbm{1}\{X_{j}^{(i')} = w\} - P(v,w)\right)\Bigr]  \\
&= \sum_{w \in V} \sum_{i \ne i'} \mathbbm{1}\{X_{j-1}^{(i)} = u, X_{j-1}^{(i')} = v\} \\
&\qquad  \left( \mathbbm{1}\{X_{j}^{(i')} = w\} - P(v,w)\right)^2,
\end{align*}
and 
\[
\left\| \sum_t C_t' \right\| \le 2 k(k-1) T \pi^\star .
\]  

Finally,
 \begin{align*}
&D_j'(u,v) \coloneqq \sum_{w \in V}  \sum_{i,i'\ne \ell} \mathbbm{1}\{X_{j-1}^{(i)} = u, X_{j-1}^{(i')} = v\} \cdot \\
&\qquad		\mathbb{E}_{j-1} \Bigl[ \left( \mathbbm{1}\{X_{j}^{(\ell)} = w\} - P(X_{j-1}^{(\ell)},w)\right)^2	\Bigr] \\
&=  \sum_{w \in V}  \sum_{i,i'\ne \ell} \mathbbm{1}\{X_{j-1}^{(i)} = u, X_{j-1}^{(i')} = v\} \\
&\qquad	\Bigl(P(X_{j-1}^{(\ell)},w)(1-P(X_{j-1}^{(\ell)},w))^2 \\
&\qquad \quad+ (1-P(X_{j-1}^{(\ell)},w))P(X_{j-1}^{(\ell)},w)^2 \Bigr) \\
&=  \sum_{i,i'\ne \ell} \mathbbm{1}\{X_{j-1}^{(i)} = u, X_{j-1}^{(i')} = v\} \\
&\qquad \cdot	\sum_{w \in V}  P(X_{j-1}^{(\ell)},w)(1-P(X_{j-1}^{(\ell)},w)) \\
&\le  \sum_{i,i'\ne \ell} \mathbbm{1}\{X_{j-1}^{(i)} = u, X_{j-1}^{(i')} = v\} 
 \end{align*}
 
  Therefore, by Lemma \ref{lem:piapprox},
  \begin{align*}
 \left\|\sum_{t=1}^T D_t' \right\| \le \left\|Q \right\| \le 3T k^2 \pi^\star.
  \end{align*}

By summing together these four contributions, we prove the second part of the lemma.
\end{proof}

We can now apply Theorem \ref{thm:matrixfree} with $R= \sqrt{2} k^2$ and $\sigma^2 = 8k^3T\pi^\star$. We obtain that
\[
\Pro{\exists k \ge 0 \colon \|Y_k\| \ge r} \le 2n \exp\left( - \frac{r^2/2}{8k^3T\pi^\star + \sqrt{2} k^2 r/3}\right).
\]

In particular, let $r = (1/\sqrt{2}) \cdot Tk\pi_{\star} \epsilon $ and choose $T \ge 5 k^3 \epsilon^{-2}\pi_\star^{-1} \log(n) (4 + \|\pi\|_2^2 \cdot \frac{\pi^\star}{\pi_\star})$, this failure probability will be $O(n^{-5})$.

We have proved that $\|N - (D_N + \widetilde{N}) P\| = \|D_N(M -P -\widetilde{M}P))\| \le (1/\sqrt{2}) \cdot Tk\pi_{\star} \epsilon$ with high probability. By  (\ref{eq:NP}), we have that $\|D_N(M -P - (k-1)\Pi)\| \le (5/6) \cdot Tk\pi_{\star} \epsilon$ with high probability. Since $\min_u D_N(u,u) \ge (9/10) \cdot Tk\pi_{\star}$ (again, with high probability), we have that $\|M - (P + (k-1)\Pi)\| \le \epsilon$. This ends the proof of \cref{thm:main}. 

%


\bibliographystyle{IEEEtran} 
\bibliography{ref} 

\end{document}